\documentclass[11pt,letterpaper]{article}
\usepackage{times,amssymb,amsthm,amsmath,multirow,amsfonts,enumerate,bbm,setspace,pdflscape,lscape,mathtools,comment,upgreek,subcaption}	
\usepackage{booktabs,siunitx,makecell,diagbox,threeparttable,float}
\usepackage[T1]{fontenc}
\usepackage{libertinus,mathrsfs}

\usepackage[margin=1.0in]{geometry}
\usepackage{titlesec}
\usepackage[normalem]{ulem}
\usepackage[round,authoryear]{natbib}
\usepackage[colorlinks=true,linkcolor=red,urlcolor=blue,citecolor=blue,anchorcolor=blue,
pdftex,breaklinks,pdfencoding=auto,psdextra,
bookmarksopenlevel=1,bookmarksopen=true]{hyperref}

\usepackage{booktabs}
\usepackage{graphicx,epstopdf,color,xcolor}
\graphicspath{{graphics/}}
\usepackage{subcaption} 
\makeatletter
\DeclareRobustCommand\citepos
{\begingroup\def\NAT@nmfmt##1{{\NAT@up##1's}}%
	\NAT@swafalse\let\NAT@ctype\z@\NAT@partrue
	\@ifstar{\NAT@fulltrue\NAT@citetp}{\NAT@fullfalse\NAT@citetp}}
\makeatother

\makeatletter
\renewcommand*{\eqref}[1]{\hyperref[{#1}]{\textup{\tagform@{\ref*{#1}}}}}
\makeatother

\definecolor{pink}{RGB}{219, 48, 122} 
\usepackage[inline,shortlabels]{enumitem}

\expandafter
\def \expandafter \normalsize \expandafter{\normalsize \setlength \abovedisplayskip{5pt plus 1pt minus 3pt}}
\expandafter
\def \expandafter \normalsize \expandafter{\normalsize \setlength \abovedisplayshortskip{0pt plus 2pt}}
\expandafter
\def \expandafter \normalsize \expandafter{\normalsize \setlength \belowdisplayskip{5pt plus 1pt minus 3pt}}
\expandafter
\def \expandafter \normalsize \expandafter{\normalsize \setlength \belowdisplayshortskip{5pt plus 1pt minus 3pt}}
\def\Var{\mathop{\hbox{\rm Var}}}

\numberwithin{equation}{section}

\makeatletter

\newcommand{\magni}{\upkappa}

\newcommand{\kk}{\mathtt{k}}
\newcommand{\rr}{\mathtt{r}}

\newtheorem{theorem}{Theorem}[section]
\newtheorem{lemma}{Lemma}[section]
\newtheorem{corollary}{Corollary}[section]
\newtheorem{assumption}{Assumption}

\newtheorem{definition}{Definition}
\newtheorem{proposition}{Proposition}[section]
\newtheorem{remark}{Remark}[section]

\newtheorem{assumpA}{Assumption}

\newtheorem{assumpsec}{Assumption}[section]

\newtheorem*{assumWBase}{Assumption W}

\makeatletter
\newenvironment{assumptionW}{%
  \begin{assumWBase}%
  \def\@currentlabel{W}
}{%
  \end{assumWBase}%
}
\makeatother

\newtheorem*{assumption*}{Assumption}
\newtheorem*{proof*}{Proof of}

\renewcommand{\hat}{\widehat}
\renewcommand{\tilde}{\widetilde}
\newcommand{\KK}{\mathrm{K}}

\makeatother

\mathtoolsset{mathic=true}
\DeclareMathOperator{\ran}{ran}
\DeclareMathOperator{\op}{op}

\numberwithin{equation}{section}

\newcommand{\fll}[1]{\lfloor #1 \rfloor}
\newcommand{\SSS}{\widetilde{S}}
\renewcommand{\SS}{\mathbb{S}}
\newcommand{\SSSS}{\widetilde{\SS}}
\newcommand{\Cov}{\text{Cov}}
\newcommand{\RRR}{\widetilde{R}}

\newcommand{\raw}{\circ}

\newcommand{\CC}{\mathsf{C}}
\newcommand{\cont}{\varpi}

\newcommand{\ddd}{\psi} 

\begin{document}
	\date{}
	\title{\vspace{-1.5em}\Large
	Identification-Robust Testing in Endogenous Functional Linear Regression with Weak or Irrelevant Auxiliary Variables}
	\author{	\large	Won-Ki Seo\thanks{The author thanks Kyungsik Nam for kindly granting access to the replication materials from \cite{namseo2025} and for his work in data compilation and preprocessing.}
	}
	
	\maketitle \vspace{-2.1em}
\begin{abstract}
We develop dimension-reduction-free tests for the slope function in functional linear regression when the functional regressor may be endogenous or measured with error. The tests are based on a functional moment condition induced by an auxiliary functional variable and do not require estimation of the slope function, providing a functional analogue of Anderson--Rubin-type moment testing. They remain asymptotically valid under weak or even failed relevance of the auxiliary variable: such relevance affects power, not size. The trade-off is that power is confined to alternatives detectable through the moment operator, a subspace we characterize explicitly. We establish the asymptotic null distribution, consistency against detectable alternatives, and local power under drifting alternatives. We also derive the locally optimal test within a class of weighted test statistics. Feasible critical values for implementation of the tests are obtained from data. Simulations show reliable size control and competitive power, including under weak relevance. We illustrate the method using a functional regression analysis of residential electricity demand and temperature distributions in South Korea.

\medskip	 \noindent \textbf{MSC 2020}: 62R10; 62M10; 62J05 ;62F03. 

\medskip	 \noindent \textbf{Keywords}: endogenous functional linear regression;  weak relevance; dimension-reduction-free inference; functional time series; identification-robust testing.
\end{abstract}


\section{Introduction}
The functional linear model (FLM) is a central tool for relating response variables to complex covariates such as curves and distributional objects. Important contributions to the study of FLMs include 
\cite{Yao2005}, \cite{Mas2007}, \cite{Hall2007}, and \cite{imaizumi2018}; \cite{Ramsay2005} and \cite{HK2012} provide comprehensive reviews. Much of this literature operates under the exogeneity assumption that the functional regressor is uncorrelated with the regression error. In practice, however, this assumption is frequently violated. Endogeneity in FLMs can arise from omitted variables, simultaneity between the response and the regressor, and the regressor being observed with error. The last issue is particularly common in functional data analysis, since functional regressors are typically reconstructed by smoothing discrete and noisy measurements or estimated from raw data; see \cite{Chen_et_al_2020} and \cite{seong2021functional} for representative discussions. Such endogeneity makes standard exogenous inference procedures unreliable and has motivated a growing literature on inference for the endogenous functional linear model, including \cite{Florence2015}, \cite{Benatia2017}, \cite{Chen_et_al_2020}, \cite{babii2022high}, \cite{Petrovich10062024}, and \cite{seong2021functional}.

We consider the FLM with a scalar response $y_t$ and a functional regressor $X_t$ taking values in a Hilbert space $\mathcal H$,
\begin{equation}\label{eqflm0}
y_t = \Theta X_t + u_t,
\end{equation}
where $u_t$ satisfies $\mathbb{E}[u_t] = 0$ and $\Theta:\mathcal H \to \mathbb{R}$ is a continuous linear map. Although \eqref{eqflm0} is stated in a simplified form to highlight the core components, the framework readily accommodates more general specifications, including those with additional scalar control variables, as detailed in the subsequent sections.

Estimation and inference for the slope parameter $\Theta$, the linear map that links the functional regressor to the response, are central issues in the FLM literature. When $X_t$ is endogenous, a standard strategy is to introduce an auxiliary functional variable $Z_t$ that is correlated with the functional regressor but uncorrelated with the regression error. Such a variable is often referred to as an instrumental variable (IV) in the literature (see, e.g., \citealp{Florence2015}). The use of $Z_t$ yields a functional moment equation that supports inference without requiring exogeneity of $X_t$. As we show below, our framework subsumes the exogenous setting as a special case: when $X_t$ is exogenous, one may set $Z_t = X_t$, and the procedure reduces to a test for the classical FLM. The framework therefore provides a unified inferential approach that applies whether or not the regressor is exogenous.

Despite the empirical relevance of endogeneity, endogeneity-robust hypothesis testing remains underdeveloped relative to the extensive literature on exogenous FLMs (see, e.g., \citealp{cardot2003testing, Hilgert2013, su2017hypothesis, dette2020testing,  lin2021unified, yeon2023bootstrap, yeon2023bootstrap2}). To our knowledge, only a few studies, including \cite{babii2022high} and \cite{seoseong2025}, address the testing problem for model \eqref{eqflm0} in the presence of endogeneity. \cite{babii2022high} focuses on inference for an upper bound of the parameter of interest, while \cite{seoseong2025} develops inference on the slope parameter itself. As we further detail in Section~\ref{sec_flm}, existing approaches to this problem often rely on technical conditions that are stringent and difficult to verify, such as spectral-gap and injectivity conditions on certain covariance operators. 
Such restrictions are unattractive for practitioners who, for example, simply wish to test the basic significance of the functional linear model (that is, $H_0: \Theta = 0$) as a preliminary model check, without committing to unverifiable assumptions.

A second concern in inference for the endogenous FLM is that, in the infinite-dimensional setting, the auxiliary variable $Z_t$ may not be sufficiently correlated with the regressor $X_t$. The literature commonly imposes regularity conditions on the cross-covariance operator $C_{XZ}$ of $X_t$ and $Z_t$, such as injectivity or specific spectral properties. The injectivity condition, which is typically required for consistent estimation, demands that $C_{XZ}v \neq 0$ for every nonzero $v$ of the entire space $\mathcal H$ or an infinite-dimensional subspace. This is a restrictive requirement, since the auxiliary variable must be chosen so that $C_{XZ}v \neq 0$ for infinitely many directions $v$. Moreover, the condition cannot be verified from finite-sample data.

Both the requirement of sufficient correlation between $X_t$ and $Z_t$ and the technical conditions discussed above are typically essential for consistency of functional estimators and for the asymptotic validity of estimator-based inference. Since most of these conditions cannot be tested from the data, they impose substantial costs in empirical work. We address these limitations by developing a testing framework whose asymptotic validity does not rely on consistent estimation of the slope function or on the relevance of the auxiliary variable, a property we refer to as \emph{identification robustness}.

Our contribution is threefold. First, we introduce a class of tests for the slope function constructed directly from a functional moment process. The tests do not require a consistent estimator of the slope function and therefore avoid the regularization and dimension-reduction choices inherent in many estimator-based approaches. 
Second, we establish identification-robust validity, providing a functional counterpart of the Anderson--Rubin approach in finite-dimensional IV regression. The asymptotic null distribution does not require the moment operator $C_{XZ}$ linking $X_t$ and $Z_t$ to be injective or even strongly informative, so that weak or failed relevance affects power rather than asymptotic size. The tests are consistent against any fixed alternative whose deviation lies outside $\ker C_{XZ}$, and we characterize this detectable subspace explicitly. This is a deliberate trade-off: by dispensing with the injectivity and spectral conditions that estimator-based functional IV methods require for consistency, we obtain valid size under substantially weaker conditions, at the cost of power against deviations confined to $\ker C_{XZ}$. Third, we develop a local power theory in which, within a broad class of weighted statistics, the effect of the weight function on local power is summarized by a single normalized drift parameter, which leads to an explicit optimality result.

The remainder of the paper is organized as follows. Section~\ref{sec_prelim} introduces the notation and presents the functional linear model with potential endogeneity. Section~\ref{sec_test} develops the proposed tests and establishes their asymptotic properties, including local power. Section~\ref{sec_feasible} addresses the computation of feasible critical values. Section~\ref{sec_ext} presents extensions that broaden the scope of the proposed method. Section~\ref{sec_simulation} reports simulation evidence in support of the theory, and Section~\ref{sec_empirical} applies the proposed tests to an empirical example concerning residential electricity demand and temperature distributions. Section~\ref{sec_conclusion} concludes. All technical proofs are deferred to the Supplement.

	\section{Functional linear model with potential endogeneity}\label{sec_prelim}
Throughout the paper, we work with random elements taking values in the Hilbert space $\mathcal{H} = L^2[0,1]$ of square-integrable functions on $[0,1]$. The choice of $[0,1]$ is for notational convenience only, and the results extend to any compact interval $[a,b]$. We write $\langle v_1, v_2 \rangle$ and $\|v_1\| = \sqrt{\langle v_1, v_1 \rangle}$ for the inner product and norm in $\mathcal{H}$. For notational simplicity, we use the same symbols for the inner product and norm in $\mathbb{R}$, so that $\langle v_1, v_2 \rangle = v_1 v_2$ and $\|v_1\| = |v_1|$ for $v_1, v_2 \in \mathbb{R}$, when no confusion arises.

For any continuous linear map $A$ between $\mathcal{H}$ and $\mathbb{R}$ (in either direction) or from $\mathcal{H}$ to $\mathcal{H}$, $\|A\|_{\op} = \sup_{\|v\|\leq 1}\|Av\|$ denotes the operator norm of $A$. For any $v_1, v_2$ each taking values in $\mathcal{H}$ or $\mathbb{R}$, the tensor product $v_1 \otimes v_2$ is the map defined by $(v_1 \otimes v_2)(\cdot) = \langle v_1, \cdot \rangle v_2$. We write $\ran A$ and $\ker A$ for the range and kernel of $A$, and $\mathcal{M}^\perp$ for the orthogonal complement of any set $\mathcal{M} \subset \mathcal{H}$. For random elements $W$ and $V$ taking values in $\mathcal{H}$ or $\mathbb{R}$ with $\mathbb{E}[\|W\|^2] < \infty$ and $\mathbb{E}[\|V\|^2] < \infty$, $C_{WV} = \Cov(W,V) = \mathbb{E}[(W-\mathbb{E}[W])\otimes (V-\mathbb{E}[V])]$ denotes the (cross-)covariance of $W$ and $V$, which is a well-defined bounded linear map.

\subsection{Model formulation}
We assume throughout the main development that $y_t$ and $X_t$ are mean-zero, $\mathbb{E}[y_t]=0$ and $\mathbb{E}[X_t]=0$, and that \eqref{eqflm0} holds. This is made for expositional simplicity; extensions to the case of nonzero means, as well as the case with scalar covariates, are given in Sections~\ref{sec_model_intercept} 
and~\ref{sec_model_covariates}, respectively. By the Riesz representation theorem, any continuous linear map $A:\mathcal{H}\to\mathbb{R}$ admits the representation $A(\cdot) = \langle a, \cdot\rangle$ for a unique $a \in \mathcal{H}$. Hence \eqref{eqflm0} can be written equivalently as
\begin{equation}\label{eqflm}
y_t = \langle X_t, \theta \rangle + u_t, \quad \mathbb{E}[u_t] = 0,
\end{equation}
for $\theta \in \mathcal{H}$. Thus, inference on $\Theta:\mathcal{H}\to\mathbb{R}$ is  equivalent to that on $\theta\in \mathcal H$, and our proposed test makes this equivalence more explicit (Section \ref{sec_test} below and Section~\ref{sec_supp_duality} of the Supplement).

	\subsection{Testing hypotheses under potential endogeneity}\label{sec_flm}

We consider testing
\begin{equation}\label{eqhypo}
    H_0: \theta=\theta_0 \quad \text{against} \quad H_1: \theta = \theta_0 + \psi, 
\end{equation}
where $\psi \in \mathcal{H}\setminus\{0\}$ is either a fixed element or a sequence shrinking to zero in norm as $T\to\infty$, depending on the context; we will specify this further as needed. The hypotheses for $\theta$ in \eqref{eqhypo} translate immediately into the corresponding hypotheses for $\Theta$. These hypotheses are of central empirical interest. For example, examining the nullity of the slope coefficient ($H_0:\theta=0$) is often a first step in verifying the functional association between $y_t$ and $X_t$ (see, e.g., \citealp{cardot2004testing, yi2022f}). Testing \eqref{eqhypo} is more challenging in the FLM framework than in the scalar or finite-dimensional regressor case, mainly because $X_t$ takes values in a potentially infinite-dimensional space.

It is commonly assumed in the literature that the regressor $X_t$ is exogenous, meaning that \begin{equation}\label{eqcov}
    C_{Xu} := \Cov(X_t, u_t) = \mathbb{E}[(X_t - \mathbb{E}[X_t]) \otimes (u_t - \mathbb{E}[u_t])] = 0.
\end{equation}
Under the present mean-zero assumption, $C_{Xu}$ simplifies to $\mathbb{E}[X_t \otimes u_t]$; we retain the centered form in \eqref{eqcov} so that the same definition applies to the intercept model considered later. As discussed in the introduction, the exogeneity assumption is frequently violated in practice (see, e.g., \citealp{Florence2015, Benatia2017, Chen_et_al_2020, Petrovich10062024, seong2021functional}). As noted by \cite{seong2021functional}, in many applications, the functional regressor of interest is often incompletely observed and is typically constructed by smoothing a finite number of discrete samples, the number of which may not be sufficiently large. Consequently, the regressor used in practice almost always deviates from the intended true functional observation, which can give rise 
to endogeneity.  Beyond measurement errors, certain regressors considered by practitioners may be inherently endogenous; we refer the reader to Section 2.1 of \cite{seong2021functional} for specific examples. A standard approach to address this is to introduce an auxiliary functional variable $Z_t$ satisfying
\begin{align} \label{eqiv}
    C_{XZ} := \Cov(X_t, Z_t) \neq 0 \quad \text{and} \quad C_{Zu} := \Cov(Z_t, u_t) = 0.
\end{align}
The condition $C_{XZ}\neq 0$ is a minimal relevance condition on $Z_t$; more stringent conditions, such as $C_{XZ}v \neq 0$ for all nonzero $v$ in a sufficiently large subspace, are typically imposed in the literature for identification and consistent estimation. While such stronger relevance is also desirable for power of our tests, it is not required for the asymptotic validity of our tests under the null; one of the main features of our procedure is 
that weak or even failed relevance affects power rather than size. The framework also includes the classical exogenous case as a special case: when $X_t$ is exogenous, one may take $Z_t = X_t$, in which case \eqref{eqiv} reduces to the standard requirements $C_{XX} \neq 0$ and $C_{Xu} = 0$. The proposed procedure therefore applies whether or not the regressor is exogenous.

	From \eqref{eqflm}, we find that the following relationship holds: letting $C_{yZ} = \Cov(y_t,Z_t)$,
	\begin{equation}\label{eqpopulation}
		C_{yZ} = C_{XZ} \theta,  
	\end{equation} 
which is the population moment equation upon which existing estimators based on the auxiliary variable $Z_t$, often referred to as functional IV estimators in the literature on IV methods, are implicitly or explicitly based. Equation~\eqref{eqpopulation} does not, in general, identify $\theta$ uniquely: if $\ker C_{XZ}$ is nontrivial, then $C_{XZ}(\theta + \psi) = C_{XZ}\theta$ for any $\psi \in \ker C_{XZ}$, and \eqref{eqpopulation} remains satisfied when $\theta$ is replaced by $\theta + \psi$. This ambiguity complicates the discussion of an estimator’s consistency based on \eqref{eqpopulation}. Unique identification of $\theta$ therefore requires an additional assumption on $C_{XZ}$, most commonly the injectivity condition $\ker C_{XZ} = \{0\}$. Even injectivity is generally insufficient for consistent estimation of $\theta$, and stronger spectral conditions on $C_{XZ}$ are typically imposed; see, for example, \citet[Section 2]{Florence2015}, \citet[Section 7]{Benatia2017}, and \citet[Section 2]{seong2021functional}. These conditions can be interpreted as requirements on how informative $Z_t$ is about $X_t$ in the functional setting, analogous to instrument-strength conditions in the literature. The injectivity condition itself requires a sufficiently strong correlation between $X_t$ and $Z_t$, in the sense that $C_{XZ}v=\mathbb{E}[\langle X_t, v\rangle Z_t] \neq 0$ for every nonzero $v \in \mathcal{H}$. Since such conditions on the potentially infinite-rank operator $C_{XZ}$ cannot be verified from finite samples, we instead develop inference whose validity does not depend on them.

Many functional datasets in the literature are observed over time rather than cross-sectionally, so we allow for temporal dependence in what follows.
	\begin{assumption} \label{assum1} \begin{enumerate*}[(i)]
			\item \eqref{eqflm} holds, \item $C_{Zu}=\Cov(Z_t,u_t)=0$ (where $Z_t = X_t$ if $X_t$ is exogenous), and  
		\item $X_t$, $Z_t$, and $u_t$ are $L^4$-$m$-approximable in their 
respective spaces, in the sense of Definition~\ref{def1} in Section~\ref{sec_app1_lpm} of the Supplement.
		\end{enumerate*}
	\end{assumption}  
Section~\ref{sec_app1_lpm} provides the precise definition used here and related references. 

	\section{Identification-robust tests} \label{sec_test}
Under $H_1$ where $\psi \neq 0$, the unit vector $\psi/\|\psi\|$ admits the orthogonal decomposition 
\begin{equation}\label{eqpsi0}
     \bar\psi:=\psi/\|\psi\| = \gamma_{\rr}\psi_{\rr} + \gamma_{\kk}\psi_{\kk},
\end{equation}
where $\psi_{\rr} \in [\ker C_{XZ}]^\perp$ and $\psi_{\kk} \in \ker C_{XZ}$ are unit vectors, and the scalars $\gamma_{\rr} = \langle \psi, \psi_{\rr}\rangle/\|\psi\|$ and $\gamma_{\kk} = \langle \psi, \psi_{\kk}\rangle/\|\psi\|$ satisfy $\gamma_{\rr}^2 + \gamma_{\kk}^2 = 1$. Observe that, by construction,
\begin{equation}\label{eqpsi0a}
C_{XZ}\bar\psi \neq 0 \quad \Longleftrightarrow\quad \gamma_{\rr} \neq 0. 
\end{equation}
Letting $\magni = \|\psi\| \geq 0$ be the magnitude of the deviation from $\theta_0$, $H_0$ and $H_1$ in \eqref{eqhypo} can be jointly written as
\begin{equation}\label{eqhypo0}
    H_{\magni}: \theta = \theta_0 + \magni  \bar\psi, \quad \|\bar\psi\|^2 = \gamma_{\rr}^2 + \gamma_{\kk}^2 = 1,
\end{equation}
where $\bar\psi$ is an arbitrary unit vector when $\magni = 0$; the subsequent results under $H_0$ hold for any such choice. $H_0$ (resp.\ $H_1$) in \eqref{eqhypo} corresponds to $\magni=0$ (resp.\ $\magni > 0$) in \eqref{eqhypo0}, and $H_{\magni}$ deviates more from $H_0$ as $\magni$ increases. In addition to \eqref{eqhypo0}, which formulates a fixed deviation from the null, we will also consider the following sequence of local alternative hypotheses for asymptotic properties of our tests in detail:
\begin{equation} \label{eqhypo0a}
    H_{\magni,T}: \theta = \theta_0 + \frac{\magni}{\sqrt{T}} \bar\psi, \quad \|\bar\psi\|^2 = \gamma_{\rr}^2 + \gamma_{\kk}^2 = 1. 
\end{equation}
Our asymptotic analysis based on \eqref{eqhypo0a} bears some resemblance to the ``weak IV'' asymptotics in finite-dimensional IV regression, adapted to the functional setting (see Section \ref{sec_iv_strength}). 

	
To construct our test, we define the $\mathcal{H}$-valued partial-sum 
process $\SSS = \{\SSS(r): r \in [0,1]\}$ by
\begin{equation} \label{eqstat1}
    \SSS(r) \coloneqq S_{\fll{Tr}} = \frac{1}{T}\sum_{t=1}^{\fll{Tr}} 
    Z_t\!\left(y_t - \langle X_t, \theta_0\rangle\right).
\end{equation}
Let $\mathbb D_{\mathbb{S}}[0,1]$ denote the space of c\`{a}dl\`{a}g functions on $[0,1]$ taking values in $\mathbb{S}$, where $\mathbb{S}$ is either $\mathbb{R}$ or $\mathcal{H}$. Then $\SSS$ is a random element of $\mathbb D_{\mathcal{H}}[0,1]$, and its limiting behavior underlies our asymptotic analysis. We consider norm-based test statistics of the form
\begin{equation*} 
    \|\sqrt{T}\, g(\SSS)\|^2,
\end{equation*}
where $g$ is a continuous linear map satisfying:
\begin{enumerate}[({G}1)]
    \item \label{propg1} $g$ maps $\mathbb D_{\mathbb{S}}[0,1]$ to $\mathbb{S}$; 
    that is, $g(f) \in \mathbb{R}$ if $f \in \mathbb D_{\mathbb{R}}[0,1]$, and 
    $g(f) \in \mathcal{H}$ if $f \in \mathbb D_{\mathcal{H}}[0,1]$.
    \item \label{propg2} For any $f \in \mathbb D_{\mathcal{H}}[0,1]$ and 
    $v \in \mathcal{H}$, $\langle g(f), v \rangle = g(\langle f, v \rangle)$.
\end{enumerate}

Most practical choices of $g$ satisfy these requirements. Standard examples include the point-evaluation map $g(f)=f(1)$ and the integration map $g(f)=\int_0^1 f(r)dr$ for $f \in \mathbb D_{\mathbb{S}}[0,1]$. Conditions \ref{propg1} and \ref{propg2} in fact provide an explicit duality between inference on $\theta \in \mathcal H$ through $\SSS$ and inference on $\Theta:\mathcal H\to\mathbb R$ in the operator norm, and thus gives a fundamental coherence in our testing framework; see Section~\ref{sec_supp_duality} of the Supplement.

	

		\subsection{Infeasible tests} \label{sec_infeasible_test}
	In this section, we develop statistical tests for examining the hypotheses in \eqref{eqhypo}. As shown below, computing the critical values for the proposed test statistics requires knowledge of the eigenvalues of $\Lambda_{Zu}$, the long-run covariance operator of the sequence $\{Z_t u_t\}$, defined as
\begin{equation} \label{eqlrv1}
\Lambda_{Zu} = \sum_{s=-\infty}^\infty \mathbb{E}[(Z_t u_t) \otimes (Z_{t-s} u_{t-s})].
\end{equation}
Under Assumption \ref{assum1}, $\Lambda_{Zu}$ is a trace-class operator (see, e.g., \citealp[Lemma~4.1]{hormann2010}). Because $\Lambda_{Zu}$ is non-negative, self-adjoint, and compact, it admits the spectral decomposition
\begin{equation} \label{spec_decom}
    \Lambda_{Zu} = \sum_{j=1}^\infty \lambda_j (v_j \otimes v_j),
\end{equation}
where $\{\lambda_j\}_{j \geq 1}$ are the eigenvalues in decreasing order and $\{v_j\}_{j \geq 1}$ are the corresponding orthonormal eigenvectors.
For ease of exposition, we first assume that the eigenvalues $\{\lambda_j\}_{j \geq 1}$ are known and use them to compute the critical values of our test statistics. Section~\ref{sec_feasible} shows how to compute feasible critical values from estimates of these eigenvalues.
		
We first examine the limiting behavior of $\SSS(r) = S_{\fll{Tr}}$ under $H_{\magni}$ (which includes $H_0$ when $\magni = 0$), as a preparation for constructing a consistent test.
	\begin{theorem}\label{thm1}
Suppose that Assumption~\ref{assum1} is satisfied. Let $\ddd=\theta-\theta_0$ and let $\eta_t(\ddd) =    Z_t\{u_t+\langle X_t,\ddd\rangle\}-C_{XZ}\ddd$. Define $\mathcal N_{\ddd}$ as an $\mathcal H$-valued Brownian motion with covariance  
\begin{equation*}
  \Lambda_{\ddd} =\sum_{\ell=-\infty}^{\infty}\mathbb E\left[\eta_t(\ddd)\otimes \eta_{t-\ell}(\ddd)
\right],
\end{equation*}
 i.e., $\langle \mathcal{N}_{\ddd}(r), v \rangle \sim N(0, r\langle \Lambda_{\ddd}v, v \rangle)$ for any $v \in \mathcal{H}$. Then, under $H_{\magni}$ in \eqref{eqhypo0}, $ \sup_{0\le r\le1} \|\SSS(r)- r C_{XZ}\ddd  \|   \to_p0$ and 
\begin{equation}\label{eqthm1_general}
    \sup_{0\le r\le1}
    \left\|   \sqrt T\{\SSS(r)-rC_{XZ}\ddd\}    -    \mathcal N_{\ddd}(r)    \right\|
    \to_p 0.
\end{equation}
\end{theorem}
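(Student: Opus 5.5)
Substituting \eqref{eqflm} into \eqref{eqstat1} gives $y_t-\langle X_t,\theta_0\rangle = u_t+\langle X_t,\ddd\rangle$, hence
\[
\SSS(r) = \frac{1}{T}\sum_{t=1}^{\fll{Tr}} Z_t\{u_t+\langle X_t,\ddd\rangle\} = \frac{\fll{Tr}}{T}\,C_{XZ}\ddd + \frac{1}{T}\sum_{t=1}^{\fll{Tr}}\eta_t(\ddd).
\]
Here I use $\mathbb E[Z_t u_t]=C_{Zu}=0$ (Assumption~\ref{assum1}(ii), together with mean zero) and $\mathbb E[Z_t\langle X_t,\ddd\rangle]=C_{XZ}\ddd$. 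This second identity needs the convention $C_{XZ}\ddd=\mathbb E[\langle X_t,\ddd\rangle Z_t]$, which matches the text's use $C_{XZ}v=\mathbb E[\langle X_t,v\rangle Z_t]$. The sequence $\eta_t(\ddd)$ is therefore mean zero. The deterministic discretization error is handled directly: $\sup_r|\fll{Tr}/T-r|\,\|C_{XZ}\ddd\|\le \|C_{XZ}\ddd\|/T$, and after multiplying by $\sqrt T$ this is $O(T^{-1/2})\to 0$. Everything therefore reduces to a functional invariance principle for the partial sums of $\eta_t(\ddd)$. This is valid for the fixed $\ddd$ under $H_{\magni}$.

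\textbf{Step 1: $\eta_t(\ddd)$ is $L^2$-$m$-approximable in $\mathcal H$.} Products of $L^4$-$m$-approximable sequences are $L^2$-$m$-approximable. Concretely, $\eta_t$ is a measurable function $f(\epsilon_t,\epsilon_{t-1},\dots)$ of the underlying innovations, and its $m$-dependent coupling is $\eta_t^{(m)}=Z_t^{(m)}\{u_t^{(m)}+\langle X_t^{(m)},\ddd\rangle\}-C_{XZ}\ddd$. The identity $ab-a'b'=(a-a')b+a'(b-b')$ together with Cauchy--Schwarz gives
\[
\big(\mathbb E\|\eta_t-\eta_t^{(m)}\|^2\big)^{1/2}\le \nu_4(Z_t-Z_t^{(m)})\,\nu_4(u_t+\langle X_t,\ddd\rangle)+\nu_4(Z_t)\,\big\{\nu_4(u_t-u_t^{(m)})+\|\ddd\|\,\nu_4(X_t-X_t^{(m)})\big\}.
\]
Summing over $m$ gives finiteness from the $L^4$-summability in Definition~\ref{def1}. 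Along the way we get $\mathbb E\|\eta_t\|^2<\infty$, that $\Lambda_\ddd$ is well defined and trace class (as for $\Lambda_{Zu}$, via \citealp[Lemma~4.1]{hormann2010}), and absolute summability of the autocovariances.

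\textbf{Step 2: invariance principle.} I would invoke the functional CLT / invariance principle for $L^2$-$m$-approximable Hilbert-valued sequences in $\mathbb D_{\mathcal H}[0,1]$. Results of Berkes--Horváth--Rice type give weak convergence of $T^{-1/2}\sum_{t\le \fll{Tr}}\eta_t$ to an $\mathcal H$-valued Brownian motion with covariance $\Lambda_\ddd$, and under this approximability they provide a strong approximation, i.e.\ a Brownian motion on a suitably enriched probability space. That yields \eqref{eqthm1_general} in the $\sup$ form with $\to_p$, which is how I read the statement (the Skorokhod/strong-approximation version). If only weak convergence were available, I would state that the result holds on a suitably constructed space. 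The proof of such a principle has two parts. Finite-dimensional convergence follows from the CLT for $m$-dependent approximations plus the Cramér--Wold device on projections $\langle\cdot,v\rangle$. Tightness in $\mathcal H$ comes from projecting onto the first $K$ eigenvectors of $\Lambda_\ddd$ and bounding the tail $\sup_r\|\sum_{t\le \fll{Tr}}P_K^\perp\eta_t\|$ by a maximal inequality, whose second moment is controlled by $T\sum_{j>K}\lambda_j(\Lambda_\ddd)$ up to the approximability constants.

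\textbf{Step 3: consistency.} The law of large numbers part is immediate from Step 2: $\sup_r\|T^{-1}\sum_{t\le\fll{Tr}}\eta_t\|=O_p(T^{-1/2})$. I expect the main obstacle to be the uniform-in-$r$ tightness and the strong approximation in the infinite-dimensional space, i.e.\ the maximal inequality for the tail projections under only $L^2$-$m$-approximability. My plan is to cite an existing theorem for this, applied to $\eta_t(\ddd)$ via Step 1, rather than reprove it.
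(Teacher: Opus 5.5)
Your proposal is correct and follows the paper's proof essentially step for step. It uses the same decomposition of $\SSS(r)$, establishes $L^2$-$m$-approximability of $\eta_t(\ddd)$ through the product bound (the paper's Lemma~\ref{lem1}), appeals to \citet[Theorem~1.1]{berkes2013weak}, and divides by $\sqrt T$ for the consistency part; your explicit handling of $\fll{Tr}/T-r$, and of the $T$-dependence of the approximating Brownian motion, is more careful than the paper's. One small adjustment: Definition~\ref{def1} is a polynomial-rate condition in $L^{p+\delta}$, not a summability condition. Run your product bound with $(4+\delta)$-norms, as Lemma~\ref{lem1} does, so that $\eta_t(\ddd)$ inherits an $O(m^{-\varrho})$ rate in $L^{2+\delta/2}$, since the cited invariance principle needs more than $L^2$-summability.
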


Since $\ddd=\magni\bar\psi$ and  $C_{XZ}\bar\psi =\gamma_{\rr}C_{XZ}\psi_{\rr}$ with $C_{XZ}\psi_{\rr}\neq 0$ (see \eqref{eqpsi0}--\eqref{eqpsi0a}), Theorem \ref{thm1} implies the following: 
\begin{enumerate}[(i)]
\item Under $H_0$, $ \sup_{0\le r\le 1}   \|\sqrt T\SSS(r)-\mathcal N_0(r)\|       \to_p 0$,     where $\mathcal N_0$ has covariance $\Lambda_{Zu}$;
\item Under $H_{\magni}$ with $\magni>0$ and $\gamma_{\rr}\neq0$, $
 \sup_{0\le r\le 1}        \|\SSS(r)-r\mathcal M\|        \to_p 0$, where $\mathcal M=\magni\gamma_{\rr}C_{XZ}\psi_{\rr}$;
 \item Under $H_{\magni}$ with $\magni>0$ and $\gamma_{\rr}=0$, $\eta_t(\ddd)=Z_t\{u_t+\langle X_t,\ddd\rangle\}$ and  $\sup_{0\le r\le1}
        \|\sqrt T\SSS(r)-\mathcal N_{\psi}(r)\|
        \to_p 0,$ 
    where the covariance of $\mathcal N_{\psi}$ generally differs from $\Lambda_{Zu}$.
\end{enumerate}
The above results show that, under $H_0$, the partial-sum process $\sqrt T\SSS$ is tight and converges weakly to a mean-zero Brownian motion. In contrast, under a fixed alternative $H_{\magni}$ with $C_{XZ}\psi\neq0$ (i.e., $\magni > 0$ and $\gamma_{\rr}\neq0$), the process $\SSS(r)$ has the nonzero deterministic drift $r\mathcal M$, and hence $\sqrt T\SSS(r)$ diverges in the direction of $\mathcal M$. This motivates using continuous functionals of the moment process that do not annihilate this drift. 
For example, assuming that $g(\cdot)$ is a continuous mapping from $\mathbb D_{\mathcal H}[0,1]$ to $\mathcal H$, we find that, under $H_{0}$ in \eqref{eqhypo}
		\begin{equation*}
			\|\sqrt{T}g(\SSS)\|^2 \to_d \|g(\mathcal N_0)\|^2. 
		\end{equation*}
	On the other hand, under a fixed alternative with $\magni>0$ and $C_{XZ}\psi\neq0$, if
$ g(r\mathcal M)\neq0$, $\|\sqrt T g(\SSS)\|^2\to_p\infty$. Hence, such a choice of $g$ yields a consistent test. (The weighted statistics with $g_w$ considered below satisfy this requirement automatically, since $g_w(r\mathcal M)=D_w\mathcal M$ and $D_w>0$ under Assumption~\ref{assumw}.)

A comprehensive analysis for any arbitrary choice of $g$ is impractical, so we focus on a class of weighted statistics. For a bounded weight function $w$ on $[0,1]$, we consider statistics of the form 
		\begin{equation} \label{eqteststatpre}
 \quad g_w^{\raw}(\SSS) = \int_{0}^1 \SSS(r) w(r) dr.
		\end{equation}
We also consider a discrete version of \eqref{eqteststatpre}: for any partition $0=r_0<r_1 <\ldots < r_N = 1$,
\begin{equation}\label{eqteststatpre2}
g_w^{\raw}(\SSS) = \sum_{i=1}^N \SSS(r_i) w(r_i) (r_i-r_{i-1}).
\end{equation}
We can express \eqref{eqteststatpre} and \eqref{eqteststatpre2} collectively as follows:
\begin{align} \label{eqteststatpre3}
 \quad g_w^{\raw}(\SSS) = \int_{0}^1 \SSS(r) w(r) \mu(dr),
\end{align}
where $\mu$ denotes the standard Lebesgue measure in the continuous case \eqref{eqteststatpre}, and a discrete measure given by $\mu = \sum_{j=1}^N (r_j-r_{j-1}) \delta_{r_j}$ in the discrete case \eqref{eqteststatpre2}, with $\delta_{r_j}$ being the Dirac measure at $r_j$. We impose the following two conditions on the weight function $w$:

\begin{assumptionW} \label{assumw} 
$w(\cdot)$ satisfies the following:
\begin{enumerate*}[(i)] 
\item \label{enumw1}  $w(\cdot)$ is continuous on $[0,1]$ and $\int_0^1 r |w(r)|\mu(dr)>0$;
\item \label{enumw2} $w(r) \geq 0$ for all $r \in [0,1]$. 
\end{enumerate*}
\end{assumptionW}
The asymptotic null distribution of the test statistic can be derived under Assumption \ref{assumw}\ref{enumw1} alone, without invoking Assumption \ref{assumw}\ref{enumw2}; the nonnegativity 
in  Assumption \ref{assumw}\ref{enumw2} is imposed to streamline the local-power analysis. In particular, under the local alternatives considered below, the deterministic drift of the statistic based on $g_w^{\raw}$ is proportional to
\begin{equation*}
    B_w := \int_0^1 r w(r)\mu(dr).
\end{equation*}
Under Assumptions~\ref{assumw}\ref{enumw1} and \ref{assumw}\ref{enumw2}, $B_w>0$. The nonnegativity constraint in Assumption~\ref{assumw}\ref{enumw2} is therefore mainly a normalization; one could instead impose $w(r)\leq 0$, in which case $B_w<0$ and analogous results would follow after accounting for the sign. It should be noted, however, that the raw magnitude of $B_w$ itself is not the relevant criterion for comparing different choices of $w$. Multiplying $w$ by a positive constant multiplies both the statistic and the corresponding critical value by the same constant, and hence does not alter the resulting test. Thus, local-power comparisons across different choices of $w$ should be based on a normalized drift. We therefore study the asymptotic properties of the normalized statistic
\begin{equation} \label{eqteststats}
    g_w(f) = C_w g_w^{\raw}(f),
\end{equation}
where $C_w$ is a positive normalizing constant. In the subsequent discussion, it is convenient to set
\begin{equation}\label{eqCw}
    C_w = \left(\int_0^1 
    \left(\int_s^1 w(r)\mu(dr)\right)^2 \mu(ds)\right)^{-1/2}, 
\end{equation}
which is strictly positive under Assumption~\ref{assumw}. For convenience and later use, define
\begin{equation} \label{eqdw}
    D_w     = C_w \int_0^1 r w(r)\mu(dr)    =    \frac{\int_0^1 r w(r)\mu(dr)}    {\sqrt{\int_0^1 \left(\int_s^1 w(r)\mu(dr)\right)^2 \mu(ds)}} .
\end{equation}
Under Assumption~\ref{assumw}\ref{enumw2}, $D_w>0$. As shown below (in Section \ref{sec_iv_strength}), $D_w$ characterizes the local power properties of the proposed test for the normalized test statistic \eqref{eqteststats}.

		The following is a consequence of Theorem \ref{thm1} and the properties of $g_w$:
		\begin{theorem} \label{thm2} Suppose that Assumptions \ref{assum1} and \ref{assumw} hold, and let $\mathcal G_w = g_w (\mathcal N_0)$. Then, the following hold: 
			\begin{align}\label{eqthm4}
				\|\sqrt{T}g_w(\SSS)\|^2	\begin{cases}
				 \to_d  \|\mathcal G_w\|^2  \quad &\text{if $H_0$ holds}, \\
				 \to_p \infty  \quad &\text{if $H_\magni$ holds with $\magni>0$ and $C_{XZ}\psi\neq 0$}. 
				\end{cases}
			\end{align}
	
		\end{theorem}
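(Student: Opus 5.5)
The plan is to derive both parts of Theorem~\ref{thm2} from the uniform approximations in Theorem~\ref{thm1}, combined with two facts about $g_w$: it is linear, and it is Lipschitz with respect to the sup norm on $\mathbb D_{\mathcal H}[0,1]$. Because $w$ is continuous on the compact interval $[0,1]$ (Assumption~\ref{assumw}\ref{enumw1}), it is bounded. For either choice of $\mu$ (Lebesgue or the discrete partition measure, both of total mass one) we then have, for $f_1,f_2\in\mathbb D_{\mathcal H}[0,1]$,
\begin{equation*}
\|g_w(f_1)-g_w(f_2)\| \le C_w \sup_{r}|w(r)| \sup_{0\le r\le 1}\|f_1(r)-f_2(r)\|,
\end{equation*}
using $\|\int f\,w\,d\mu\|\le \int \|f\|\,|w|\,d\mu$ for the Bochner integral or the finite sum. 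The constant $C_w$ in \eqref{eqCw} is finite and positive. Finiteness holds because $w$ is bounded. Positivity holds because $\int_0^1 (\int_s^1 w\,d\mu)^2\mu(ds)=0$ would force $\int_s^1 w\,d\mu=0$ for $\mu$-a.e.\ $s$; with $w\ge0$ this contradicts $\int_0^1 r w(r)\mu(dr)>0$. One checks this in both the continuous and the discrete case.

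\emph{Null case.} Under $H_0$ we have $\ddd=0$, so Theorem~\ref{thm1} gives $\sup_r\|\sqrt T\SSS(r)-\mathcal N_0(r)\|\to_p0$. Since $g_w$ is linear, $\sqrt T g_w(\SSS)=g_w(\sqrt T\SSS)$, and the Lipschitz bound then yields $\|\sqrt T g_w(\SSS)-g_w(\mathcal N_0)\|\to_p 0$. The map $x\mapsto\|x\|^2$ is continuous, so $\|\sqrt Tg_w(\SSS)\|^2-\|\mathcal G_w\|^2\to_p0$, and convergence in distribution to $\|\mathcal G_w\|^2$ follows. One measurability point needs attention: $g_w(\mathcal N_0)$ must be a well-defined $\mathcal H$-valued random element. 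This holds because $\mathcal N_0$ has continuous paths, so the integral is a pathwise Bochner integral of a continuous $\mathcal H$-valued function, and the sum is a finite sum.

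\emph{Alternative case.} Under $H_\magni$ with $\magni>0$ and $C_{XZ}\psi\ne0$, Theorem~\ref{thm1} gives $\sup_r\|\SSS(r)-r\mathcal M\|\to_p0$ with $\mathcal M=C_{XZ}\ddd\ne0$. By the Lipschitz bound, $g_w(\SSS)\to_p g_w(r\mathcal M)$. Linearity together with (G\ref{propg2}) gives
\begin{equation*}
g_w(r\mathcal M)=C_w\int_0^1 r w(r)\mu(dr)\,\mathcal M=D_w\mathcal M .
\end{equation*}
Here $D_w>0$ by Assumption~\ref{assumw}, so $\|g_w(\SSS)\|\to_p D_w\|\mathcal M\|>0$. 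It follows that $\|\sqrt Tg_w(\SSS)\|^2=T\|g_w(\SSS)\|^2\to_p\infty$: for any $K$, the event $\{T\|g_w(\SSS)\|^2\le K\}$ is eventually contained in $\{\|g_w(\SSS)\|<D_w\|\mathcal M\|/2\}$, whose probability tends to zero.

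The main obstacle is the Lipschitz/continuity step for $g_w$ on $\mathbb D_{\mathcal H}[0,1]$, together with checking that $C_w\in(0,\infty)$ under both forms of $\mu$. Everything else is the continuous mapping theorem applied to Theorem~\ref{thm1}. I would handle this step by working directly with the sup norm, which is enough because Theorem~\ref{thm1} delivers approximations uniform in $r$. This lets the argument avoid Skorokhod-topology issues entirely.
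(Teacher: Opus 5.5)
Your proposal is correct and follows essentially the same route as the paper: both bound $\|g_w(f)\|$ by a constant times $\sup_r\|f(r)\|$, then apply the continuous mapping theorem to the uniform approximations of Theorem~\ref{thm1}, and both handle the alternative via $g_w(r\mathcal M)=D_w\mathcal M\neq 0$ with $D_w>0$. (One minor slip: boundedness of $w$ makes the integral in \eqref{eqCw} finite, which is what gives $C_w>0$, while your nondegeneracy argument gives $C_w<\infty$; the two labels are swapped, but both facts are correctly established.)
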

		Assuming $\{\lambda_j\}_{j\geq 1}$ in \eqref{spec_decom} are known, the following gives us a consistent test: 
		\begin{corollary}[Infeasible tests] \label{cortest1} Let the assumptions in Theorem~\ref{thm2} hold and define  $q_\alpha$ as the $(1-\alpha)$-quantile of $\sum_{j=1}^\infty \lambda_j \nu_{j}^2$ for some $\alpha \in (0,1)$, where $\{\nu_{j}\}_{j \geq 1}$ is an i.i.d.\ sequence of standard normal random variables.	Then, the following hold:
				\begin{align} \label{eqinfeas}
					\mathbb{P}\{ \|\sqrt{T}g_w(\SSS)\|^2 > q_\alpha\} \begin{cases}
						\to \alpha \quad &\text{if $H_{0}$ holds,}\\
						\to 1 \quad &\text{if $H_{\magni}$ holds with $\magni>0$ and $C_{XZ}\psi\neq 0$.}\\
					\end{cases}
				\end{align}
		\end{corollary}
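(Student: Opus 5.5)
The plan is to read Corollary~\ref{cortest1} off Theorem~\ref{thm2}, adding one distributional identity: $\|\mathcal G_w\|^2$ has the same law as $\sum_{j\ge1}\lambda_j\nu_j^2$. The normalization $C_w$ in \eqref{eqCw} is chosen precisely so that this holds.

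\textbf{Step 1: identify the limit law.} Write $\mathcal N_0(r)=\sum_{j}\lambda_j^{1/2}W_j(r)v_j$, where the $W_j$ are independent standard scalar Brownian motions. This Karhunen--Lo\`eve-type expansion holds because $\Lambda_{Zu}$ is trace class with spectral decomposition \eqref{spec_decom}. Property \ref{propg2} and linearity give $\langle g_w(\mathcal N_0),v_j\rangle=\lambda_j^{1/2}g_w(W_j)$. Since $g_w$ is a continuous linear functional applied to Gaussian processes, the variables $g_w(W_j)$ are jointly Gaussian, mean zero, and independent across $j$.

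Their common variance comes from Fubini and stochastic integration by parts. For the continuous case $\mu=$ Lebesgue,
\[
\int_0^1 W(r)w(r)\,dr=\int_0^1\Bigl(\int_s^1 w(r)\,dr\Bigr)dW(s).
\]
The It\^o isometry then gives variance $\int_0^1(\int_s^1 w\,d\mu)^2ds=C_w^{-2}$. For the discrete $\mu$, I would use the analogous summation-by-parts identity, $\sum_i W(r_i)w(r_i)\Delta r_i=\sum_i (W(r_i)-W(r_{i-1}))\sum_{k\ge i}w(r_k)\Delta r_k$. Its variance is the stated double integral with respect to $\mu$, read in this discrete sense; I would verify that the definition of $C_w$ matches this.

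Hence $g_w(W_j)=C_w g_w^\raw(W_j)\sim N(0,1)$ i.i.d. By Parseval, $\|\mathcal G_w\|^2=\sum_j\lambda_j\nu_j^2$ with $\nu_j=g_w(W_j)$.

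\textbf{Step 2: size.} Under $H_0$, Theorem~\ref{thm2} gives $\|\sqrt T g_w(\SSS)\|^2\to_d\sum_j\lambda_j\nu_j^2$. The limit law is continuous whenever $\Lambda_{Zu}\neq0$: some $\lambda_1>0$ term makes it absolutely continuous, as a convolution involving a scaled $\chi^2_1$. So $q_\alpha$ is a continuity point, and the portmanteau theorem yields $\mathbb P\{\cdot>q_\alpha\}\to\alpha$. The degenerate case $\Lambda_{Zu}=0$ should be excluded or noted separately, since then $q_\alpha=0$ and the limit is a point mass.

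\textbf{Step 3: consistency.} Under a fixed alternative with $C_{XZ}\psi\neq0$, Theorem~\ref{thm2} gives divergence in probability, while $q_\alpha<\infty$ is fixed. Therefore the rejection probability tends to one.

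\textbf{Main obstacle.} The main difficulty is the variance computation in Step 1, namely making the integration-by-parts argument and the definition of $C_w$ match exactly, in both the continuous and discrete forms of $\mu$. The continuity of the limit distribution at $q_\alpha$ in Step 2 is the other point that needs care.
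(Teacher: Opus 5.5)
Your proposal is correct and follows essentially the same route as the paper. The paper likewise combines the Karhunen--Lo\`eve expansion $\mathcal N_0=\sum_j\sqrt{\lambda_j}\,\xi_j v_j$, independence and unit variance of $g_w(\xi_j)$ from the Brownian covariance identity and the definition of $C_w$, and then reads off size and consistency from Theorem~\ref{thm2}. Your summation-by-parts check in the discrete case does match $C_w^{-2}$ under the convention $\int_{[s,1]}$ implicit in the paper's Lemma~\ref{lem2}, and your treatment of continuity of the limit law at $q_\alpha$ and of the degenerate case $\Lambda_{Zu}=0$ supplies care that the paper leaves implicit.
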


Under the working assumption that the eigenvalues $\{\lambda_j\}_{j \geq 1}$ are known (as in Corollary~\ref{cortest1}), the critical value $q_\alpha$ can be approximated by standard Monte Carlo methods.
A crucial feature of the normalization in \eqref{eqCw} is that the same critical value applies to all admissible choices of $w$. Thus, the weight function affects 
local power through $D_w$, but it does not require a separate critical value. In practice, of course, the eigenvalues $\{\lambda_j\}_{j\ge1}$ are unknown, so the test in Corollary~\ref{cortest1} is infeasible. Section~\ref{sec_feasible} develops feasible critical values based on a consistent estimator of the relevant long-run covariance and shows that replacing $q_\alpha$ by its feasible counterpart preserves the asymptotic size and consistency properties stated above.

Corollary \ref{cortest1} shows that, under the fixed alternative $H_{\magni}$ with $\magni > 0$,  the proposed test is consistent as long as 
\begin{equation}\label{eqker01}  
  C_{XZ}\psi\neq0  \quad (\text{i.e., $\psi \notin \ker C_{XZ}$}).
\end{equation}
Thus, the proposed test is consistent against alternatives whose deviation from the null 
is identifiable through the moment operator $C_{XZ}$. Condition~\eqref{eqker01} is much weaker than requiring $C_{XZ}$ to be injective. If $C_{XZ}$ is injective, then every nonzero deviation from $\theta$ is detectable. 
Injectivity is sufficient for consistency against all fixed alternatives, but it is stronger than necessary for consistency against any specific alternative. Such a condition, or stronger spectral conditions on $C_{XZ}$, is often imposed for consistent estimation of the slope parameter in functional regression models; see, for example, \citealp{Florence2015,Benatia2017,seong2021functional}. These assumptions are strong and difficult to verify from finite samples. In contrast, the proposed test does not require consistent estimation of $\theta$ and remains consistent against any deviation that is visible through $C_{XZ}$.

The identifiability condition can be particularly mild for some hypotheses of practical interest. For example, when testing the nullity of the functional linear relationship, $H_0:\theta=0$, condition~\eqref{eqker01} reduces, by \eqref{eqpopulation}, to $C_{yZ}=C_{XZ}\theta\neq0$. Thus, for the nullity test, the proposed procedure is consistent as long as $y_t$ has nonzero covariance with $Z_t$.

\begin{remark}[Size robustness and unidentified directions]\label{rem:sizeweak}\label{remrobust}\normalfont
The preceding discussion concerns alternatives that are detectable through \(C_{XZ}\). This detectability affects power, but not the null distribution. Indeed, under \(H_0:\theta=\theta_0\), $\sqrt T\SSS(r)=T^{-1/2}\sum_{t=1}^{\lfloor Tr\rfloor} Z_tu_t,$ which does not involve \(X_t\) or \(C_{XZ}\). Hence the null limit is determined by the long-run covariance of \(\{Z_tu_t\}\), not by the covariance between \(X_t\) and \(Z_t\). In particular, the same null theory applies even when \(C_{XZ}\) is local to zero or exactly zero, provided that \(\{Z_tu_t\}\) satisfies the regularity conditions imposed above. Thus, weak or failed relevance affects the power of the test, but not its null size; the feasible versions in Section~\ref{sec_feasible} inherit this property under Assumption~\ref{assum2}. If \(H_0\) is false but \(C_{XZ}\psi=0\), then the alternative is unidentifiable from the population moment condition, and the statistic has no nonzero deterministic drift. This is a loss of power against unidentified directions, rather than a failure of size control, paralleling Anderson--Rubin-type tests in finite-dimensional IV regression.
\end{remark}

\subsection{Local asymptotic distributions and local power}\label{sec_iv_strength}
To examine the asymptotic properties of the proposed tests in more detail, we now study the behavior of the proposed tests under local alternatives. 
Recall that, under $H_{\magni,T}$ in \eqref{eqhypo0a}, $\theta - \theta_0 = T^{-1/2}{\magni}\bar\psi$, 
where $\bar\psi=\gamma_{\rr}\psi_{\rr}+\gamma_{\kk}\psi_{\kk}$ is a unit vector representing the direction of the deviation and $C_{XZ}\bar\psi = \gamma_{\rr}C_{XZ}\psi_{\rr}$ since 
$\psi_{\kk}\in\ker C_{XZ}$. Hence,
\begin{equation}\label{eqpsi}
    C_{XZ}(\theta-\theta_0) = \frac{\magni}{\sqrt T}C_{XZ}\bar\psi 
    = \frac{\magni\gamma_{\rr}}{\sqrt T}C_{XZ}\psi_{\rr}.
\end{equation}

Thus, the local alternatives considered here make the detectable moment drift local to zero at the usual $T^{-1/2}$ rate. This is the sense in which the analysis below is analogous to weak-identification asymptotics: the component of the alternative that is visible through the moment operator is of order $T^{-1/2}$.

	Our next result gives the asymptotic distribution of the test statistic \eqref{eqteststats} under $H_{\magni,T}$. 
\begin{theorem}\label{thm3}
Suppose that Assumptions~\ref{assum1} and~\ref{assumw} hold. Let $\mathcal G_w=g_w(\mathcal N_0)$, where $\mathcal N_0$ is the Brownian motion appearing under $H_0$ in Theorem~\ref{thm1}. Under the local alternatives 
$H_{\magni,T}$ in \eqref{eqhypo0a}, $\|\sqrt{T}g_w(\SSS)\|^2 \to_d \left\|\mathcal G_w+\magni D_w C_{XZ}\bar\psi   \right\|^2.$
\end{theorem}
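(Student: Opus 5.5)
The plan is to decompose the moment process under $H_{\magni,T}$ into a null component and a deterministic drift, and then pass $g_w$ through. With $\ddd=\theta-\theta_0=T^{-1/2}\magni\bar\psi$ and $y_t-\langle X_t,\theta_0\rangle=u_t+\langle X_t,\ddd\rangle$, we write
\begin{equation*}
\sqrt T\,\SSS(r)=\frac{1}{\sqrt T}\sum_{t=1}^{\fll{Tr}}Z_tu_t+\frac{\magni}{T}\sum_{t=1}^{\fll{Tr}}Z_t\langle X_t,\bar\psi\rangle .
\end{equation*}
By part (i) of the discussion after Theorem~\ref{thm1}, the first term converges uniformly in $r$ (in probability, after the coupling used there) to $\mathcal N_0(r)$. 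This part does not depend on $\magni$, because it is exactly the null process.

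For the second term, apply Theorem~\ref{thm1} with the fixed deviation $\bar\psi$; equivalently, use a uniform law of large numbers for the $L^4$-$m$-approximable sequence $\{Z_t\otimes X_t\}$. This gives $\sup_r\|T^{-1}\sum_{t\le \fll{Tr}}Z_t\langle X_t,\bar\psi\rangle - rC_{XZ}\bar\psi\|\to_p0$. Concretely, Theorem~\ref{thm1} under $H_\magni$ with $\ddd=\bar\psi$ yields $\sup_r\|T^{-1}\sum_{t\le\fll{Tr}}Z_t(u_t+\langle X_t,\bar\psi\rangle)-rC_{XZ}\bar\psi\|\to_p0$. The $u_t$ part is $O_p(T^{-1/2})$ uniformly by the null result, so subtracting it gives the claim. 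Hence
\begin{equation*}
\sup_{0\le r\le1}\bigl\|\sqrt T\,\SSS(r)-\mathcal N_0(r)-r\magni C_{XZ}\bar\psi\bigr\|\to_p0 .
\end{equation*}

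Next use linearity and continuity of $g_w$. Since $w$ is continuous, hence bounded, and $\mu$ is a finite measure on $[0,1]$, we have $\|g_w(f)\|\le C_w\sup|w|\,\mu([0,1])\sup_r\|f(r)\|$. So $g_w$ is Lipschitz in the sup norm, and
\begin{equation*}
\sqrt T g_w(\SSS)=g_w(\mathcal N_0)+\magni C_w\int_0^1 r w(r)\mu(dr)\,C_{XZ}\bar\psi+o_p(1)=\mathcal G_w+\magni D_wC_{XZ}\bar\psi+o_p(1),
\end{equation*}
using the definition \eqref{eqdw} of $D_w$. The continuous mapping theorem for $x\mapsto\|x\|^2$ on $\mathcal H$ then gives the stated limit. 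Measurability of $\int\mathcal N_0(r)w(r)\mu(dr)$ as an $\mathcal H$-valued Bochner integral is routine because $\mathcal N_0$ has continuous paths.

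The main obstacle is the uniform-in-$r$ law of large numbers for the drift term. It requires that the cross products $Z_t\langle X_t,\bar\psi\rangle$ satisfy the moment and weak-dependence conditions that Theorem~\ref{thm1} already handles. I would obtain it by invoking Theorem~\ref{thm1} as above, so that no new assumption is needed. This step is also what makes the local drift appear only through $C_{XZ}\bar\psi=\gamma_{\rr}C_{XZ}\psi_{\rr}$, so the $\psi_{\kk}$ component contributes nothing.
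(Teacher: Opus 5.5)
Your proposal is correct and follows the paper's proof essentially step for step. It uses the same split of $\sqrt T\,\SSS(r)$ into the null partial sum $T^{-1/2}\sum_{t\le\fll{Tr}}Z_tu_t$ plus $\magni T^{-1}\sum_{t\le\fll{Tr}}Z_t\langle X_t,\bar\psi\rangle$, the uniform law of large numbers for the second term obtained from Theorem~\ref{thm1}, and the continuous mapping theorem through the sup-norm-bounded linear map $g_w$. Your derivation of that law of large numbers (apply Theorem~\ref{thm1} with $\ddd=\bar\psi$, then remove the uniformly $O_p(T^{-1/2})$ term $T^{-1}\sum_{t\le\fll{Tr}} Z_tu_t$) simply spells out what the paper invokes as ``the property of $\SSS$ under the alternative.''
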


If $C_{XZ}\bar\psi=0$ (i.e., $\gamma_{\rr}=0$), the local shift in Theorem~\ref{thm3} vanishes and the limit coincides with the null limiting distribution. Thus, local alternatives in directions invisible to $C_{XZ}$ do not generate nontrivial local power at the $T^{-1/2}$ rate. In the following local-power comparison, we thus focus on detectable directions satisfying $C_{XZ}\bar\psi\neq0$. Theorem~\ref{thm3} shows that, under local alternatives, the limiting distribution is obtained by shifting the same centered Gaussian element that appears under the null by  $\magni D_w C_{XZ}\bar\psi$.
Thus, for a fixed detectable direction $\bar\psi$, the effect of the weight function on local power is summarized entirely by the scalar $D_w$. 

Let $q_\alpha$ be the null critical value in Corollary~\ref{cortest1}. The local asymptotic power of the test based on $w$ in direction $\bar\psi$ is
\begin{equation}\label{eqlocalpower}
    \pi_w(\magni,\bar\psi;\alpha) = \mathbb P\left(\left\| \mathcal G_w+\magni D_w C_{XZ}\bar\psi \right\|^2 > q_\alpha  \right).
\end{equation}
Since the normalization in \eqref{eqCw} makes the null distribution of $\mathcal G_w$ invariant to the choice of $w$, different weights are compared only through the effective local signal $\magni D_w C_{XZ}\bar\psi$. This makes it possible to characterize the optimal weight within the normalized weighted class considered in Section~\ref{sec_infeasible_test}, which reduces to the one-dimensional problem of maximizing $D_w$.
		
\begin{definition}[Relative local efficiency]\label{def2}
For two admissible weights $w_1, w_2$, the test based on $w_1$ is said to be locally at least as efficient as the test based on $w_2$ if, for every detectable direction $\bar\psi$ (i.e., $C_{XZ}\bar\psi \neq 0$) and every $\magni > 0$, $\pi_{w_1}(\magni, \bar\psi; \alpha) \geq \pi_{w_2}(\magni, \bar\psi; \alpha).$
(By Theorem~\ref{thm3}, this is equivalent to $D_{w_1} \geq D_{w_2}$.)
\end{definition}

Note that, if $D_{w_1}>D_{w_2}$, then the local power of the test based on $w_1$ under $H_{\magni,T}$ is identical to that of the test based on $w_2$ under $H_{\tilde\magni,T}$ with  $\tilde\magni=\magni (D_{w_1}/D_{w_2})>\magni.$  
Thus, a smaller value of $D_w$ requires a larger deviation from the null to achieve the same asymptotic power. Hence the locally optimal weight is the one maximizing $D_w$ over the admissible weighted class. As shown in Lemma~\ref{lem2}, $D_w \leq 1$ is satisfied, and the upper bound is sharp. 
\begin{theorem}[Locally optimal test]\label{thmopt}
Suppose that Assumptions~\ref{assum1} and~\ref{assumw} hold. Within the normalized weighted class \eqref{eqteststats}, maximal local power is obtained when $D_w=1$. One representative choice attaining this bound is
\begin{equation}\label{canchoice}
  \mu=\delta_1 \quad \text{and} \quad w(r)=1 \text{ for $r\in [0,1]$},
\end{equation}
where $\delta_1$ denotes the Dirac measure at $r=1$, which yields the endpoint evaluation map $g_w(f)=f(1).$ Moreover, within the continuous/discrete formulations of \(\mu\) considered in \eqref{eqteststatpre3}, any choice attaining \(D_w=1\) is equivalent to \eqref{canchoice} under the normalization in \eqref{eqCw}. Thus, the unique 
locally optimal test is based on \(g_w(f)=f(1)\).
\end{theorem}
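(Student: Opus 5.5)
The argument has three parts. First, by Theorem~\ref{thm3} and Definition~\ref{def2}, comparing tests in the normalized class comes down to comparing the scalars $D_w$. Second, Lemma~\ref{lem2} gives $D_w\le 1$, and I will reprove this bound so that its equality case is visible. Third, I will show that $\mu=\delta_1$, $w\equiv1$ attains the bound and that equality forces essentially this choice.

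The first part needs one observation: the local power in \eqref{eqlocalpower} is nondecreasing in $D_w$. The limit $\mathcal G_w$ is a centered Gaussian element whose law does not depend on $w$ under the normalization \eqref{eqCw}. Its covariance is $C_w^2\int_0^1(\int_s^1 w\,d\mu)^2\mu(ds)\,\Lambda_{Zu}=\Lambda_{Zu}$, by the Brownian covariance of $\mathcal N_0$ after exchanging the order of integration. It follows that $\|\mathcal G_w+c\,h\|^2$ is stochastically increasing in $c\ge0$ for fixed $h=\magni C_{XZ}\bar\psi$. This is Anderson's lemma: the exceedance probability of the symmetric convex set $\{x:\|x\|^2\le q_\alpha\}$ under a Gaussian shift is nonincreasing along rays. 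It can also be checked coordinatewise, since $\|\mathcal G_w+ch\|^2$ is a sum of independent noncentral chi-square terms whose noncentralities increase in $c$. Maximal local power therefore corresponds to maximal $D_w$.

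For the bound, write $W(s)=\int_{[s,1]} w\,d\mu$. With $\mu$ restricted to $[0,1]$, Fubini gives $\int_0^1 r w(r)\mu(dr)=\int_0^1 W(s)\,ds$ in the continuous case and $\sum_i W(r_i)(r_i-r_{i-1})$ in the discrete case; in both cases this equals $\int W\,d\mu$ once the conventions for $\mu$ are fixed. Cauchy--Schwarz then gives
\[
\int W\,d\mu\le \mu([0,1])^{1/2}\Big(\int W^2d\mu\Big)^{1/2},
\]
and $\mu([0,1])=1$. Hence $D_w\le1$, with equality if and only if $W$ is $\mu$-a.e.\ constant and positive.

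Attainment is immediate. With $\mu=\delta_1$ and $w\equiv1$ we have $W(1)=1$, $B_w=1$ and $C_w=1$, so $D_w=1$.

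The equality case is the step I expect to be the main obstacle. In the continuous case, $W$ constant on $[0,1]$ with $W(1)=\int_{\{1\}}w\,d\mu=0$ forces $W\equiv0$, which contradicts Assumption~\ref{assumw}. So with the Lebesgue measure one has $D_w<1$ strictly: for example $w\equiv1$ gives $D_w=\sqrt3/2$, and the supremum is approached only as the weight concentrates near $r=1$. In the discrete case, $W(r_i)=\sum_{j\ge i}w(r_j)(r_j-r_{j-1})$ must be constant in $i$ wherever the mass is positive. This forces $w(r_j)=0$ for $j<N$, so the statistic is a positive multiple of $\SSS(1)$, and after the normalization \eqref{eqCw} it equals $g_w(f)=f(1)$. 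The care needed here is bookkeeping rather than analysis: the constraint $\mu([0,1])=1$ and the treatment of the endpoint mass have to be handled consistently, and $\delta_1$ has to be admitted as the degenerate member of the class, so that every $D_w=1$ choice is shown to produce the same functional $f\mapsto f(1)$. That identification gives the claimed uniqueness.
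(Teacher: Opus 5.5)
Your proposal is correct and follows essentially the same route as the paper: reduce the comparison to maximizing $D_w$, use the Cauchy--Schwarz bound of Lemma~\ref{lem2} with its equality case ($s\mapsto\int_{[s,1]}w\,d\mu$ constant $\mu$-a.e.), verify attainment by $\mu=\delta_1$, $w\equiv1$ (the $N=1$ discrete partition), and show that equality forces $w(r_i)=0$ for $i<N$ in the discrete case and cannot occur in the Lebesgue case. Two of your steps go beyond the paper, which only asserts them. First, you justify via Anderson's lemma, or coordinatewise noncentral $\chi^2$ terms, that local power is nondecreasing in $D_w$. Second, you show the Lebesgue case cannot attain equality: $W$ is continuous, so if it is a.e.\ constant with $W(1)=0$ it must vanish identically.
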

If one restricts attention to the Lebesgue-integral version, the bound $D_w=1$ is not attained exactly. It can, however, be approached by weights that concentrate increasingly near $r=1$. For example, in the continuous case with $w(r)=r^p$ for $p\ge0$,  $D_w={\sqrt{2p+3}}/{\sqrt{2p+4}}.$ Thus $D_w\to1$ as $p\to\infty$, reflecting the fact that the weight increasingly concentrates near the endpoint $r=1$. Although these weights are sub-optimal for any finite $p$, they can be made arbitrarily close to optimal by choosing $p$ sufficiently large. The tests in this section are infeasible; Section~\ref{sec_feasible} replaces $q_\alpha$ with a feasible critical value $\widehat{q}_\alpha$ computed from a sample operator. Since the asymptotic gap in $D_w$ across choices of $w(r)=r^p$ is small for large $p$ (for instance, $D_w=0.96$ at $p=5$ and $D_w=0.97$ at $p=7$), finite-sample power differences across $p$ need not be dramatic. Section~\ref{sec_simulation} examines this with several choices of $w$; the simulation results are broadly consistent with the asymptotic ranking implied by $D_w$, with the endpoint-evaluation test tending to perform best.

\begin{remark}
The local limiting distribution of $  \left\|\mathcal G_w+\magni D_w C_{XZ}\bar\psi \right\|^2$ can be expressed in terms of the eigenbasis of $\Lambda_{Zu}$. Let $\{(\lambda_j,v_j)\}_{j\ge1}$ be the eigenpairs of $\Lambda_{Zu}$ and let $\{\nu_j\}_{j\ge1}$ be i.i.d.\ standard normal random variables. Then
\begin{equation}\label{eqlocalasymp}
    \left\|\mathcal G_w+\magni D_w C_{XZ}\bar\psi \right\|^2 =_d \sum_{j=1}^\infty \left(
    \sqrt{\lambda_j}\nu_j   + \left\langle \magni D_w C_{XZ}\bar\psi, v_j\right\rangle \right)^2.
\end{equation}
Observe that $\tilde{\nu}_j =   \sqrt{\lambda_j}\nu_j   + \left\langle \magni D_w C_{XZ}\bar\psi, v_j\right\rangle$ is independent across $j$. Therefore, if $\bar\psi$, $w$, $C_{XZ}$ and the eigenelements of $\Lambda_{Zu}$ are given, the local asymptotic distribution can be approximated by simulating i.i.d.\ standard normal random variables $\{\nu_j\}_{j\geq 1}$. 
\end{remark}

		\section{Computation of feasible critical values} \label{sec_feasible} 
Under $H_0$, $\mathcal G_w$ is a mean-zero Gaussian random element with covariance $\Lambda_{Zu}$, regardless of the choice of $w$. Moreover, from \eqref{eqlocalasymp}, $\|\mathcal G_w\|^2 = \sum_{j=1}^\infty \lambda_j \nu_j^2$, where $\nu_j \sim_{\text{iid}} N(0,1)$. The asymptotic null distribution therefore depends on the nuisance parameters $\{\lambda_j\}_{j \geq 1}$, and computing critical values directly is infeasible. 
To obtain  feasible critical values, we define $u_{0,t} =  y_t - \langle X_t, \theta_0\rangle$ and let $\widehat{\Lambda}_{Zu}$ be the sample long-run covariance operator defined as follows:  
		\begin{equation} \label{eqsamplelrv}
			\widehat{\Lambda}_{Zu} = \frac{1}{T}\sum_{s=-h}^{h}\mathrm{k}(s/h) \widehat{\Gamma}_s,  \quad \widehat{\Gamma}_s= 	\begin{cases}
				\sum_{t=s+1} ^T   (Z_{t-s}u_{0,t-s}-\overline{Z_tu_{0,t}}) \otimes  (Z_{t}u_{0,t}-\overline{Z_tu_{0,t}}),  &\text{ if } s \geq 0,\\
				\sum_{t=-s+1} ^T 	(Z_{t}u_{0,t}-\overline{Z_tu_{0,t}}) \otimes  (Z_{t+s}u_{0,t+s}-\overline{Z_tu_{0,t}}),  &\text{ if } s < 0,
			\end{cases}  
		\end{equation}
		where $\overline{Z_tu_{0,t}} = T^{-1}\sum_{t=1}^T Z_tu_{0,t}$ and $h$ is the bandwidth parameter, which grows without bound as $T \to \infty$. 
	We let $\{\hat{\lambda}_j\}_{j \geq 1}$ be the eigenvalues of $\widehat{\Lambda}_{Zu}$ in decreasing order, with corresponding eigenvectors $\{\hat v_j\}_{j \geq 1}$: $\widehat{\Lambda}_{Zu} \hat v_j = \hat\lambda_j \hat v_j.$

	In what follows, $\Gamma_s = \mathbb{E}[(Z_{t-s} u_{t-s}) \otimes (Z_t u_t)]$ denotes the lag-$s$ autocovariance operator of $Z_t u_t$. We impose the following conditions: 
		\begin{assumption} \label{assum2} $\mathrm{k}$, $h$ and the sequences of $Z_t$ and $u_t$ satisfy the following:
			\begin{enumerate*}[(i)]
				\item\label{assum2a} $\mathrm{k}$ is an even function with $\mathrm{k}(0)=1$, $\mathrm{k}(\tau) = 0$ if $|\tau| > c$ for some $c>0$ and $\mathrm{k}$ is Lipschitz continuous on $[-c,c]$. For some $\varphi>0$, $\lim_{x\to 0}|1-\mathrm{k}(x)|{|x|^{-\varphi}} = \tilde{m}>0$;
				\item\label{assum2aa} $h\to \infty$ and $h^{2\varphi+1}/T  \to c_\varphi \in (0,\infty]$;
				\item\label{assum2b} For some $\tilde{\varphi}>\varphi$, $\sum_{s=-\infty}^\infty |s|^{\tilde{\varphi}}\|\Gamma_s\|_{\op} < \infty$; 
				\item\label{assum2c} $Z_t$ and $u_t$ are $L^8$-$m$-approximable and $\mathbb{E}[\|Z_tu_t\|^8] < \infty$.
			\end{enumerate*}  	 
		\end{assumption}

Assumption~\ref{assum2} collects sufficient conditions for the kernel long-run covariance estimator. Assumption~\ref{assum2}\ref{assum2a} is satisfied by standard compactly supported kernels; for example, the Bartlett, Parzen, and Tukey--Hanning kernels satisfy the condition with \(c=1\) and \(\varphi=1\) for Bartlett or \(\varphi=2\) for the others. The bandwidth condition in Assumption~\ref{assum2}\ref{assum2aa} is compatible with the usual rate \(h=CT^{1/(2\varphi+1)}\) for estimating the long-run covariance of an \(L^p\)-\(m\)-approximable sequence; see \cite{rice2017plug}. Assumption~\ref{assum2}\ref{assum2b} is a short-range dependence condition on the autocovariance operators of \(Z_tu_t\). Finally, Assumption~\ref{assum2}\ref{assum2c} is a sufficient high-moment and weak-dependence condition used to control the stochastic error of the kernel estimator. 
Further discussion of bandwidth/truncation conditions and alternative sufficient conditions for Assumption~\ref{assum2}\ref{assum2c} is given in Section~\ref{sec_supp_lrv_conditions} of the Supplement.

		We obtain a feasible critical value $\hat{q}_{\alpha}$ by approximating the limiting distribution $\sum_{j=1}^\infty \lambda_j \nu_j^2$. Specifically, for some truncation level $d_T$, we replace $\lambda_j$ 
with $\widehat{\lambda}_j$ for $j = 1, \ldots, d_T$ and compute 
$\widehat{q}_\alpha$ via standard Monte Carlo as the $(1-\alpha)$-quantile 
of $\sum_{j=1}^{d_T} \widehat{\lambda}_j \nu_j^2$. This method is supported by the following: 
		\begin{theorem} \label{thm4}
			Suppose that Assumptions \ref{assum1}, \ref{assum2}, and \ref{assumw} hold, $d_T\to \infty$ and $d_T=o(\sqrt{T/h})$. Then Corollary \ref{cortest1} holds when $q_{\alpha}$ is replaced by the feasible critical value $\hat{q}_{\alpha}$.
		\end{theorem}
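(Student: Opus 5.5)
The plan is to reduce everything to showing $\hat q_\alpha \to_p q_\alpha$ under $H_0$, and $\hat q_\alpha = O_p(1)$, or at least $\hat q_\alpha = o_p(T)$, under fixed alternatives. Under $H_0$ we have $u_{0,t}=u_t$, so $\widehat\Lambda_{Zu}$ is the standard kernel long-run covariance estimator of $\{Z_tu_t\}$. Under Assumptions~\ref{assum1} and~\ref{assum2}, the known results for $L^p$-$m$-approximable sequences (e.g.\ Horváth--Kokoszka--Reeder type bounds, or those cited via \cite{rice2017plug}) give
\[
\|\widehat\Lambda_{Zu}-\Lambda_{Zu}\|_{\op} = O_p\bigl(\sqrt{h/T}+h^{-\varphi}\bigr).
\]
The stochastic part uses the $L^8$ moments in Assumption~\ref{assum2}\ref{assum2c}. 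The bias part uses Assumption~\ref{assum2}\ref{assum2a}, \ref{assum2b} together with $\tilde\varphi>\varphi$, and the bandwidth condition makes $h^{-\varphi}=O(\sqrt{h/T})$. The centering by $\overline{Z_tu_{0,t}}$ contributes only $O_p(h/T)$, which is negligible. By Weyl's inequality, $\sup_j|\hat\lambda_j-\lambda_j| = O_p(\sqrt{h/T})$.

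Next, compare $\hat Q := \sum_{j=1}^{d_T}\hat\lambda_j\nu_j^2$ with $Q:=\sum_{j\ge1}\lambda_j\nu_j^2$, conditionally on the data. Write
\[
\mathbb E^*|\hat Q-Q| \le \sum_{j\le d_T}|\hat\lambda_j-\lambda_j| + \sum_{j>d_T}\lambda_j \le d_T\, O_p(\sqrt{h/T}) + o(1).
\]
The first term is $o_p(1)$ by $d_T=o(\sqrt{T/h})$. The second term is $o(1)$ by trace-class summability and $d_T\to\infty$. Hence the conditional law of $\hat Q$ converges in probability (e.g.\ in the bounded-Lipschitz metric) to that of $Q$.

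Since $\Lambda_{Zu}\neq0$, $Q$ has a continuous, strictly increasing distribution function on $(0,\infty)$: it is a nondegenerate weighted chi-square, so its CDF is continuous and quantiles are unique. Therefore $\hat q_\alpha\to_p q_\alpha$. Combining this with Theorem~\ref{thm2} and Slutsky gives
\[
\mathbb P\{\|\sqrt T g_w(\SSS)\|^2>\hat q_\alpha\}\to \mathbb P\{\|\mathcal G_w\|^2>q_\alpha\}=\alpha,
\]
because $\|\mathcal G_w\|^2=_d Q$ (the normalization \eqref{eqCw} gives covariance $\Lambda_{Zu}$).

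Under a fixed alternative with $C_{XZ}\psi\neq0$, Theorem~\ref{thm2} shows that the statistic grows at rate $T$: $\|\sqrt T g_w(\SSS)\|^2/T\to_p D_w^2\|\mathcal M\|^2>0$. It therefore suffices that $\hat q_\alpha=o_p(T)$. Now $u_{0,t}=u_t+\langle X_t,\psi\rangle$, so $Z_tu_{0,t}$ is still a stationary $L^4$-type approximable sequence whose mean $C_{XZ}\psi$ is removed by the centering. Applying the same estimator bound, or merely a crude bound $\operatorname{tr}\widehat\Lambda_{Zu}=O_p(h)$ from $\|\widehat\Gamma_s\|\le \sum_t\|\cdot\|^2$, gives $\hat q_\alpha\le c_\alpha\operatorname{tr}\widehat\Lambda_{Zu}=O_p(h)=o_p(T)$ via Markov's inequality on $\sum\hat\lambda_j\nu_j^2$. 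The condition $h=o(T)$ follows from $d_T=o(\sqrt{T/h})$ with $d_T\to\infty$. This yields power tending to one.

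The main obstacle is the operator-norm rate of $\widehat\Lambda_{Zu}$ under $m$-approximability, specifically its stochastic part. I would cite the existing result for kernel estimators of long-run covariance operators rather than rederive it, checking that Assumption~\ref{assum2}\ref{assum2c} matches the hypotheses there. Only consistency at rate $\sqrt{h/T}$ times $d_T$ is needed, so even a Hilbert--Schmidt norm bound suffices.
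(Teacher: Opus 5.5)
Your proposal is correct and follows the same route as the paper. That route is: an $O_p(\sqrt{h/T})$ operator-norm rate for $\widehat{\Lambda}_{Zu}$ (stochastic part cited, bias part from the kernel and $\tilde{\varphi}$-summability conditions, centering $O_p(h/T)$), then eigenvalue perturbation, then the bound $d_T\,O_p(\sqrt{h/T})+\sum_{j>d_T}\lambda_j$ on the truncated weighted chi-square, then $\hat q_\alpha=o_p(T)$ under detectable alternatives. Your two differences are minor improvements: you state explicitly the nondegeneracy $\Lambda_{Zu}\neq 0$ needed for a continuous, unique $q_\alpha$, which the paper leaves implicit; and your crude bound $\hat q_\alpha\le \alpha^{-1}\|\widehat{\Lambda}_{Zu}\|_{\mathrm{tr}}=O_p(h)$ under the alternative needs only $\mathbb{E}\|Z_tu_{0,t}\|^2<\infty$, whereas the paper invokes consistency of $\widehat{\Lambda}_{Zu}$ for the long-run covariance of $Z_tu_t+Z_t\langle X_t,\psi\rangle$.
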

As noted in the proof of Theorem~\ref{thm4}, the condition \(d_T=o(\sqrt{T/h})\) is a sufficient truncation rule for the eigenvalue approximation; further discussion is given in Section~\ref{sec_supp_lrv_conditions} of the Supplement. In practice, the test is robust to the choice of \(d_T\) as long as it is sufficiently large, which is expected since \(\hat\lambda_j\) decays to zero as $j$ increases, and the tail contribution to the critical value is negligible. Under \(H_{\magni,T}\), \(u_{0,t}=u_t+O_p(T^{-1/2})\), so \(\widehat\Lambda_{Zu}\) remains consistent for \(\Lambda_{Zu}\) and the local power expression in \eqref{eqlocalpower} is unchanged when \(q_\alpha\) is replaced by \(\widehat q_\alpha\).

\section{Extensions}\label{sec_ext}
We discuss extensions of the proposed tests to models with an intercept and additional scalar covariates. Theoretical details are given in Sections~\ref{sec_model_intercept_supp} and~\ref{sec_extension_add} of the Supplement; joint hypotheses involving multiple functional covariates are discussed in Section~\ref{app_sec_multiple}.
		\subsection{Model with an intercept}\label{sec_model_intercept}
In practice, \(y_t\), \(X_t\), and \(Z_t\) may have nonzero means \(\mathbb{E}[y_t]=\mu_y \in \mathbb{R}\), \(\mathbb{E}[X_t]=\mu_X \in \mathcal H\), and \(\mathbb{E}[Z_t]=\mu_Z \in \mathcal H\), in which case it is natural to consider the FLM with an intercept,
		\begin{align}\label{eqflm2}
			y_t =  \mu+ \langle X_t,\theta \rangle + u_t,
		\end{align}
where \(\mathbb{E}[u_t]=0\). The proposed test statistic should be invariant to \(\mu\), which is achieved by centering the relevant variables. Let \(y_{c,t}:=y_t-\mu_y\), \(X_{c,t}:=X_t-\mu_X\), and \(Z_{c,t}:=Z_t-\mu_Z\), so that \(C_{XZ}=\Cov(X_t,Z_t)=\mathbb{E}[X_{c,t}\otimes Z_{c,t}]\). We use the centered moment process
		\begin{equation*} 
		\SSS_c(r) \coloneqq S_{c,\fll{Tr}}
= \frac{1}{T}\sum_{t=1}^{\fll{Tr}} (Z_t - \overline{Z}_T)\!\left\{(y_t - \overline{y}_T) - \langle X_t - \overline{X}_T, \theta_0\rangle\right\},
		\end{equation*}
where \(\overline{y}_T=T^{-1}\sum_{t=1}^T y_t\), \(\overline{Z}_T=T^{-1}\sum_{t=1}^T Z_t\), and \(\overline{X}_T=T^{-1}\sum_{t=1}^T X_t\).
Under the regularity conditions in Section~\ref{sec_model_intercept_supp} of the Supplement, the test with feasible critical value \(\tilde{q}_{\alpha}\), computed as in Section~\ref{sec_feasible}, satisfies \(\mathbb{P}\{ \|\sqrt{T}g_w(\SSS_c)\|^2 > \tilde{q}_{\alpha}\} \to \alpha\) under \(H_0\), while the probability converges to one under alternatives satisfying \(C_{XZ}(\theta-\theta_0)\neq0\). The required assumptions are centered analogues of those in Sections~\ref{sec_test} and~\ref{sec_feasible}; the theoretical justification is given in Section~\ref{sec_model_intercept_supp} of the Supplement. As an application, Section~\ref{sec_model_intercept2} also develops a test for correlation between scalar and functional variables.
	
\subsection{Inclusion of scalar covariates}\label{sec_model_covariates}
The proposed procedure can also accommodate scalar covariates. Consider testing hypotheses on $\theta$ in the model
\begin{equation}\label{eqflm2add}
    y_t = \sum_{j=1}^{\KK} \beta_j \cont_{j,t} + \langle X_t,\theta\rangle + u_t.
\end{equation}
To extend the proposed test, we partial out $\cont_t=(\cont_{1,t},\ldots,\cont_{\KK,t})$ from $y_t$, $X_t$, and $Z_t$. Let \(\hat y_t\), \(\hat X_t\), and \(\hat Z_t\) denote the least-squares projections of \(y_t\), \(X_t\), and \(Z_t\) onto the span of \(\cont_{1,t},\ldots,\cont_{\KK,t}\), respectively. Define
\begin{equation*}
 \SSS_\cont(r)=\frac1T\sum_{t=1}^{\lfloor Tr\rfloor}(Z_t-\hat Z_t)\{(y_t-\hat y_t)-\langle X_t-\hat X_t,\theta_0\rangle\}.
\end{equation*}
Under the regularity conditions stated in Section~\ref{sec_extension_add} of the Supplement, the same size and consistency results as in Corollary~\ref{cortest1} hold with $\SSS$ replaced by $\SSS_\cont$ and with the corresponding feasible critical value $\tilde{q}_{\alpha}$. Specifically, $  \mathbb{P}\{ \|\sqrt{T}g_w(\SSS_\cont)\|^2 > \tilde{q}_{\alpha}\} \to \alpha$ under $H_0$, while the probability converges to one under alternatives detectable after residualization, namely those satisfying \(C_{\cont,XZ}(\theta-\theta_0)\neq0\), where \(C_{\cont,XZ}=\mathbb E[X_{\cont,t}\otimes Z_{\cont,t}]\) is the cross-covariance operator of the population residuals of \(X_t\) and \(Z_t\). Thus, scalar covariates can be incorporated by residualizing before constructing the functional moment process. Detailed results are presented in Section~\ref{sec_extension_add} of the Supplement.


	\section{Monte Carlo studies} \label{sec_simulation} 
We investigate the finite-sample performance of the proposed tests through Monte Carlo experiments based on 2,000 replications. We first focus on $H_0: \theta = 0$ (i.e. $\Theta = 0$) for the without-intercept model \eqref{eqflm}; parallel results for the intercept model \eqref{eqflm2} are reported in Section~\ref{sec_simulation2}. We also report the empirical size of an exogeneity-based benchmark adapted from \citet{cardot2003testing} to illustrate the consequences of ignoring endogeneity, and compare our test with the procedure of \citet{seong2021functional} using their simulation design.
	Throughout the simulations, let $\{f_j\}_{j\ge1}$ denote the standard Fourier basis of $L^2[0,1]$, defined by
\(f_1(x)=1\), \(f_{2j}(x)=\sqrt{2}\sin(2\pi jx)\), and \(f_{2j+1}(x)=\sqrt{2}\cos(2\pi jx)\), \(j\ge1\).
	\subsection{Simulation design}\label{sec_simulation1}
	In this section, $y_t$, $X_t$, and $Z_t$ are assumed to be mean-zero random elements. 
	We consider the following endogenous functional linear model: for a sequence of real numbers \(\{a_j\}_{j\geq 1}\),
	\begin{equation*} 
		y_t = \Theta_T X_t + u_t, \quad X_t = \sum_{j=1}^{\infty} a_{j} (0.95)^{j-1} \langle X_{t-1}, f_j \rangle f_j + \varepsilon_{x,t}, \quad \varepsilon_{x,t} = \beta_{u}u_t+ e_t,
	\end{equation*}
	where  $u_t \sim_{\text{iid}} N(0,1)$, \(\{e_t\}_{t\geq 1}\) is an i.i.d.\  sequence of Brownian bridges, and $u_t$ and $e_s$ are independent for all $s$ and $t$. Note that, in this setup, unless $\beta_{u}= 0$, $X_t$ is endogenous. We consider $\beta_u \in \{0.1,0.25\}$; the case where $\beta_u=0$ is also considered to see how exogeneity-based test performs as endogeneity becomes severe (i.e., $\beta_u$ increases in this setup). 
We draw $a_j \sim_{\text{iid}} U[-0.2, 0.8]$, a distribution skewed toward positive values to reflect the positive autocorrelation commonly observed in practice. The local-to-zero slope operator is
\begin{equation*}
    \Theta_T(\cdot) = T^{-1/2}{\magni}\langle\cdot,\bar\psi\rangle,    \quad \magni \in \{0,5,10,20\},
\end{equation*}
where $\bar\psi \in \mathcal{H}$ is a unit-norm direction generated independently each replication as $\bar\psi = \psi/\|\psi\|$ with $\psi = \sum_{j=1}^3 \tilde a_j f_j$ and $\tilde a_j \sim_{\text{iid}} N(0,1)$. Randomizing both $a_j$ and $\bar\psi$ in each replication averages performance across a wide range of parameter configurations. Since $\varepsilon_{x,t} = \beta_u u_t + e_t$ contains $u_t$, the model is endogenous whenever $\beta_u \neq 0$. 

We consider an auxiliary variable $Z_t$ that is defined differently across two simulation settings, which we call the \emph{informative design} and the \emph{weakly-informative design}. Specifically, we let 
\begin{align*}
    Z_t = 
    \begin{cases}
        \sum_{j=1}^{50} b_{j} \langle X_{t-1}^{\circ} + e_t, f_j \rangle + v_t,  
            \quad & \text{in the informative design}, \\
        \sum_{j=1}^{50} b_{j}\mathbbm{1}_{\mathcal J}(j) \langle  X_{t-1}^{\circ} + e_t, f_j \rangle + v_t, 
            \quad & \text{in the weakly-informative design},
    \end{cases}
\end{align*}
	where $X_{t-1}^{\circ} =\sum_{j=1}^{\infty} a_{j} (0.95)^{j-1} \langle X_{t-1}, f_j \rangle f_j$, $\mathbbm{1}_{\mathcal J}(\cdot)$ denotes the indicator function $\mathbbm{1}\{\cdot \in \mathcal J\}$, $\mathcal J = \{m_1,m_2,\ldots,m_N\}$, $m_1,\ldots,m_N$ are integers randomly drawn from $\{1,\ldots,5\}$ without replacement, and $N$ is also a random integer drawn from $\{1,\ldots,4\}$. We generate $b_j$ independently from $U[0.8,1.2]$ and let $\{v_t\}_{t\geq 1}$ be small idiosyncratic functional errors associated with $Z_t$, which is set to an i.i.d.\  sequence of Brownian bridges scaled by 0.25. In both designs, $C_{XZ}$ is not injective, and $Z_t$ does not satisfy the standard conditions required for consistent estimation of $\theta$ in the endogenous FLM. The two designs differ, however, in how informative $Z_t$ is about $X_t$: in the weakly-informative design, the coefficient $b_j \mathbbm{1}_{\mathcal J}(j)$ on $\langle X_t,f_j\rangle$ is zero for most $j \in \{1,\ldots,50\}$, so $C_{XZ}v=0$ on a substantially larger subspace of $\mathcal H$ than in the informative design. By the identification-robust theory in Section~\ref{sec_test}, the proposed test remains valid under both designs but exhibits reduced power in the weakly-informative case.
    
	
The proposed test can be implemented with many choices of the weighting functional $g_w$ in \eqref{eqteststats}. We focus on the family $g_p \equiv g_w$ with $w(r)=r^p$,
$$g_p(f)=C_p\int_0^1 r^p f(r)dr, \quad p\geq 0,$$
where $C_p$ is the normalizing constant in \eqref{eqCw}. This family is useful for illustrating the local-power theory in Section~\ref{sec_iv_strength}: as $p$ increases, the weight increasingly concentrates near the endpoint \(r=1\), approaching the endpoint evaluation map. We write \(g_\infty(f)=f(1)\) and compare the tests based on $p=\infty$ with those based on several finite values of $p$.
	

For each replication, the critical value based on $\sum_{j=1}^{d_T}\hat\lambda_j\nu_j^2$ is approximated by $1{,}000$ Monte Carlo draws with $d_T = 5 + \lceil T^{0.333}\rceil$. We use the Bartlett or Parzen kernel for $\widehat\Lambda_{Zu}$, with bandwidth $h$ following \citet{Andrews1991} applied to the first five FPCA scores of $Z_t u_{0,t}$; more principled selection for functional time series (e.g., \citealp{rice2017plug}) may be used at higher computational cost. These choices satisfy the sufficient conditions of Theorem~\ref{thm4} at the sample sizes considered, and Section~\ref{sec_simulation2a} of the Supplement confirms that the results are insensitive to $d_T$ (as expected, since $\hat\lambda_j\to0$), with further discussion of the bandwidth and truncation conditions in Section~\ref{sec_supp_lrv_conditions}.

\subsubsection*{An exogeneity-based benchmark}
Before presenting the main results, we introduce an exogeneity-based benchmark adapted from \citet{cardot2003testing}, whose original procedure tests for no effect via the empirical cross-covariance between the functional predictor and the scalar response, with the null distribution approximated by permutations of $(X_t, y_t)$ under an i.i.d.\ exchangeability argument. Since permutations are inapplicable to our dependent setting, we adopt the cross-covariance 
idea within our framework by setting $Z_t = X_t$ and computing the critical value from the long-run covariance of $\{X_t u_{0,t}\}$ as in Section~\ref{sec_feasible}; the resulting statistic is $\|\sqrt{T} g_w(\SSS_X)\|^2$ with 
$\SSS_X(r) = T^{-1}\sum_{t=1}^{\lfloor Tr\rfloor} X_t\{y_t - \langle X_t,\theta_0\rangle\}$. With $g_p = g_\infty$ this statistic closely parallels that of \citet[Section~2]{cardot2003testing}, though the critical value approximation differs. Because the benchmark is a special case of our test with $Z_t = X_t$, its validity hinges 
on $\mathbb{E}[X_t\otimes u_t] = 0$; under endogeneity this condition fails, and the benchmark quantifies the cost of ignoring it.

Table~\ref{tab_cardot_size} reports the empirical size of the exogeneity-based benchmark for $p = \infty$. We include a control case in which the predictor is genuinely exogenous ($\beta_u = 0$): $X_t$ is generated with $\varepsilon_{x,t} = e_t$, removing the $u_t$ contamination. Under this control, the benchmark achieves the nominal $5\%$ size, confirming that the long-run covariance calibration is correctly specified. Under the endogenous designs ($\beta_u \in \{0.1, 0.25\}$), the benchmark over-rejects severely; empirical rejection rates exceed $80\%$ even at $T = 100$ and approach $100\%$ as $T$ grows. This severe over-rejection reflects the fact that, under endogeneity, $\mathbb{E}[X_t u_t]\neq 0$ induces a nonvanishing mean in the moment process $\{X_t u_{0,t}\}$ under $H_0$, which the long-run covariance calibration cannot absorb. The proposed test (Table~\ref{tab_without_intercept}) replaces $X_t$ with a valid auxiliary variable $Z_t$ and retains size control across all designs.
\begin{table}[t!]
\footnotesize
\caption{Empirical size (\%) of the exogeneity-based benchmark}
\label{tab_cardot_size}
\setlength{\tabcolsep}{7pt}
\renewcommand*{\arraystretch}{0.6}
\setlength{\aboverulesep}{0.1ex}
\setlength{\belowrulesep}{0.1ex}
\setlength{\cmidrulesep}{0.1ex}
\begin{tabular*}{\textwidth}{@{\extracolsep{\fill}}ccccccc}
\toprule
  & \multicolumn{2}{c}{$\beta_u=0$} & \multicolumn{2}{c}{$\beta_u=0.1$} & \multicolumn{2}{c}{$\beta_u=0.25$} \\
\cmidrule(lr){2-3} \cmidrule(lr){4-5} \cmidrule(lr){6-7}
$T$ & $\mathrm{k} = \text{Bartlett}$ & $\mathrm{k} = \text{Parzen}$ & $\mathrm{k} = \text{Bartlett}$ & $\mathrm{k} = \text{Parzen}$ & $\mathrm{k} = \text{Bartlett}$ & $\mathrm{k} = \text{Parzen}$ \\
\midrule\addlinespace[0.6ex]
100 & 4.8 & 4.9 & 80.9 & 80.7 & 100.0 & 100.0 \\
200 & 5.1 & 5.1 & 98.0 & 98.0 & 100.0 & 100.0 \\
400 & 4.9 & 4.9 & 100.0 & 100.0 & 100.0 & 100.0 \\
\bottomrule
\end{tabular*}\\
\vspace{-0.5em}
\noindent\scriptsize{Notes: Rejection rates are reported under $H_0:\Theta=0$ at the $5\%$ nominal level for the exogeneity-based benchmark obtained by setting $Z_t=X_t$. The statistic is computed using $g_\infty(f)=f(1)$. The case $\beta_u=0$ corresponds to the exogenous design.}
\end{table}

\subsubsection*{Simulation results for the proposed methods}
We report the simulation results in Table~\ref{tab_without_intercept}. The proposed tests have good size control across both designs. Rejection rates increase monotonically with $\magni$, consistent with Theorem~\ref{thm3}. The ranking across $p$ is consistent with the local-power theory of Section~\ref{sec_iv_strength}: tests with larger $p$ concentrate more weight near the endpoint and tend to have higher power, with $g_\infty(f)=f(1)$ dominating $g_0$ throughout. The differences across $p$ are modest, as expected given that the asymptotic efficiency gaps $D_w = \sqrt{2p+3}/\sqrt{2p+4}$ are close for adjacent values of $p$. The comparison between designs illustrates the identification-robust nature of the test. Size is similar in both settings, but power is uniformly lower under the weakly-informative design, reflecting the smaller effective range of $C_{XZ}$. Weaker relevance of $Z_t$ therefore reduces power without invalidating the null distribution, as predicted by Theorem~\ref{thm2}.

\begin{table}[t!]
\footnotesize 
\caption{Rejection rates under local-to-zero hypotheses (\%): without-intercept model} 
\label{tab_without_intercept}
\renewcommand*{\arraystretch}{0.5}
\setlength{\aboverulesep}{0.1ex}
\setlength{\belowrulesep}{0.1ex}
\setlength{\cmidrulesep}{0.1ex}
\begin{tabular*}{\textwidth}{@{\extracolsep{\fill}}cc cccc cccc cccc cccc@{}}
\toprule
& & \multicolumn{8}{c}{$\beta_u = 0.1$} & \multicolumn{8}{c}{$\beta_u = 0.25$} \\
\cmidrule(lr){3-10} \cmidrule(lr){11-18}
& & \multicolumn{4}{c}{$\mathrm{k} = \text{Bartlett}$}{\hspace{1.5pc}} & \multicolumn{4}{c}{$\mathrm{k} = \text{Parzen}$}{\hspace{1.5pc}} & \multicolumn{4}{c}{$\mathrm{k} = \text{Bartlett}$}{\hspace{1.5pc}} & \multicolumn{4}{c}{$\mathrm{k} = \text{Parzen}$} \\
\cmidrule(lr){3-6} \cmidrule(lr){7-10} \cmidrule(lr){11-14} \cmidrule(lr){15-18}
$T$ & $p \;\backslash\; \magni$ & 0 & 5 & 10 & 20 & 0 & 5 & 10 & 20 & 0 & 5 & 10 & 20 & 0 & 5 & 10 & 20 \\ 
\midrule\addlinespace[0.6ex]
& & \multicolumn{16}{c}{\textit{Panel A. Informative design}}\\[3pt]
100 & $\infty$ & 5.1 & 18.1 & 45.7 & 77.8 & 5.1 & 17.9 & 45.5 & 77.8 & 5.1 & 19.1 & 47.1 & 78.7 & 5.1 & 19.1 & 46.9 & 78.1 \\
     & 7.0 & 5.2 & 16.9 & 43.8 & 76.4 & 5.2 & 17.0 & 43.5 & 76.4 & 5.2 & 18.4 & 45.4 & 77.0 & 5.3 & 18.8 & 45.3 & 76.8 \\
     & 3.0 & 5.1 & 16.6 & 42.6 & 75.2 & 5.3 & 16.9 & 41.9 & 75.0 & 5.2 & 18.3 & 44.0 & 75.8 & 5.2 & 18.5 & 43.6 & 75.6 \\
     & 1.0 & 5.1 & 16.2 & 39.8 & 73.3 & 5.0 & 16.2 & 40.0 & 72.8 & 5.1 & 17.3 & 41.6 & 73.5 & 4.9 & 17.2 & 41.8 & 73.5 \\
     & 0.0 & 5.0 & 14.4 & 37.5 & 70.0 & 5.1 & 14.6 & 37.5 & 69.8 & 5.0 & 16.6 & 39.6 & 70.0 & 5.1 & 16.4 & 39.5 & 69.8 \\
\addlinespace[0.3ex]
200 & $\infty$ & 5.0 & 17.8 & 46.6 & 78.0 & 5.1 & 17.8 & 46.5 & 78.0 & 5.2 & 19.6 & 48.1 & 78.2 & 5.1 & 19.6 & 48.1 & 78.3 \\
     & 7.0 & 5.3 & 18.4 & 44.4 & 77.0 & 5.3 & 18.3 & 44.2 & 76.8 & 5.1 & 19.4 & 46.2 & 77.1 & 5.2 & 19.4 & 46.2 & 77.0 \\
     & 3.0 & 5.3 & 17.6 & 43.0 & 75.8 & 5.3 & 17.3 & 42.8 & 75.6 & 5.4 & 19.2 & 44.9 & 76.2 & 5.3 & 18.9 & 44.6 & 76.0 \\
     & 1.0 & 5.1 & 17.1 & 40.9 & 74.4 & 5.2 & 17.0 & 41.0 & 74.2 & 5.3 & 18.3 & 43.0 & 74.7 & 5.2 & 18.1 & 43.2 & 74.6 \\
     & 0.0 & 5.0 & 15.6 & 39.0 & 72.6 & 4.9 & 15.2 & 38.8 & 72.4 & 5.2 & 17.2 & 41.0 & 72.8 & 5.2 & 17.1 & 41.0 & 72.9 \\
\addlinespace[0.3ex]
400 & $\infty$ & 5.2 & 18.4 & 46.2 & 79.3 & 5.2 & 18.4 & 46.2 & 79.1 & 5.2 & 19.7 & 47.9 & 79.2 & 5.4 & 19.9 & 48.1 & 79.2 \\
     & 7.0 & 5.1 & 16.9 & 45.3 & 78.3 & 5.2 & 17.0 & 45.2 & 78.0 & 5.8 & 18.4 & 47.1 & 78.2 & 5.8 & 18.4 & 47.1 & 78.3 \\
     & 3.0 & 5.4 & 16.9 & 44.2 & 77.5 & 5.3 & 16.9 & 44.5 & 77.3 & 5.5 & 18.2 & 46.5 & 77.5 & 5.5 & 18.2 & 46.2 & 77.3 \\
     & 1.0 & 5.6 & 16.4 & 42.1 & 75.8 & 5.4 & 16.4 & 42.2 & 75.9 & 5.5 & 17.8 & 44.1 & 75.7 & 5.5 & 17.8 & 44.4 & 75.6 \\
     & 0.0 & 5.0 & 15.4 & 39.4 & 72.9 & 5.1 & 15.4 & 39.4 & 73.0 & 5.0 & 16.4 & 41.5 & 73.6 & 5.0 & 16.6 & 41.4 & 73.8 \\
\midrule\addlinespace[0.6ex]
& & \multicolumn{16}{c}{\textit{Panel B. Weakly informative design}}\\[3pt]
100 & $\infty$ & 5.6 & 13.4 & 34.1 & 61.5 & 5.7 & 13.1 & 34.1 & 61.1 & 5.6 & 14.5 & 34.6 & 61.8 & 5.6 & 14.3 & 34.7 & 61.6 \\
     & 7.0 & 5.6 & 13.2 & 32.9 & 60.5 & 5.6 & 13.4 & 32.9 & 60.1 & 5.4 & 14.6 & 33.8 & 60.8 & 5.4 & 14.7 & 33.8 & 60.5 \\
     & 3.0 & 5.8 & 12.8 & 31.1 & 59.4 & 5.8 & 13.1 & 31.1 & 59.2 & 5.6 & 14.1 & 31.9 & 59.9 & 5.8 & 14.2 & 32.4 & 59.4 \\
     & 1.0 & 6.1 & 12.6 & 29.4 & 58.0 & 6.3 & 12.6 & 29.4 & 57.6 & 5.8 & 13.7 & 31.1 & 57.9 & 6.0 & 13.8 & 30.8 & 57.9 \\
     & 0.0 & 5.9 & 12.2 & 27.8 & 55.4 & 5.9 & 12.2 & 27.7 & 55.3 & 5.5 & 13.2 & 28.6 & 54.9 & 5.7 & 13.2 & 28.7 & 54.8 \\
\addlinespace[0.3ex]
200 & $\infty$ & 5.2 & 14.0 & 34.4 & 62.1 & 5.3 & 13.8 & 34.4 & 61.9 & 5.4 & 14.6 & 35.1 & 61.8 & 5.5 & 14.4 & 35.1 & 61.8 \\
     & 7.0 & 5.3 & 13.1 & 33.2 & 60.3 & 5.4 & 13.1 & 33.4 & 60.1 & 5.4 & 14.3 & 34.4 & 60.3 & 5.4 & 14.0 & 34.2 & 60.2 \\
     & 3.0 & 5.4 & 13.2 & 32.5 & 59.1 & 5.3 & 13.2 & 32.5 & 59.0 & 5.4 & 14.4 & 33.1 & 59.4 & 5.3 & 14.4 & 33.0 & 59.3 \\
     & 1.0 & 5.3 & 12.8 & 30.8 & 58.5 & 5.2 & 12.8 & 30.8 & 58.5 & 5.1 & 13.8 & 31.7 & 58.4 & 5.1 & 13.8 & 31.4 & 58.2 \\
     & 0.0 & 5.1 & 11.9 & 28.6 & 56.3 & 5.0 & 11.8 & 28.6 & 56.2 & 5.2 & 12.8 & 30.0 & 56.2 & 5.1 & 12.7 & 29.9 & 55.8 \\
\addlinespace[0.3ex]
400 & $\infty$ & 4.9 & 14.1 & 33.8 & 61.8 & 5.0 & 14.2 & 33.7 & 61.8 & 4.9 & 14.9 & 34.8 & 62.2 & 5.1 & 14.9 & 34.8 & 62.1 \\
     & 7.0 & 5.1 & 12.6 & 33.1 & 60.4 & 5.1 & 12.8 & 33.1 & 60.5 & 5.1 & 13.7 & 34.1 & 61.0 & 5.1 & 13.8 & 34.0 & 61.0 \\
     & 3.0 & 5.1 & 12.4 & 32.2 & 59.6 & 5.2 & 12.4 & 32.3 & 59.6 & 5.2 & 13.4 & 33.1 & 60.1 & 5.4 & 13.4 & 33.1 & 60.0 \\
     & 1.0 & 5.4 & 12.9 & 30.9 & 57.8 & 5.3 & 13.0 & 30.9 & 57.8 & 5.3 & 13.9 & 31.6 & 59.2 & 5.3 & 13.8 & 31.4 & 59.1 \\
     & 0.0 & 5.2 & 11.8 & 28.8 & 55.6 & 5.4 & 12.0 & 29.3 & 55.6 & 5.3 & 12.3 & 30.0 & 57.0 & 5.4 & 12.3 & 30.1 & 56.9 \\
\bottomrule
\end{tabular*} 
\vspace{-0.5em}
{\scriptsize Notes: The table reports rejection rates for the hypotheses $H_1: \Theta = \magni/\sqrt{T}\, \langle \cdot,\bar\psi \rangle$ with sample size $T$. The nominal level is $5\%$. Test statistics are computed using $g_p(f)=C_p \int_0^1 r^p f(r)\,dr$ for $p<\infty$ and $g_{\infty}(f)=f(1)$.}
\end{table}

\subsubsection{Comparison with an existing test based on consistent estimation}\label{sec_simulation3}
The test of \citet{seong2021functional}, hereafter called the \emph{functional IV test}, is developed for the setting in which both the regressor $X_t$ and the response $Y_t$ are functional. Specifically, they consider a scalar outcome $y_t = \langle Y_t, \zeta\rangle$ for some $\zeta \in \mathcal H$, where the functional response satisfies $Y_t = \mu_Y + A X_t + U_t$ for a linear operator $A: \mathcal H \to \mathcal H$ and a mean-zero functional error $U_t$. Taking the inner product with $\zeta$, $y_t$ satisfies \eqref{eqflm2} with intercept $\mu = \langle \mu_Y,\zeta\rangle$, slope $\theta = A^*\zeta$, and error $u_t = \langle U_t,\zeta\rangle$. Testing $H_0:\theta = \theta_0$ in this framework thus amounts to testing a restriction on $A$. The functional IV test requires a consistent estimator of $A$, constructed through a functional analogue of the classical IV estimator. Consistency of this estimator requires strong relevance conditions on $Z_t$, such as injectivity of $C_{XZ}$. Our proposed test targets the same hypothesis directly in the scalar FLM \eqref{eqflm2}, without estimating $A$.
	 
We adopt the simulation DGP of \citet[Section~5.2]{seong2021functional}, which modifies the DGP of \citet{Benatia2017} and features a strong linear relationship between $Z_t$ and $X_t$; details are given in Section~\ref{sec_app_simulationseong} of the Supplement. We consider two auxiliary-variable designs. The \emph{informative design} uses the auxiliary variable exactly as in the simulation study of \citet{seong2021functional}, so that the comparison is carried out on the design for which the functional IV test was originally calibrated (the auxiliary variable is specified in \eqref{eqcompresimul} of the Supplement). The \emph{weakly informative design} replaces $Z_t$ with a weaker auxiliary variable
\begin{equation}\label{eqztuninfo}
Z^\circ_t = \langle Z_t, f_2\rangle f_2 + \tilde\eta_t,
\end{equation}
where $\{\tilde\eta_t\}$ is an i.i.d.\ sequence of Brownian bridges. We note that injectivity of $C_{XZ}$ is not theoretically guaranteed even in the informative design; the weakly informative design simply makes the lack of strong relevance more pronounced.
Rejection rates are computed under alternatives $H_{1,\magni}:\theta=\theta_0+\magni\bar{\psi}$, where $\theta_0$ and $\bar{\psi}$ are generated as in \citet[Section~S7]{seong2021functional}. Following their design, the perturbation is normalized so that $\|\magni\bar{\psi}\|^2=\magni^2$; see Section~\ref{sec_app_simulationseong} for details. Table~\ref{tab3} reports selected values $\magni^2=0,0.05,0.10,0.15$, comparing the functional IV test with our proposed test based on $p=\infty$ and the Parzen kernel. In the informative design (the competitor's own calibration) the two tests are broadly comparable, so the proposed test does not sacrifice performance where the functional IV test is designed to work. In the weakly informative design, the functional IV test becomes severely undersized and loses power, reflecting the failure of the relevance and injectivity conditions on which its consistency relies, whereas the proposed test retains reliable size and power. The point of the comparison is therefore not that one test dominates the other on a common design, but that the proposed test remains valid precisely where the conditions underlying estimator-based inference cannot be ensured.


\begin{table}[t!] 
\footnotesize 
\caption{Comparison with the existing test of  \cite{seong2021functional}, rejection rates (\%)} 
\label{tab3}
\renewcommand*{\arraystretch}{0.5}
\setlength{\aboverulesep}{0.1ex}
\setlength{\belowrulesep}{0.1ex}
\setlength{\cmidrulesep}{0.1ex}
\begin{tabular*}{\textwidth}{@{\extracolsep{\fill}}c cccc cccc cccc cccc@{}} 
\toprule
& \multicolumn{8}{c}{Informative design (with $Z_t$)} & \multicolumn{8}{c}{Weakly informative design (with ${Z}^{\circ}_t$)} \\ 
\cmidrule(lr){2-9} \cmidrule(lr){10-17}
Test & \multicolumn{4}{c}{Proposed} & \multicolumn{4}{c}{Functional IV test} & \multicolumn{4}{c}{Proposed} & \multicolumn{4}{c}{Functional IV test} \\
\cmidrule(lr){2-5} \cmidrule(lr){6-9} \cmidrule(lr){10-13} \cmidrule(lr){14-17}
$T \;\backslash\; \magni^2$ & 0 & 0.05 & 0.1 & 0.15 & 0 & 0.05 & 0.1 & 0.15 & 0 & 0.05 & 0.1 & 0.15 & 0 & 0.05 & 0.1 & 0.15 \\ 
\midrule\addlinespace[0.5ex]
100 & 5.1 & 61.8 & 78.1 & 84.4 & 3.9 & 54.8 & 70.8 & 78.5 & 5.6 & 38.0 & 48.2 & 52.2 & 0.9 & 24.9 & 34.4 & 41.5 \\ 
200 & 5.0 & 78.1 & 89.6 & 94.0 & 6.7 & 79.4 & 90.3 & 94.9 & 5.1 & 47.4 & 57.6 & 62.6 & 0.9 & 32.7 & 43.0 & 51.2 \\ 
400 & 4.2 & 89.6 & 96.9 & 98.7 & 5.2 & 90.7 & 97.0 & 99.2 & 4.8 & 59.2 & 67.8 & 72.5 & 0.5 & 42.0 & 54.7 & 61.7 \\ 
\bottomrule
\end{tabular*} 
\end{table}

	\section{Empirical application} \label{sec_empirical}
Climate change affects not only mean temperatures but also the entire temperature distribution, especially its tails. Since extreme temperature events are primary drivers of electricity demand for cooling and heating, this motivates using the full temperature distribution, rather than scalar summaries such as the mean and variance, as a predictor of residential electricity demand. We apply our tests to the empirical framework of \citet{namseo2025} and provide a formal test of whether the temperature distribution contributes explanatory power beyond its scalar moment summaries.
	
The empirical model considered by \citet{namseo2025} takes the form of \eqref{eqflm2}: $y_t = \mu + \langle X_t, \theta\rangle + u_t,$ where $y_t$ is detrended monthly residential electricity demand and $X_t$ is a distributional predictor based on a transformation of the monthly temperature PDF $p_t$, estimated from raw temperature data. A similar analysis was earlier implemented by \citet{chang2016new}. The functional formulation allows demand to respond to the entire temperature distribution, including its tails, rather than only to scalar summaries. Figure~\ref{fig:temp_pdf_qf} displays the monthly temperature PDFs and corresponding quantile functions (QFs) used in the analysis.

To test whether the full distribution matters beyond its moments, we augment the model with $\KK$ standardized moments and seasonal dummies as scalar controls:
\begin{equation}\label{eq_empiric}
    y_t = \sum_{j=1}^{\KK}\beta_j\cont_{j,t} + \sum_{j=1}^{12}\gamma_j d_{j,t} + \langle X_t,\theta\rangle + u_t,
\end{equation}
where $d_{j,t}$ are monthly dummies and $\cont_{j,t}$ is the standardized $j$-th moment of the PDF $p_t$: with $\cont_{1,t} = \int_a^b s\,p_t(s)ds$ and $\cont_{2,t} = \sqrt{\int_a^b (s-\cont_{1,t})^2 p_t(s)ds}$, the higher-order moments are \begin{equation*}
    \cont_{j,t} ={\int_a^b(s-\cont_{1,t})^j p_t(s)ds} / {\cont_{2,t}^{\,j}}, \quad j \geq 3,
\end{equation*}
so that $\cont_{1,t}$ through $\cont_{4,t}$ are the mean, standard deviation, skewness, and kurtosis. Seasonal variation is central to this application, motivating the inclusion of monthly dummies in \eqref{eq_empiric} (see \citealp{namseo2025}). Using the scalar-covariate extension in Section~\ref{sec_model_covariates}, we test
\begin{equation*}
    H_0: \theta = 0 \quad\text{against}\quad H_1: \theta \neq 0.
\end{equation*}
Rejecting \(H_0\) indicates that the temperature distribution contains information about electricity demand beyond its first \(\KK\) standardized moments and seasonal effects.

We consider six distributional predictors \(X_t\), all based on \(p_t\): the centered-log-ratio (CLR), \(\log p_t(s) - \int_a^b\log p_t(r)dr\); the log-hazard-rate (LHR), \(\log p_t(s) - \log(1-P_t(s))\); the log-reversed-hazard-rate (LRHR), \(\log p_t(s) - \log P_t(s)\); the logit CDF (LCDF), \(\log P_t(s) - \log(1-P_t(s))\); the raw PDF; and the quantile function (QF), \(\inf\{x:\int_a^x p_t(r)dr \geq s\}\), where \(P_t(r) = \int_a^r p_t(s)ds\). CLR, LHR, LRHR, and QF were used by \citet{namseo2025}, LCDF by \citet{shang2025constructing}, and PDF by \citet{chang2016new}; see \citet{Egozcue2006} and \citet{petersen2016} for theoretical background.

For the CLR, LHR, LRHR, LCDF, and PDF specifications, \(\theta(s)\) can be interpreted as the marginal effect of changes in the occurrence of temperature level (s), so tail values capture sensitivity to heat or cold waves. The QF specification is different: \(\theta(s)\) measures sensitivity to shifts in the \(s\)-quantile of temperature. As discussed by \citet{namseo2025}, this distinction can lead to different empirical conclusions.

Since the temperature distribution is estimated from discrete samples, measurement error in \(X_t\) may introduce endogeneity. Following \citet{namseo2025} and \citet{Chen_et_al_2020}, we use the lagged predictor \(Z_t=X_{t-1}\) as the auxiliary variable, which is valid when measurement errors are serially uncorrelated. We also consider \(Z_t=\sum_{j=1}^{\ell}(0.5)^{j-1}X_{t-j}\) for \(\ell=2,3\), and \(Z_t=X_t\), which benchmarks the effect of ignoring measurement error. We use \(\KK=1,\ldots,4\), the tuning parameters from Section~\ref{sec_simulation}, and both Bartlett and Parzen kernels. While the validity of these lagged auxiliary variables cannot be verified directly, the conclusions are stable across the valid choices considered ($Z_t = X_{t-1}$ and the geometric lags) and change only when no endogeneity correction is applied ($Z_t = X_t$), which is consistent with the measurement-error interpretation. Since the critical value distribution is computed only once for each empirical specification, \(p\)-values are approximated using 500,000 Monte Carlo draws.

Table~\ref{tabemp1} reports approximate \(p\)-values for testing \(H_0:\theta=0\). For CLR, LRHR, LCDF, and PDF, the null is rejected at the 2\% level across all choices of \(\KK\) and \(Z_t\), including \(Z_t=X_t\). These results provide robust evidence that the full temperature distribution contributes explanatory power beyond scalar summaries, supporting the distributional models of \citet{namseo2025} and \citet{chang2016new}.

The LHR specification is especially instructive. Since LHR emphasizes the upper tail by scaling the density relative to the survival function, it is the primary specification used by \citet{namseo2025} to study high-temperature effects. With \(Z_t=X_{t-1}\) or the geometrically weighted lagged predictors, all LHR \(p\)-values are below \(5\%\), and most are below \(2\%\). By contrast, when the endogeneity correction is dropped by setting \(Z_t=X_t\), the LHR \(p\)-values rise to about \(10\%\)--\(33\%\), so the null cannot be rejected at the \(10\%\) level. This pattern is consistent with measurement error inducing bias in tail-sensitive specifications: ignoring endogeneity can make high-temperature distributional effects appear insignificant, whereas the auxiliary-variable approach restores significance.

The QF specification stands apart: its \(p\)-values are large in all configurations, especially with lagged auxiliary variables, indicating that QFs add little explanatory power once scalar controls and seasonal effects are included. This likely reflects the distinct interpretation of QF models rather than a failure of the distributional approach. As \citet{namseo2025} discuss, seasonal variation in temperature quantiles can attenuate \(\theta(s)\) in the QF model; their paper similarly finds slope parameters close to zero for the QF specification.

Overall, the empirical evidence supports the functional linear model for density-based temperature representations. The conclusions for CLR, LRHR, LCDF, and PDF are robust across auxiliary-variable choices, while LHR highlights the importance of correcting for measurement-error endogeneity in tail-sensitive inference.

\begin{figure}[t!]
\centering
\caption{Monthly temperature density (left) and quantile (right) functions}
\includegraphics[width=0.85\textwidth,height=0.17\textheight]{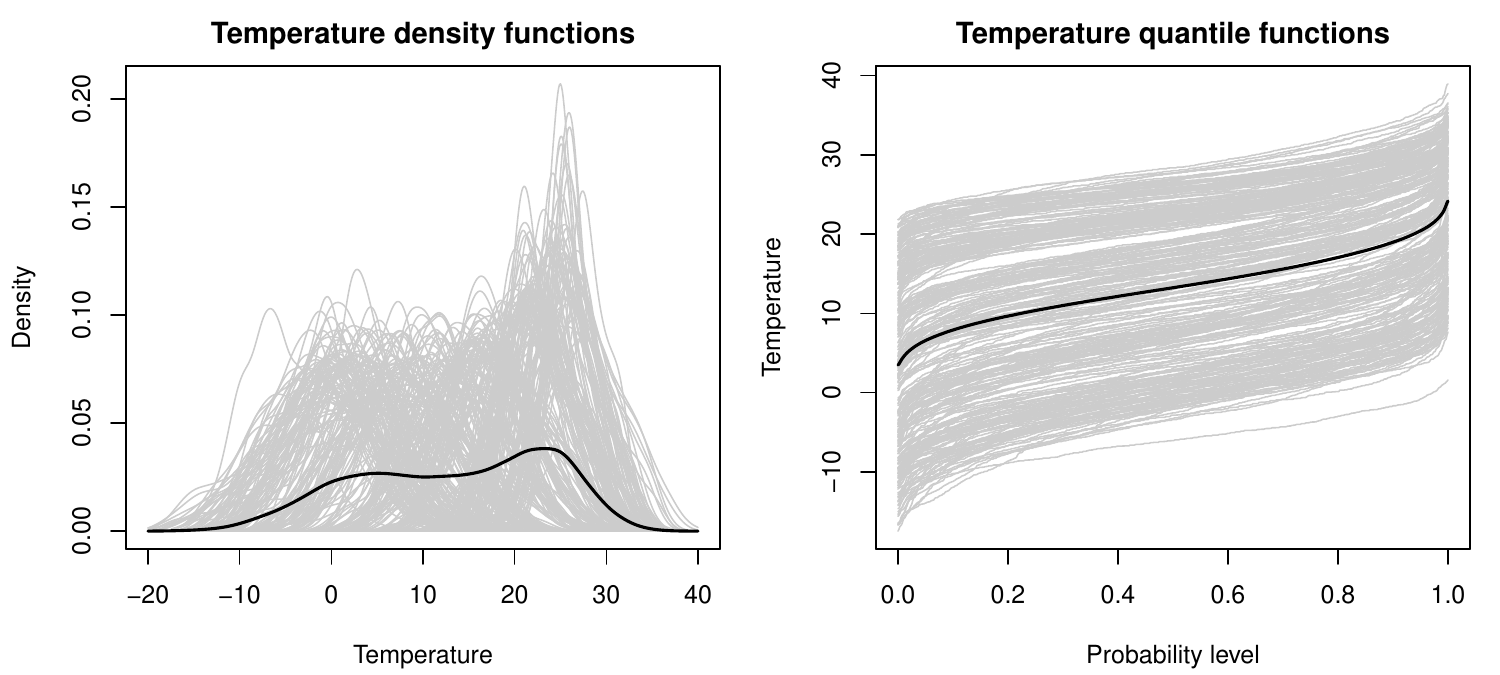}\\
\vspace{-0.5em}
{\scriptsize Notes: Grey curves are monthly observations, and the solid black curve is the pointwise average.}
\label{fig:temp_pdf_qf}
\end{figure}
\begin{table}[t!]
\centering
\footnotesize
\caption{Testing association of electricity demand and temperature, $p$-values (\%)}
\label{tabemp1}
\renewcommand*{\arraystretch}{0.5}
\setlength{\aboverulesep}{0.1ex}
\setlength{\belowrulesep}{0.1ex}
\setlength{\cmidrulesep}{0.1ex}
\begin{tabular*}{\textwidth}{@{\extracolsep{\fill}}c rrrrrr c rrrrrr@{}}
\toprule
 & \multicolumn{6}{c}{Bartlett} & & \multicolumn{6}{c}{Parzen} \\
\cmidrule(lr){2-7} \cmidrule(lr){9-14}
$\KK \;\backslash\; X_t$ & CLR & LHR & LRHR & LCDF & PDF & QF & & CLR & LHR & LRHR & LCDF & PDF & QF \\
\midrule
\addlinespace[0.6ex]
\multicolumn{14}{c}{\textit{Case $Z_t = X_{t-1}$}} \\
\addlinespace[0.3ex]
1 & 0.00 & 0.79 & 0.16 & 0.71 & 0.01 & 85.21 &  & 0.00 & 0.69 & 0.17 & 0.78 & 0.01 & 85.07 \\
2 & 0.00 & 0.65 & 0.14 & 0.65 & 0.00 & 83.15 &  & 0.00 & 0.63 & 0.16 & 0.73 & 0.01 & 82.99 \\
3 & 0.00 & 0.57 & 0.17 & 0.67 & 0.01 & 88.57 &  & 0.00 & 0.56 & 0.20 & 0.73 & 0.01 & 88.40 \\
4 & 0.00 & 0.56 & 0.15 & 0.62 & 0.01 & 89.36 &  & 0.00 & 0.55 & 0.18 & 0.68 & 0.01 & 89.19 \\
\addlinespace[0.6ex]
\multicolumn{14}{c}{\textit{Case $Z_t = \sum_{j=1}^{\ell} (0.5)^{j-1} X_{t-j}$ with $\ell=2$}} \\
\addlinespace[0.3ex]
1 & 0.01 & 0.81 & 0.40 & 1.29 & 0.01 & 84.94 &  & 0.01 & 0.82 & 0.44 & 1.41 & 0.01 & 84.39 \\
2 & 0.01 & 0.67 & 0.37 & 1.17 & 0.01 & 84.21 &  & 0.01 & 0.67 & 0.41 & 1.27 & 0.01 & 83.57 \\
3 & 0.01 & 0.60 & 0.41 & 1.19 & 0.01 & 89.46 &  & 0.01 & 0.64 & 0.46 & 1.30 & 0.02 & 89.00 \\
4 & 0.01 & 0.58 & 0.40 & 1.13 & 0.01 & 90.23 &  & 0.01 & 0.62 & 0.44 & 1.24 & 0.02 & 90.25 \\
\addlinespace[0.6ex]
\multicolumn{14}{c}{\textit{Case $Z_t = \sum_{j=1}^{\ell} (0.5)^{j-1} X_{t-j}$ with $\ell=3$}} \\
\addlinespace[0.3ex]
1 & 0.01 & 2.07 & 0.28 & 1.57 & 0.02 & 78.35 &  & 0.01 & 2.20 & 0.30 & 1.71 & 0.02 & 78.90 \\
2 & 0.01 & 1.72 & 0.25 & 1.41 & 0.01 & 77.62 &  & 0.01 & 1.81 & 0.27 & 1.54 & 0.02 & 78.09 \\
3 & 0.01 & 1.75 & 0.27 & 1.45 & 0.02 & 82.70 &  & 0.01 & 1.89 & 0.29 & 1.58 & 0.02 & 83.14 \\
4 & 0.01 & 1.72 & 0.26 & 1.35 & 0.02 & 83.23 &  & 0.01 & 1.85 & 0.28 & 1.48 & 0.02 & 83.66 \\
\addlinespace[0.6ex]
\multicolumn{14}{c}{\textit{Case $Z_t = X_t$ (no correction for endogeneity)}} \\
\addlinespace[0.3ex]
1 & 0.10 & 32.72 & 0.06 & 1.06 & 0.09 & 34.86 &  & 0.10 & 31.72 & 0.06 & 1.02 & 0.10 & 35.55 \\
2 & 0.14 & 10.59 & 0.08 & 0.18 & 0.07 & 38.72 &  & 0.14 & 10.10 & 0.08 & 0.19 & 0.08 & 39.05 \\
3 & 0.26 & 11.85 & 0.13 & 0.32 & 0.08 & 36.12 &  & 0.24 & 11.46 & 0.13 & 0.33 & 0.09 & 36.46 \\
4 & 0.23 & 12.58 & 0.12 & 0.36 & 0.07 & 18.55 &  & 0.23 & 12.23 & 0.12 & 0.36 & 0.08 & 18.31 \\
\bottomrule
\end{tabular*}
\vspace{-0.3em}
\begin{flushleft}
\footnotesize
Notes: The table reports approximate $p$-values (\%) for $H_0: \theta = 0$ in model~\eqref{eq_empiric}, computed via 500,000 Monte Carlo draws of $\sum_{j=1}^{d_T} \hat\lambda_{\cont,j}\nu_j^2$. $\KK$ is the number of standardized moments included as scalar controls.
\end{flushleft}
\end{table}

\section{Conclusion}\label{sec_conclusion}
The paper develops identification-robust tests for the slope function in functional linear regression with a potentially endogenous regressor. The tests are based on a functional moment condition induced by an auxiliary variable and require neither dimension reduction nor consistent slope estimation, remaining asymptotically valid under the null regardless of auxiliary-variable relevance. We derive the asymptotic null distribution, consistency, local power, and the locally optimal test within a weighted class of statistics. A feasible implementation via kernel long-run covariance estimation preserves these properties. Simulations confirm reliable finite-sample performance and show that standard exogeneity-based or estimator-based tests can break down when their assumptions fail. An empirical application to electricity demand and temperature distributions in South Korea supports the distributional approach to studying climate effects on energy demand.

\newpage 
	\appendix
	
\section*{Supplement}

\section{Additional remarks on the main theoretical results}\label{sec_supp_theory_remarks}
\subsection{Duality between \(\mathcal H\)-norm and operator-norm testing}\label{sec_supp_duality}
We elaborate on the duality between inference on $\theta \in \mathcal H$ via $\SSS$ and inference on $\Theta:\mathcal H\to\mathbb R$ in the operator norm, briefly mentioned in Section~\ref{sec_test}.

Since each $\theta \in \mathcal{H}$ corresponds to a linear map $\Theta:\mathcal{H}\to\mathbb{R}$ via \eqref{eqflm}, the process $\SSS$ in \eqref{eqstat1} identifies a process $\RRR$ of linear maps, where for each $r$, $\RRR(r) = R_{\fll{Tr}}: \mathcal{H}\to\mathbb{R}$ is given by
\begin{equation}
    R_{\fll{Tr}} = \frac{1}{T}\sum_{t=1}^{\fll{Tr}} \left(Z_t \otimes y_t  - Z_t \otimes \Theta_0 X_t\right),
\end{equation}
with $\Theta_0(v) = \langle \theta_0, v\rangle$ for every $v \in \mathcal{H}$. Moreover, under \ref{propg1} and \ref{propg2}, Lemma~\ref{lem0} (see Section \ref{sec_app1}) establishes that  
$$\|g(\RRR)\|_{\op} = \|g(\SSS)\|.$$
The map $g(\RRR):\mathcal{H}\to\mathbb{R}$ is continuous and linear, and $\|\cdot\|_{\op}$ defines the dual norm for such maps (see, e.g., \citealp{Conway1994}, p.\ 74).

These results show that the tests developed in the paper using $\|g(\sqrt{T}\,\SSS)\|$ can be understood as those based on $\|g(\sqrt{T}\,\RRR)\|_{\op}$. This duality highlights a fundamental coherence in our testing framework. While the operator norm is the canonical metric for evaluating $\Theta$, computing its supremum over an infinite-dimensional space is often empirically intractable. The duality resolves this difficulty: testing in the computationally straightforward $\mathcal{H}$-norm is equivalent to testing $\Theta$ under the operator norm. Consequently, our inferential procedure remains theoretically grounded in the natural operator geometry while being readily implementable.

\subsection{Regularity conditions for feasible critical values}\label{sec_supp_lrv_conditions}

This subsection provides additional discussion of the bandwidth, truncation, and moment conditions used for the feasible critical values in Section~\ref{sec_feasible}.

\begin{remark}[Bandwidth and truncation conditions]\label{rem_supp_bandwidth}\normalfont
Assumption~\ref{assum2}\ref{assum2aa} implies
\begin{equation*}
    h^{-\varphi}=O(\sqrt{h/T}).
\end{equation*}
This simplifies the eigenvalue approximation error in Theorem~\ref{thm4} and leads to the sufficient requirement \(d_T=o(\sqrt{T/h})\). The conclusion of Theorem~\ref{thm4} can also be established without this rate restriction. A slight modification of the proof shows that it is enough to assume
\begin{equation*}
    \sqrt{h/T}+h^{-\varphi}\to0
    \quad\text{and}\quad
    d_T(\sqrt{h/T}+h^{-\varphi})\to0.
\end{equation*}
Under these conditions, the feasible critical value based on the estimated eigenvalues has the same asymptotic validity as in Theorem~\ref{thm4}. If \(h=cT^a\) for some \(c>0\) and \(a\in(0,1)\), then the condition becomes
\begin{equation*}
    d_T\{T^{(a-1)/2}+T^{-a}\}\to0
\end{equation*}
for the Bartlett kernel, and
\begin{equation*}
    d_T\{T^{(a-1)/2}+T^{-2a}\}\to0
\end{equation*}
for the Parzen and Tukey--Hanning kernels.
\end{remark}

\begin{remark}[Moment and dependence conditions]\label{rem2}\normalfont
In the proof of Theorem~\ref{thm4}, Assumption~\ref{assum2}\ref{assum2c} is used only to show that
\begin{equation*}
    \widehat{\Lambda}_{Zu}-\mathbb{E}[\widehat{\Lambda}_{Zu}]
    =
    O_p(\sqrt{h/T}),
\end{equation*}
which is a standard stochastic bound for kernel long-run covariance estimators of \(L^p\)-\(m\)-approximable sequences; see, e.g., \cite{BERKES2016150} and \cite{rice2017plug}. Other sufficient conditions can replace Assumption~\ref{assum2}\ref{assum2c} without changing the results. Examples are given in Theorem~2.2 of \cite{BERKES2016150} and Lemma~4.1 of \cite{rice2017plug}.
\end{remark}

	\section{Extension to the model with an intercept}\label{sec_model_intercept_supp}
\subsection{Test statistics and asymptotic properties}
We consider the intercept model introduced in Section~\ref{sec_model_intercept}. In addition to the notation introduced therein, we let 
$\Lambda_{c,Zu}$ be defined as $\Lambda_{Zu}$ in \eqref{eqlrv1} but replacing $Z_t$ with $Z_{c,t}$, and let $\{\lambda_{c,j}\}_{j\geq 1}$ be the eigenvalues of $\Lambda_{c,Zu}$ in decreasing order. Moreover, we also define $\widehat{\Lambda}_{c,Zu}$ as in $\widehat{\Lambda}_{Zu}$ (see \eqref{eqsamplelrv}), but replacing $Z_t$ (resp.\ $u_{0,t}$) with $Z_t-\overline{Z}_T$ (resp.\ ${u}_{0,t} - T^{-1}\sum_{t=1}^T {u}_{0,t}$) and let $\{\widehat{\lambda}_{c,j}\}_{j \geq 1}$ be the eigenvalues of $\widehat{\Lambda}_{c,Zu}$ in decreasing order. In the subsequent discussion, $\tilde{q}_{\alpha}$ denotes the $(1-\alpha)$-quantile of $\sum_{j=1}^{d_T} \hat{\lambda}_{c,j} \nu_j^2$, where $\nu_j \sim_{\text{iid}} N(0,1)$. 
		
		As in \eqref{eqhypo0}, we let the hypotheses of interest be collectively formulated as follows: for $\magni \geq 0$,  
		\begin{equation}\label{eqhypo1}
H_{\magni}: \theta = \theta_0 + \magni\bar\psi, \quad 
\|\bar\psi\|^2 = \gamma_{\rr}^2 + \gamma_{\kk}^2 = 1.
		\end{equation}

	We employ the following straightforward adaptations of Assumptions \ref{assum1} and \ref{assum2}:
		\begin{assumpA} \label{assum1add}\begin{enumerate*}[(i)]
				\item \eqref{eqflm2} holds, \item $\mathbb{E}[Z_{c,t}\otimes u_t] =0$, and \item  the variables $X_{c,t}$, $Z_{c,t}$ and $u_{t}$ are $L^4$-$m$-approximable sequences in the relevant spaces.
			\end{enumerate*}
		\end{assumpA}
		\begin{assumpA} \label{assum2add} Assumption \ref{assum2} holds when $Z_t$ is replaced by $Z_{c,t} = Z_t-\mu_Z$. 
		\end{assumpA}
		Then the following presents the desired extension of the test:

\begin{theorem}\label{thm1add}
Suppose that Assumptions~\ref{assum1add} and~\ref{assumw}  hold. Let $q_\alpha$ be the $(1-\alpha)$-quantile of $\sum_{j=1}^{\infty}\lambda_{c,j}\nu_j^2$, where $\{\nu_j\}_{j\ge1}$ are i.i.d.\  standard normal random variables. Then
	\begin{align} \label{eqinfeasadd}
					\mathbb{P}\{ \|\sqrt{T}g_w(\SSS_c)\|^2 > q_\alpha\} \begin{cases}
						\to \alpha \quad &\text{if $H_{0}$ holds,}\\
						\to 1 \quad &\text{if $H_{\magni}$ holds with $\magni>0$ and $C_{XZ}\psi\neq 0$.}\\
					\end{cases}
\end{align}
If Assumption~\ref{assum2add} also holds and $d_T\to\infty$ satisfies 
$d_T=o(\sqrt{T/h})$, then the same conclusions hold when $q_\alpha$ is replaced by 
the feasible critical value $\tilde q_\alpha$.
\end{theorem}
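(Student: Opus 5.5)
The plan is to reduce the intercept case to the mean-zero analysis of Theorems~\ref{thm1}, \ref{thm2}, Corollary~\ref{cortest1} and Theorem~\ref{thm4}. The reduction rests on an exact algebraic decomposition of the centered moment process. Write $\ddd=\theta-\theta_0$. By \eqref{eqflm2}, $y_t-\overline y_T-\langle X_t-\overline X_T,\theta_0\rangle = \langle X_{c,t}-\overline X_{c},\ddd\rangle + u_t-\overline u_T$, where $\overline X_c=T^{-1}\sum_t X_{c,t}$; the intercept $\mu$ and the mean $\mu_X$ both cancel. Also $Z_t-\overline Z_T=Z_{c,t}-\overline Z_c$. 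Expanding,
\begin{equation*}
\SSS_c(r)=\frac1T\sum_{t=1}^{\fll{Tr}} Z_{c,t}\{u_t+\langle X_{c,t},\ddd\rangle\} - R_T(r),
\end{equation*}
where $R_T(r)$ collects three cross terms: $\overline Z_c\, T^{-1}\sum_{t\le \fll{Tr}}\{u_t+\langle X_{c,t},\ddd\rangle\}$, $T^{-1}\sum_{t\le\fll{Tr}} Z_{c,t}\{\overline u_T+\langle\overline X_c,\ddd\rangle\}$, and $(\fll{Tr}/T)\,\overline Z_c\{\overline u_T+\langle \overline X_c,\ddd\rangle\}$. The first term has exactly the form treated in Theorem~\ref{thm1}, with $(X_t,Z_t)$ replaced by $(X_{c,t},Z_{c,t})$. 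Assumption~\ref{assum1add} supplies the needed moment condition $\mathbb E[Z_{c,t}\otimes u_t]=0$ and $L^4$-$m$-approximability. Hence the first term equals $rC_{XZ}\ddd+T^{-1/2}\mathcal N_\ddd(r)+o_p(T^{-1/2})$ uniformly in $r$, with $\mathcal N_0$ having covariance $\Lambda_{c,Zu}$ under $H_0$.

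The main step is to show $\sup_r\|\sqrt T R_T(r)\|=o_p(1)$ under $H_0$, and $\sup_r\|R_T(r)\|=o_p(1)$ under fixed alternatives. Each cross term is a product of two centered averages. By the invariance principle for $L^4$-$m$-approximable sequences (the same tool used in Theorem~\ref{thm1}), $\overline Z_c=O_p(T^{-1/2})$, $\overline u_T=O_p(T^{-1/2})$ and $\overline X_c=O_p(T^{-1/2})$, and the partial sums $\sup_r\|T^{-1}\sum_{t\le\fll{Tr}}Z_{c,t}\|$ and $\sup_r|T^{-1}\sum_{t\le \fll{Tr}} u_t|$ are also $O_p(T^{-1/2})$. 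So under $H_0$ every cross term is $O_p(T^{-1})$, which is $o_p(T^{-1/2})$. Under a fixed alternative the extra term $T^{-1}\sum\langle X_{c,t},\ddd\rangle=O_p(T^{-1/2})$ by the same argument, so $R_T=O_p(T^{-1})$ again and the drift $rC_{XZ}\ddd$ is unaffected. I expect this to be the most delicate point only because uniformity in $r$ must be handled via functional CLTs for the scalar and $\mathcal H$-valued partial sums. These are available from the $m$-approximability literature already invoked for Theorem~\ref{thm1}.

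With this reduction in hand, the continuity and linearity of $g_w$ under Assumption~\ref{assumw} give the infeasible conclusion \eqref{eqinfeasadd}, exactly as in Theorem~\ref{thm2} and Corollary~\ref{cortest1}. Under $H_0$, $\|\sqrt T g_w(\SSS_c)\|^2\to_d\|g_w(\mathcal N_0)\|^2=_d\sum_j\lambda_{c,j}\nu_j^2$; this distribution is continuous, so the size converges to $\alpha$. Under the alternative, $g_w(rC_{XZ}\ddd)=D_wC_w^{-1}\cdots\neq0$, more precisely $C_wB_w C_{XZ}\ddd$ with $B_w>0$, so the statistic diverges. For the feasible version, I would follow the proof of Theorem~\ref{thm4}. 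The only change is that the summands of $\widehat\Lambda_{c,Zu}$ are $(Z_t-\overline Z_T)(u_{0,t}-\overline u_{0})$ rather than $Z_{c,t}u_t$. Their difference from $Z_{c,t}u_t$ consists of products involving $O_p(T^{-1/2})$ sample means, and the kernel-weighted sum of such terms contributes $O_p(h/T)=o_p(\sqrt{h/T})$ in operator norm. Thus $\|\widehat\Lambda_{c,Zu}-\Lambda_{c,Zu}\|_{\op}=O_p(\sqrt{h/T}+h^{-\varphi})$ under Assumption~\ref{assum2add}. The eigenvalue-perturbation and truncation argument with $d_T=o(\sqrt{T/h})$ then gives $\tilde q_\alpha\to_p q_\alpha$ under $H_0$ and $\tilde q_\alpha=O_p(1)$ under fixed alternatives, where $u_{0,t}=u_t+\langle X_t,\ddd\rangle$ only changes the limit operator to a finite one. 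Together these deliver both conclusions with $\tilde q_\alpha$.
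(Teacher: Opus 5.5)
Your proposal is correct and follows essentially the same route as the paper: the exact expansion of the doubly centered moment process into $T^{-1}\sum_{t\le\lfloor Tr\rfloor}Z_{c,t}\{u_t+\langle X_{c,t},\psi\rangle\}$ plus cross terms that are $O_p(T^{-1})$ uniformly in $r$, the invariance principle for $Z_{c,t}\{u_t+\langle X_{c,t},\psi\rangle\}-C_{XZ}\psi$, continuous mapping through $g_w$ as in Theorem~\ref{thm2}, and a rerun of the proof of Theorem~\ref{thm4} for $\tilde q_\alpha$. Two intermediate bounds are stated too strongly, though neither affects the conclusion: first, the separate centering of $Z_t$ and $u_{0,t}$ perturbs $\widehat\Lambda_{c,Zu}$ by $O_p(T^{-1/2}+h/T)$ rather than $O_p(h/T)$, because mixed third moments such as $\mathbb E[u_t^2Z_{c,t}]$ need not vanish, and this is still $o_p(\sqrt{h/T})$ since these cross-moments are summable over lags under weak dependence; second, under fixed alternatives you only need $\tilde q_\alpha=o_p(T)$, which follows from $\tilde q_\alpha=O_p(d_T)$ as in the paper, whereas your claim $\tilde q_\alpha=O_p(1)$ is not directly supported because Assumption~\ref{assum2add} restricts only $Z_{c,t}u_t$ and not $Z_{c,t}\langle X_{c,t},\psi\rangle$.
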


	Thus, our theoretical results developed in Section \ref{sec_test} can be extended to the model with intercept \eqref{eqflm2}. The results given in Section \ref{sec_iv_strength} can also be extended with a slight modification of the arguments used in our proofs of those results, and thus the details are omitted. 
	
	\subsection{Application: testing functional correlation}\label{sec_model_intercept2}
	Let $y_t$ (resp. $\mathcal X_t$) be a general $\mathbb{R}$-valued (resp. $\mathcal H$-valued) random element with possibly nonzero mean and variance $\sigma_y^2$ (resp. covariance $C_{\mathcal X\mathcal X}$).
	The cross-covariance operator $C_{\mathcal X y}$, as a map from $\mathcal H$ to $\mathbb{R}$, is defined as $$C_{\mathcal X y}(\cdot) = \mathbb{E}[(\mathcal X_{c,t} \otimes y_{c,t})](\cdot) =  \mathbb{E}[\langle \mathcal X_{c,t},\cdot \rangle y_{c,t}],$$  where $y_{c,t}=y_t-\mu_y$ and 
$\mathcal X_{c,t}=\mathcal X_t-\mathbb{E}[\mathcal X_t]$.
 If this cross-covariance operator is not zero, we say that $y_t$ and $\mathcal X_t$ are functionally correlated. Suppose that we are interested in testing 
	\begin{equation}\label{eqhypofc}
		H_0 : \text{functional correlation $= 0$} \quad \text{against}\quad H_1 : \text{functional correlation $\neq 0$}. 
	\end{equation}
    Under a regularity condition, this hypothesis can be rephrased as a hypothesis on the slope coefficient in a functional linear model with an intercept. Thus, it can be examined using the test developed for the intercept model. To this end, we use the following proposition: 
\begin{proposition}  \label{propadd1}
Suppose that $y_{c,t}=y_t-\mu_y$ and 
$\mathcal X_{c,t}=\mathcal X_t-\mathbb{E}[\mathcal X_t]$ are 
$L^4$-$m$-approximable sequences in the relevant spaces. Let $m_{\mathcal X y}:=\mathbb E[\mathcal X_{c,t}y_{c,t}]\in\mathcal H$, so that the cross-covariance $C_{\mathcal X y}$ is given by
$C_{\mathcal X y}(v)=\mathbb{E}[ \langle \mathcal X_{c,t},v\rangle y_{c,t}]=\langle m_{\mathcal X y},v\rangle$ for any $v\in\mathcal H.$
If there exists a unique $\theta\in\mathcal H$ such that $m_{\mathcal X y}=C_{\mathcal X\mathcal X}\theta,$ then, for some $\mu\in\mathbb R$ and an $L^4$-$m$-approximable sequence $\{\varepsilon_t\}_{t\geq1}$,
\begin{equation} \label{eqflm3}
    y_t=\mu+\langle \theta,\mathcal X_t\rangle+\varepsilon_t,
\end{equation}
where $\mathbb E[\varepsilon_t]=0$ and $\mathbb E[\mathcal X_{c,t}\varepsilon_t]=0$. Moreover, $C_{\mathcal X y}=0$ if and only if $\theta=0$.
\end{proposition}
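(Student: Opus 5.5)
The plan is to construct $\mu$ and $\varepsilon_t$ explicitly and then read off the claimed properties. Set $\mu:=\mu_y-\langle\theta,\mathbb E[\mathcal X_t]\rangle$ and $\varepsilon_t:=y_t-\mu-\langle\theta,\mathcal X_t\rangle$. Then \eqref{eqflm3} holds by definition, and
\[
\varepsilon_t=y_{c,t}-\langle\theta,\mathcal X_{c,t}\rangle .
\]
Taking expectations gives $\mathbb E[\varepsilon_t]=0$.

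For the orthogonality condition, I would compute $\mathbb E[\mathcal X_{c,t}\varepsilon_t]=m_{\mathcal Xy}-\mathbb E[\mathcal X_{c,t}\langle \mathcal X_{c,t},\theta\rangle]$. The second term equals $C_{\mathcal X\mathcal X}\theta$, since $(\mathcal X_{c,t}\otimes\mathcal X_{c,t})\theta=\langle\mathcal X_{c,t},\theta\rangle\mathcal X_{c,t}$. Moving the expectation inside the operator is justified because $\mathbb E\|\mathcal X_{c,t}\|^2<\infty$, and the Bochner integral commutes with the bounded evaluation map. By the hypothesis $m_{\mathcal Xy}=C_{\mathcal X\mathcal X}\theta$, this difference is zero.

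For $L^4$-$m$-approximability of $\{\varepsilon_t\}$, I would use that $\varepsilon_t=y_{c,t}-\langle\theta,\mathcal X_{c,t}\rangle$ is a fixed Lipschitz (indeed linear) functional of the pair $(y_{c,t},\mathcal X_{c,t})$. Each of $y_{c,t}$ and $\mathcal X_{c,t}$ has a Bernoulli-shift representation $f(\epsilon_t,\epsilon_{t-1},\ldots)$ with $m$-dependent coupled approximations $y^{(m)}_{c,t}$ and $\mathcal X^{(m)}_{c,t}$ in the sense of Definition~\ref{def1}. I would take both shifts to be driven by a common innovation sequence, which is implicit in treating the two processes jointly. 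Defining $\varepsilon_t^{(m)}:=y^{(m)}_{c,t}-\langle\theta,\mathcal X^{(m)}_{c,t}\rangle$, Minkowski and Cauchy--Schwarz give
\[
(\mathbb E|\varepsilon_t-\varepsilon_t^{(m)}|^4)^{1/4}\le (\mathbb E|y_{c,t}-y^{(m)}_{c,t}|^4)^{1/4}+\|\theta\|(\mathbb E\|\mathcal X_{c,t}-\mathcal X^{(m)}_{c,t}\|^4)^{1/4}.
\]
The summability over $m$ then follows from that of the two components. The fourth moment of $\varepsilon_t$ is bounded by the same inequality.

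Finally, for the equivalence: if $\theta=0$, then $m_{\mathcal Xy}=C_{\mathcal X\mathcal X}0=0$, so $C_{\mathcal Xy}=0$. Conversely, if $C_{\mathcal Xy}=0$, then $m_{\mathcal Xy}=0$. Both $\theta$ and $0$ solve $C_{\mathcal X\mathcal X}\vartheta=0$, so the assumed uniqueness forces $\theta=0$. The main obstacle is the approximability step, which requires the joint (common-innovation) shift structure to be part of the standing interpretation of Definition~\ref{def1}. If that is not granted, I would add it as a stated convention. The remaining steps are routine.
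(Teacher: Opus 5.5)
Your proposal is correct and follows the paper's proof essentially step for step: the same choice $\mu=\mu_y-\langle\theta,\mathbb E[\mathcal X_t]\rangle$ with $\varepsilon_t=y_{c,t}-\langle\theta,\mathcal X_{c,t}\rangle$, the same identity $\mathbb E[\mathcal X_{c,t}\varepsilon_t]=m_{\mathcal X y}-C_{\mathcal X\mathcal X}\theta=0$, and the same uniqueness argument for the equivalence. The only difference is in the approximability step: the paper simply cites Lemma~2.1 of \cite{hormann2010}, whereas you check it directly with Minkowski's inequality (to match Definition~\ref{def1} exactly, run the bound with exponent $4+\delta$ and the $O(m^{-\varrho})$ rate rather than fourth moments and summability), and the paper tacitly relies on the same common-innovation convention that you make explicit.
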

\begin{remark}\normalfont
The uniqueness of $\theta$ satisfying $m_{\mathcal X y}=C_{\mathcal X\mathcal X}\theta$ in Proposition~\ref{propadd1} is used only to make the equivalence between zero correlation ($C_{\mathcal X y}=0$) and $\theta=0$ unambiguous. In general, the equation $m_{\mathcal X y}=C_{\mathcal X\mathcal X}\theta$ may 
have multiple solutions; indeed, if $\theta$ is a solution, then $\theta+v$ is also a solution for any $v\in\ker C_{\mathcal X\mathcal X}$. Given existence of a solution, one can select a unique representative by imposing 
$\theta\in(\ker C_{\mathcal X\mathcal X})^\perp$.
\end{remark}

By Proposition~\ref{propadd1}, testing \eqref{eqhypofc} reduces to testing $\theta = 0$ in \eqref{eqflm3}, in which $\mathcal{X}_t$ is uncorrelated with $\varepsilon_t$. In practice, however, a functional regressor is often imperfectly observed and may be contaminated by measurement error. Suppose that $\mathcal X_t$ is not directly observed and that practitioners observe only 
\begin{equation*}
X_t = \mathcal X_t + e_t,
\end{equation*}
	where $e_t$ denotes the additive measurement errors. 
Then \eqref{eqflm3} can be written as a special case of \eqref{eqflm2} with composite error $u_t = \varepsilon_t - \langle \theta, e_t \rangle$.
In this case, $X_t$ is generally correlated with $u_t$, so standard inferential methods developed for exogenous functional predictors may not be valid. If a valid auxiliary functional variable $Z_t$ is available, then the proposed identification-robust test can be applied to examine \eqref{eqhypofc}. If it is reasonable to assume that the measurement errors are serially uncorrelated and that lagged functional observations are uncorrelated with the current error $\varepsilon_t$, then one may use lagged variables, such as $Z_t=X_{t-\kappa}$ for $\kappa\ge1$, as auxiliary functional variables, as in \cite{Chen_et_al_2020}.

		\section{Extension to the model with other covariates} \label{sec_extension_add} 
We provide theoretical justification for our tests when scalar covariates are included in the model. The intercept case has been treated in Section~\ref{sec_model_intercept_supp}; hence, we focus here on additional centered scalar covariates. For notational simplicity, assume that the scalar covariates are population-orthonormalized:
\begin{equation*}
    \mathbb E[\cont_{j,t}]=0,\quad
    \mathbb E[\cont_{j,t}^2]=1,\quad
    \mathbb E[\cont_{i,t}\cont_{j,t}]=0
    \quad (i\neq j).
\end{equation*}
In the algebra below, we write the least-squares projection coefficients as if the observed controls were sample-orthonormalized,
\begin{equation*}
    T^{-1}\sum_{t=1}^T \cont_{i,t}\cont_{j,t}=\mathbbm 1\{i=j\}.
\end{equation*}
This is only a notational simplification and does not require any data preprocessing in practice: least-squares residuals are invariant to nonsingular linear transformations of the finite-dimensional control vector. Equivalently, one may work with the general sample Gram matrix and its inverse; the same expansion is obtained under the maintained weak-dependence and moment conditions.

	The least square projections of $y_t$, $X_t$ and $Z_t$ onto the space spanned by $\{\cont_{1,t},\ldots,\cont_{\KK,t}\}$ can be estimated as 
	\begin{align} \label{eqsamplels0}
	\hat{y}_t = \sum_{j=1}^\KK \hat{\beta}_{y,j}\cont_{j,t}, \quad 	\hat{X}_t = \sum_{j=1}^\KK \sum_{\ell=1}^\infty \hat{\beta}_{X,j,\ell} \cont_{j,t} v_{\ell},  \quad \hat{Z}_t = \sum_{j=1}^\KK \sum_{\ell=1}^\infty \hat{\beta}_{Z,j,\ell} \cont_{j,t} v_{\ell}, 
	\end{align}
where $\{v_{\ell}\}_{\ell\geq 1}$ is an orthonormal basis of $\mathcal H$, and $\hat{\beta}_{y,j}$, $\hat{\beta}_{X,j,\ell}$ and $\hat{\beta}_{Z,j,\ell}$ are defined as follows:
	\begin{align} \label{eqsamplels}
		\hat{\beta}_{y,j} = \frac{1}{T}\sum_{t=1}^T \cont_{j,t} y_t, \quad 		\hat{\beta}_{X,j,\ell} = \frac{1}{T}\sum_{t=1}^T \cont_{j,t} \langle X_t,v_\ell \rangle, \quad \hat{\beta}_{Z,j,\ell} = \frac{1}{T}\sum_{t=1}^T \cont_{j,t} \langle Z_t,v_\ell \rangle.
	\end{align}
It is convenient to introduce additional notation to facilitate the subsequent discussion. 	We define the following projections, obtained by replacing the sample projection coefficients, associated with $\cont_{j,t}$ for $j=1,\ldots,\KK$, in \eqref{eqsamplels} with their population counterparts as follows:
		\begin{align}\label{populationls}
			\tilde{y}_t = \sum_{j=1}^\KK \beta_{y,j} \cont_{j,t}, \quad 	\tilde{X}_t  = \sum_{j=1}^\KK \sum_{\ell=1}^\infty  {\beta}_{X,j,\ell}\cont_{j,t} v_\ell, \quad 	\tilde{Z}_t  = \sum_{j=1}^\KK \sum_{\ell=1}^\infty  {\beta}_{Z,j,\ell}\cont_{j,t} v_\ell,
		\end{align}
		where $\beta_{y,j}=\mathbb{E}[ \cont_{j,t} y_t]$, $\beta_{X,j,\ell}=\mathbb{E}[\cont_{j,t} \langle X_t,v_\ell \rangle]$ and $\beta_{Z,j,\ell}=\mathbb{E}[\cont_{j,t} \langle Z_t,v_\ell \rangle]$.
		Let 
		\begin{equation}
			{y}_{\cont,t}=y_t-\tilde{y}_t, \quad {X}_{\cont,t}={X}_t-\tilde{X}_t, \quad {Z}_{\cont,t}={Z}_t-\tilde{Z}_t, \quad u_{\cont,t} = u_{t}-\tilde{u}_t,
		\end{equation}
		where $${u}_{\cont,t} = u_t - \tilde{u}_t \quad \text{with} \quad \tilde{u}_t=\sum_{j=1}^\KK \mathbb{E}[\cont_{j,t}u_t] \cont_{j,t}. $$ 
		We let  $C_{\cont,XZ}=\mathbb{E}[X_{\cont,t}\otimes Z_{\cont,t}]$ and let $\Lambda_{\cont,Zu}$ be defined as $\Lambda_{Zu}$ in \eqref{eqlrv1} but replacing $Z_t$ (resp.\ $u_t$) with $Z_{\cont,t}$ (resp.\ $u_{\cont,t}$), and let $\{\lambda_{\cont,j}\}_{j\geq 1}$ be the eigenvalues of $\Lambda_{\cont,Zu}$. We also define $\widehat{\Lambda}_{\cont,Zu}$ as in \eqref{eqsamplelrv} but replacing $Z_t$ and $u_{0,t}$ with $Z_t - \widehat{Z}_{t}$ and $\hat{u}_{0,t} = 	y_t - \hat{y}_t  - \langle X_t-\hat{X}_t,\theta_0 \rangle$, and let $\{\hat{\lambda}_{\cont,j}\}_{j\geq 1}$ be the eigenvalues of $\widehat{\Lambda}_{\cont,Zu}$. 
We employ the following assumptions:
		\begin{assumpsec} \label{assum1add2}\begin{enumerate*}[(i)]
				\item \eqref{eqflm2add} holds, \item $\mathbb{E}[Z_{\cont,t}\otimes u_{\cont,t}] =0$, and \item  the variables $X_{\cont,t}$, $Z_{\cont,t}$, $u_{\cont,t}$ and $\cont_{j,t}$ for $j=1,\ldots,\KK$ are $L^4$-$m$-approximable sequences in the relevant spaces. 
			\end{enumerate*}
		\end{assumpsec}
		
		\begin{assumpsec} \label{assum2add2} Assumption \ref{assum2} holds when $Z_t$ (resp.\ $u_t$) is replaced by $Z_{\cont,t}$ (resp.\ $u_{\cont,t}$). 
		\end{assumpsec}
Following the decomposition of $\psi=\theta-\theta_0$ in \eqref{eqpsi0}--\eqref{eqhypo0}, with $C_{\cont,XZ}$ in place of $C_{XZ}$, we write $H_{\magni}$ as follows:
\begin{equation}
H_{\magni}: \theta = \theta_0 + \magni\bar\psi_\cont, \quad 
\|\bar\psi_\cont\|^2 = \gamma_{\cont,\rr}^2 + \gamma_{\cont,\kk}^2 = 1,
\end{equation}
where $\bar\psi_\cont=\gamma_{\cont,\rr}\psi_{\cont,\rr}  + \gamma_{\cont,\kk}\psi_{\cont,\kk}$, \(\psi_{\cont,\rr} \in [\ker C_{\cont,XZ}]^\perp\) and \(\psi_{\cont,\kk} \in \ker C_{\cont,XZ}\). The following result provides an extension of the proposed tests:
	
\begin{theorem}\label{thm3add}
Suppose that Assumptions~\ref{assum1add2} and~\ref{assumw} hold. Let $q_\alpha$ be the $(1-\alpha)$-quantile of $\sum_{j=1}^\infty \lambda_{\cont,j}\nu_j^2$, where $\{\nu_j\}_{j\ge1}$ are i.i.d.\ standard normal random variables. Then
	\begin{align} \label{eqinfeasadd2}
					\mathbb{P}\{ \|\sqrt{T}g_w(\SSS_\cont)\|^2 > q_\alpha\} \begin{cases}
						\to \alpha \quad &\text{if $H_{0}$ holds,}\\
						\to 1 \quad &\text{if $H_{\magni}$ holds with $\magni>0$ and $C_{\cont,XZ}\psi\neq 0$ (i.e., $\gamma_{\cont,\rr}\neq 0$).}\\
					\end{cases}
\end{align}
If Assumption~\ref{assum2add2} additionally holds and $d_T\to\infty$ satisfies $d_T=o(\sqrt{T/h})$, then the same conclusions hold when $q_\alpha$ is replaced by the feasible critical value $\tilde q_\alpha$, the $(1-\alpha)$-th quantile of $\sum_{j=1}^{d_T} \hat{\lambda}_{\cont,j}\nu_j^2$.
\end{theorem}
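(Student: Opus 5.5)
The plan is to reduce Theorem~\ref{thm3add} to the already established results for the no-covariate model (Theorems~\ref{thm1}, \ref{thm2}, Corollary~\ref{cortest1} and Theorem~\ref{thm4}). The reduction works by showing that the residualized moment process $\SSS_\cont$ is asymptotically equivalent, uniformly in $r$ and at the $\sqrt T$ scale, to the infeasible process built from population residuals,
\begin{equation*}
\SSS^\circ_\cont(r)=\frac1T\sum_{t=1}^{\fll{Tr}} Z_{\cont,t}\{y_{\cont,t}-\langle X_{\cont,t},\theta_0\rangle\}.
\end{equation*}
First I rewrite the model in residualized form. Subtracting population projections from \eqref{eqflm2add} gives
\begin{equation*}
y_{\cont,t}=\langle X_{\cont,t},\theta\rangle+u_{\cont,t},
\end{equation*}
because the projection of $\sum_j\beta_j\cont_{j,t}$ onto the controls is itself, and the projection is linear. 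Hence $y_{\cont,t}-\langle X_{\cont,t},\theta_0\rangle=u_{\cont,t}+\langle X_{\cont,t},\ddd\rangle$ with $\ddd=\theta-\theta_0$. Together with Assumption~\ref{assum1add2}, the triple $(X_{\cont,t},Z_{\cont,t},u_{\cont,t})$ satisfies exactly Assumption~\ref{assum1}, with $C_{XZ}$ replaced by $C_{\cont,XZ}$. Theorem~\ref{thm1} and Theorem~\ref{thm2} therefore apply verbatim to $\SSS^\circ_\cont$. This yields the null limit $\|g_w(\mathcal N_{0,\cont})\|^2=_d\sum_j\lambda_{\cont,j}\nu_j^2$. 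It also yields divergence under $H_\magni$ with $C_{\cont,XZ}\psi\neq0$, through the drift $r\,C_{\cont,XZ}\ddd$ and $D_w>0$.

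The main step, and the expected obstacle, is the approximation
\begin{equation*}
\sup_r\sqrt T\,\|\SSS_\cont(r)-\SSS^\circ_\cont(r)\|=o_p(1)
\end{equation*}
under $H_0$ (and $O_p(1)$, or at least $o_p(\sqrt T)$ after scaling, under alternatives). I would write $Z_t-\hat Z_t=Z_{\cont,t}-\sum_j(\hat\beta_{Z,j}-\beta_{Z,j})\cont_{j,t}$, where $\hat\beta_{Z,j}=\sum_\ell\hat\beta_{Z,j,\ell}v_\ell\in\mathcal H$, and treat the scalar factor analogously. Expanding the product gives three kinds of terms.
\begin{itemize}
\item Cross terms such as $(\hat\beta_{Z,j}-\beta_{Z,j})\,T^{-1}\sum_{t\le Tr}\cont_{j,t}u_{\cont,t}$.
\item Terms of the form $T^{-1}\sum_{t\le Tr}Z_{\cont,t}\cont_{j,t}\,(\hat\beta_{y,j}-\beta_{y,j}-\cdots)$.
\item Products of two estimation errors.
\end{itemize}
The coefficient errors are $O_p(T^{-1/2})$ in $\mathcal H$-norm, by an $L^2$ law of large numbers for the $L^4$-$m$-approximable products $\cont_{j,t}Z_t$ (using e.g.\ \citealp{hormann2010}). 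For the partial sums, the functional invariance principle of Theorem~\ref{thm1} type gives $\sup_r\|T^{-1}\sum_{t\le Tr}\cont_{j,t}u_{\cont,t}\|=O_p(T^{-1/2})$, since $\mathbb E[\cont_{j,t}u_{\cont,t}]=0$ by construction of $u_{\cont,t}$. Likewise $T^{-1}\sum_{t\le Tr}Z_{\cont,t}\cont_{j,t}\to_p r\,\mathbb E[Z_{\cont,t}\cont_{j,t}]=0$ uniformly at rate $T^{-1/2}$, because residuals are orthogonal to the controls. Each cross term is thus $O_p(T^{-1})$ uniformly in $r$, which is $o_p(T^{-1/2})$. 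The delicate point is that the sample-orthonormalization is only notational. I would handle it by noting that the sample Gram matrix differs from the identity by $O_p(T^{-1/2})$, so it contributes higher-order terms of the same type. Under alternatives the extra term $\langle X_{\cont,t},\ddd\rangle$ produces analogous errors of order $O_p(T^{-1/2})\|\ddd\|$, which do not disturb the $O_p(1)$ drift.

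Given this equivalence, continuity of $g_w$ (Assumption~\ref{assumw}) transfers the null limit and the divergence, which proves the infeasible part. For the feasible part, I would mimic the proof of Theorem~\ref{thm4}. It suffices to show $\|\widehat\Lambda_{\cont,Zu}-\Lambda_{\cont,Zu}\|=O_p(\sqrt{h/T}+h^{-\varphi})$. I would write $(Z_t-\hat Z_t)\hat u_{0,t}=Z_{\cont,t}u_{\cont,t}+R_t$ under $H_0$, where $R_t$ is a finite sum of $O_p(T^{-1/2})$ coefficients times stationary terms. The kernel sum of the $R_t$ contributions is then bounded by $O_p(h/T)+O_p(\sqrt{h}/T^{1/2}\cdot T^{-1/2}\sqrt h)$, which is $O_p(\sqrt{h/T})$, via Cauchy–Schwarz and $\sum_s|\mathrm k(s/h)|=O(h)$. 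Weyl-type eigenvalue bounds together with $d_T=o(\sqrt{T/h})$ then give $\hat q_\alpha\to_p q_\alpha$, because the null distribution is continuous. Under alternatives $\tilde q_\alpha=O_p(1)$, while the statistic diverges, so the power conclusion follows.
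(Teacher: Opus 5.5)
Your infeasible part follows the paper's proof essentially step for step. The paper also writes $T^{-1/2}\sum_{t\le Tr}(Z_t-\hat Z_t)(y_t-\hat y_t)$, and likewise the $\theta_0$ term, as the population-residual sum plus products of $O_p(T^{-1/2})$ coefficient errors with partial sums of mean-zero products such as $Z_{\cont,t}\cont_{j,t}$ and $\cont_{j,t}y_{\cont,t}$. This gives $\sqrt T\,\SSS_\cont(r)=T^{-1/2}\sum_{t\le Tr}Z_{\cont,t}\{u_{\cont,t}+\langle X_{\cont,t},\psi\rangle\}+o_p(1)$ uniformly in $r$, and the paper then finishes with the invariance principle and continuous mapping. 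Your explicit handling of the sample Gram matrix is a harmless refinement of the paper's notational shortcut.

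\textbf{The gap is in your rate for the feasible critical value.} Write $v_t=Z_{\cont,t}u_{\cont,t}$ and $R_t=\sum_j a_jW_{j,t}$. Here $a_j=O_p(T^{-1/2})$ are random coefficients and $W_{j,t}$ are stationary sequences such as $\cont_{j,t}u_{\cont,t}$ or $\cont_{j,t}Z_{\cont,t}$.

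\emph{What Cauchy--Schwarz actually gives.} Using Cauchy--Schwarz and $\sum_s|\mathrm k(s/h)|=O(h)$, the cross terms $T^{-1}\sum_{|s|\le h}\mathrm k(s/h)\sum_t v_{t-s}\otimes R_t$ are bounded only by $O(h)\cdot O_p(1)\cdot O_p(T^{-1/2})=O_p(h/\sqrt T)$. They are not bounded by the $O_p(h/T)$ your expression evaluates to.

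\emph{Why this is too weak.} Assumption~\ref{assum2}\ref{assum2aa} allows $c_\varphi=\infty$, so $h/\sqrt T$ need not vanish. Even when it does, $d_T\,h/\sqrt T$ need not vanish under $d_T=o(\sqrt{T/h})$. The eigenvalue-truncation step of the proof of Theorem~\ref{thm4} therefore breaks.

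\emph{Your $O_p(h/T)$ is also too optimistic.} In general $\mathbb E[v_{t-s}\otimes W_{j,t}]\neq0$, so these terms are genuinely of order $T^{-1/2}$, which exceeds $h/T$ whenever $h\ll\sqrt T$.

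\emph{The missing idea.} Factor out $a_j$ and treat $T^{-1}\sum_s\mathrm k(s/h)\sum_t v_{t-s}\otimes W_{j,t}$ as a kernel estimator of the long-run cross-covariance of two mean-zero weakly dependent sequences.
\begin{itemize}
\item Its mean is $O(1)$ by absolute summability of the cross-autocovariances, which follows from $L^p$-$m$-approximability.
\item Its centered part is $O_p(h/\sqrt T)$ even by a crude lag-by-lag bound.
\end{itemize}
Multiplying by $a_j$ gives $O_p(T^{-1/2}+h/T)=O_p(\sqrt{h/T})$, which is the rate you need.

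\textbf{A related smaller point.} Under alternatives you assert $\tilde q_\alpha=O_p(1)$. That would need a rate for the long-run covariance estimator under the alternative, which you do not supply. The paper uses only $\tilde q_\alpha=O_p(d_T)=o_p(T)$, which follows from $\hat\lambda_j\le\hat\lambda_1=O_p(1)$. This is enough because the statistic grows at rate $T$.
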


	\section{Extension to multiple functional covariates} \label{app_sec_multiple} 
This section shows that the proposed identification-robust testing framework extends directly to testing joint hypotheses involving multiple functional covariates. Since the argument is a notational extension of the baseline case, we give only the main result. For simplicity, we present the result for the model without an intercept:
\begin{equation}\label{eqflmadd}
    y_t =\sum_{j=1}^\KK \langle X_{j,t},\theta_j\rangle + u_t, \quad\mathbb E[u_t]=0.
\end{equation}
We consider testing
\begin{equation}\label{eqhypoadd}
    H_0:\theta_j=\theta_{0,j},\quad j=1,\ldots,\KK, \quad \text{against} \quad  H_1:\text{at least one equality fails}.
\end{equation}
Define
\begin{equation}\label{eqstatmulti}
    \SSSS(r) = \frac{1}{T}\sum_{t=1}^{\lfloor Tr\rfloor}Z_t\left\{y_t-\sum_{j=1}^\KK\langle X_{j,t},\theta_{0,j}\rangle\right\}.
\end{equation}
Let
\begin{equation*}
    \psi_j=\theta_j-\theta_{0,j}, \quad \CC_j=\mathbb E[X_{j,t}\otimes Z_t], \quad j=1,\ldots,\KK.
\end{equation*}
Then the population drift of the moment process is
\begin{equation*}
    \mathcal M = \sum_{j=1}^\KK \CC_j \psi_j.
\end{equation*}
\begin{assumpsec}\label{assum1b}
\begin{enumerate*}[(i)]
    \item \eqref{eqflmadd} holds,
    \item $\mathbb E[Z_t\otimes u_t]=0$, and
    \item $X_{1,t},\ldots,X_{\KK,t},Z_t$, and $u_t$ are $L^4$-$m$-approximable
    sequences in the relevant spaces.
\end{enumerate*}
\end{assumpsec}
\begin{theorem}\label{thmext1}
Suppose that Assumption \ref{assum1b} holds. Let 
$\psi_j=\theta_j-\theta_{0,j}$ and define
\begin{equation*}
  \eta_t(\psi_1,\ldots,\psi_\KK)=Z_t\left\{u_t+\sum_{j=1}^\KK\langle X_{j,t},\psi_j\rangle    \right\} - \sum_{j=1}^\KK \CC_j\psi_j.
\end{equation*}
Let $\mathcal N_{\psi}$ be an $\mathcal H$-valued Brownian motion with long-run 
covariance operator
\begin{equation*}
    \Lambda_{\psi}=\sum_{\ell=-\infty}^{\infty}\mathbb E[\eta_t(\psi_1,\ldots,\psi_\KK)    \otimes\eta_{t-\ell}(\psi_1,\ldots,\psi_\KK)].
\end{equation*}
Then
\begin{equation}\label{eqmulti_general}
    \sup_{0\le r\le1}\left\|\sqrt T\left\{\SSSS(r)-r\sum_{j=1}^\KK \CC_j\psi_j\right\}-    \mathcal N_{\psi}(r)\right\|    \to_p0.
\end{equation}
\end{theorem}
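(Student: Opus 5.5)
The plan is to reduce Theorem~\ref{thmext1} to the single-covariate argument behind Theorem~\ref{thm1}, by rewriting the centered moment process as a partial sum of one mean-zero $\mathcal H$-valued sequence. Under \eqref{eqflmadd} and with $\psi_j=\theta_j-\theta_{0,j}$, we have $y_t-\sum_j\langle X_{j,t},\theta_{0,j}\rangle=u_t+\sum_j\langle X_{j,t},\psi_j\rangle$. Hence
\begin{equation*}
\sqrt T\Big\{\SSSS(r)-\tfrac{\lfloor Tr\rfloor}{T}\sum_{j=1}^\KK\CC_j\psi_j\Big\}=\frac{1}{\sqrt T}\sum_{t=1}^{\lfloor Tr\rfloor}\eta_t(\psi_1,\ldots,\psi_\KK).
\end{equation*}
Replacing $\lfloor Tr\rfloor/T$ by $r$ costs at most $T^{-1/2}\|\sum_j\CC_j\psi_j\|$ uniformly in $r$, which is negligible. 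Note that $\mathbb E[\eta_t]=0$ because $\mathbb E[Z_tu_t]=0$ and $\mathbb E[\langle X_{j,t},\psi_j\rangle Z_t]=\CC_j\psi_j$.

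The key step is to show that $\{\eta_t\}$ is an $L^2$-$m$-approximable $\mathcal H$-valued sequence, with enough summability of the approximation errors for a functional invariance principle. This holds because $\eta_t$ is a finite sum of terms of the form $Z_tu_t$ and $Z_t\langle X_{j,t},\psi_j\rangle$, with $\psi_j$ fixed.
\begin{itemize}
\item For each product, use the $m$-dependent couplings $Z_t^{(m)},u_t^{(m)},X_{j,t}^{(m)}$ from Definition~\ref{def1}.
\item Apply the elementary bound $\|ab-a'b'\|\le\|a-a'\|\|b\|+\|a'\|\|b-b'\|$ together with Cauchy--Schwarz. This gives $\mathbb E\|Z_tu_t-Z_t^{(m)}u_t^{(m)}\|^2\le C(\nu_4(Z_t-Z_t^{(m)})^2+\nu_4(u_t-u_t^{(m)})^2)$, using that the fourth moments are bounded under Assumption~\ref{assum1b}(iii).
\item The terms $Z_t\langle X_{j,t},\psi_j\rangle$ are handled the same way, using $|\langle X,\psi_j\rangle|\le\|X\|\|\psi_j\|$.
\item Summing over $m$ yields $\sum_m\nu_2(\eta_t-\eta_t^{(m)})<\infty$, so $\eta_t$ is $L^2$-$m$-approximable.
\end{itemize}
The same computation shows that $\Lambda_\psi$ is well defined and trace class, as in \citet[Lemma~4.1]{hormann2010}.

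Next, invoke the invariance principle for $L^2$-$m$-approximable Hilbert-valued sequences, exactly as used for Theorem~\ref{thm1} (e.g.\ \citealp{BERKES2016150}'s or the Berkes--Horváth--Rice approximation). It gives, on a suitable probability space, a Brownian motion $\mathcal N_\psi$ with covariance $\Lambda_\psi$ such that $\sup_r\|T^{-1/2}\sum_{t\le\lfloor Tr\rfloor}\eta_t-\mathcal N_\psi(r)\|\to_p0$. That is precisely \eqref{eqmulti_general}.

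The main obstacle is this invariance-principle step: we need a strong (coupling-based) rather than merely weak approximation in $\mathbb D_{\mathcal H}[0,1]$ under only the stated moment conditions. I would cite the same result used for Theorem~\ref{thm1}, noting that it applies verbatim once $L^2$-$m$-approximability of $\eta_t$ is established. An alternative route is to argue that the proof of Theorem~\ref{thm1} never used the single-regressor structure, only the approximability of $\eta_t(\ddd)$, and to replace $\langle X_t,\ddd\rangle$ by $\sum_j\langle X_{j,t},\psi_j\rangle$ throughout.
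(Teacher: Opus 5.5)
Your proposal is correct and matches the paper's proof step for step. Both write $\sqrt T\{\SSSS(r)-r\sum_j\CC_j\psi_j\}$ as $T^{-1/2}\sum_{t\le\lfloor Tr\rfloor}\eta_t(\psi_1,\ldots,\psi_\KK)$ plus a floor-function term that is $O(T^{-1/2})$ uniformly in $r$. Both then obtain $L^2$-$m$-approximability of $\eta_t$ by the product argument of Lemma~\ref{lem1}, and conclude with \citet[Theorem~1.1]{berkes2013weak}; that is the reference to cite, not \cite{BERKES2016150}, which concerns long-run covariance estimation.

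One small adjustment is needed. To verify Definition~\ref{def1} and the hypotheses of that invariance principle, run your Cauchy--Schwarz bound at exponent $2+\delta/2$ using the $(4+\delta)$-th moments, as in the proof of Lemma~\ref{lem1}. Bounding only second moments and checking summability is not enough.
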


The following is an immediate consequence of Theorem \ref{thmext1}. 
\begin{corollary}[Infeasible tests]\label{cortest1add}
Suppose that Assumptions\ref{assum1b} and~\ref{assumw} hold and let $q_\alpha$ be defined as in 
Corollary~\ref{cortest1}. Then
\begin{equation*}
    \mathbb P\left\{\|\sqrt T g_w(\SSSS)\|^2>q_\alpha\right\}\to \begin{cases}
\alpha \quad \text{if $H_0$ holds,} \\
1
    \quad
    \text{if $H_{\magni}$ holds and $\sum_{j=1}^\KK \CC_j(\theta_j-\theta_{0,j})\neq0$}
\end{cases}
\end{equation*}
\end{corollary}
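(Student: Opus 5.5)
The plan is to derive Corollary~\ref{cortest1add} from Theorem~\ref{thmext1}, following the single-covariate route Theorem~\ref{thm1} $\Rightarrow$ Theorem~\ref{thm2} $\Rightarrow$ Corollary~\ref{cortest1}. The first step is to show that $g_w$ is a continuous linear map from $\mathbb D_{\mathcal H}[0,1]$ (with the sup-norm) to $\mathcal H$. Since $\mu$ is a finite measure and $w$ is bounded, we have $\|g_w(f)\|\le C_w\sup_r\|f(r)\|\int_0^1|w(r)|\mu(dr)$. By linearity, $g_w(\SSSS)=g_w(\SSSS-r\mathcal M)+g_w(r\mathcal M)$, where $\mathcal M=\sum_j\CC_j\psi_j$. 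Moreover, $g_w(r\mathcal M)=C_w\int_0^1 r w(r)\mu(dr)\,\mathcal M=D_w\mathcal M$.

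\emph{Null case.} Under $H_0$ every $\psi_j=0$, so $\mathcal M=0$ and $\eta_t=Z_tu_t$. Hence $\Lambda_\psi=\Lambda_{Zu}$ and $\mathcal N_\psi=\mathcal N_0$. Theorem~\ref{thmext1} gives $\sup_r\|\sqrt T\SSSS(r)-\mathcal N_0(r)\|\to_p0$. The bound above then yields $\|\sqrt T g_w(\SSSS)\|^2\to_d\|g_w(\mathcal N_0)\|^2=\|\mathcal G_w\|^2$.

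This limit is the same as in Theorem~\ref{thm2}. It therefore remains to recall, as used in Corollary~\ref{cortest1}, that $\mathcal G_w$ is a centered Gaussian element with covariance $\Lambda_{Zu}$. This holds because $g_w(\mathcal N_0)=C_w\int_0^1\big(\int_s^1 w(r)\mu(dr)\big)d\mathcal N_0(s)$ (stochastic Fubini), whose variance in direction $v$ is $C_w^2\int_0^1(\int_s^1 w\,d\mu)^2ds\,\langle\Lambda_{Zu}v,v\rangle$. By the choice \eqref{eqCw}, this equals $\langle\Lambda_{Zu}v,v\rangle$.

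\emph{Subtle point.} In the discrete case, the outer integral in \eqref{eqCw} is against $\mu(ds)$ rather than $ds$, so this step should simply be imported from the proof of Theorem~\ref{thm2}; I take that result as given. It follows that $\|\mathcal G_w\|^2=_d\sum_j\lambda_j\nu_j^2$. This law is continuous at $q_\alpha$: the law of $\sum_j\lambda_j\nu_j^2$ is continuous whenever some $\lambda_j>0$, and if $\Lambda_{Zu}=0$ the statement degenerates. So the rejection probability converges to $\alpha$.

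\emph{Alternative case.} Under $H_\magni$ with $\mathcal M\neq0$, Theorem~\ref{thmext1} shows that $\sqrt T g_w(\SSSS-r\mathcal M)=g_w(\mathcal N_\psi)+o_p(1)=O_p(1)$. We then have
\begin{equation*}
\|\sqrt T g_w(\SSSS)\|\ge \sqrt T D_w\|\mathcal M\|-O_p(1)\to_p\infty,
\end{equation*}
because $D_w>0$ under Assumption~\ref{assumw}. Since $q_\alpha$ is fixed, the rejection probability tends to one.

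The only real obstacle is the null-covariance identification of $\mathcal G_w$ with the $w$-free law. However, it is identical to the one-covariate case, because $\SSSS$ under $H_0$ equals the process $\SSS$ of Section~\ref{sec_test}; the multiple-covariate structure affects only the drift.
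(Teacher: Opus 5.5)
Your proposal is correct and follows the paper's route. The paper's proof simply invokes Theorem~\ref{thmext1} (the Brownian limit $\mathcal N_0$ with covariance $\Lambda_{Zu}$ under $H_0$, and uniform convergence of $\SSSS(r)$ to $r\sum_{j}\CC_j\psi_j$ under the alternative) and then applies the continuous mapping theorem as in Theorem~\ref{thm2}, so you have just written out the details it leaves implicit. The discrete-case discrepancy you flag is not real: $F_w(s)=\int_{[s,1]}w\,d\mu$ is constant and equal to $F_w(r_i)$ on each $(r_{i-1},r_i]$, so $\int_0^1F_w(s)^2\,ds=\sum_i(r_i-r_{i-1})F_w(r_i)^2=\int_0^1F_w(s)^2\,\mu(ds)$, and your stochastic-Fubini computation already gives unit variance without importing anything.
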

\begin{proof}[Proof of Corollary~\ref{cortest1add}]
Under $H_0$, Theorem~\ref{thmext1} gives 
$\sup_{0\le r \le 1}\|\sqrt T\SSSS(r)-\mathcal N_0(r)\| \to_p 0$ and $\sup_{0\le r \le 1}\|\SSSS(r)-r\sum_{j=1}^\KK \CC_j\psi_j\| \to_p 0$. Then the desired results follow by the continuous mapping theorem.
\end{proof}

To implement the test, define $\widehat\Lambda_{\KK,Zu}$ as in 
\eqref{eqsamplelrv}, replacing $u_{0,t}$ by
\[
    u_{\KK,0,t} =y_t-\sum_{j=1}^\KK\langle X_{j,t},\theta_{0,j}\rangle.
\]
Let $\hat\lambda_{\KK,j}$ be the ordered eigenvalues of $\widehat\Lambda_{\KK,Zu}$, and 
let $\check q_\alpha$ be the $(1-\alpha)$-quantile of $\sum_{j=1}^{d_T}\hat\lambda_{\KK,j}\nu_j^2$. 
\begin{theorem}\label{thm2addadd}
Suppose that Assumptions \ref{assum1b}, \ref{assum2}, and \ref{assumw} hold, and let 
$d_T\to\infty$ satisfy $d_T=o(\sqrt{T/h})$. Then the conclusions of 
Corollary~\ref{cortest1add} remain valid when $q_\alpha$ is replaced by 
$\check q_\alpha$.
\end{theorem}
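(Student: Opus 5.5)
The plan is to reduce Theorem~\ref{thm2addadd} to the single-covariate feasible result, Theorem~\ref{thm4}. The reduction works because the statistic $\|\sqrt T g_w(\SSSS)\|^2$ and the estimator $\widehat\Lambda_{\KK,Zu}$ have exactly the same structure as in the baseline case. The only change is that the scalar residual is now $u_{\KK,0,t}=y_t-\sum_j\langle X_{j,t},\theta_{0,j}\rangle$ instead of $u_{0,t}$.

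\textbf{Step 1 (statistic).} By Corollary~\ref{cortest1add}, the statistic converges to $\|g_w(\mathcal N_0)\|^2$ under $H_0$, where $\mathcal N_0$ has covariance $\Lambda_{Zu}$. Under alternatives with $\sum_j\CC_j\psi_j\neq0$, the statistic diverges in probability. The divergence holds because $g_w(r\mathcal M)=D_w\mathcal M$ with $D_w>0$, by Theorem~\ref{thmext1} and the continuous mapping theorem. By the normalization \eqref{eqCw}, the null limit has the law $\sum_j\lambda_j\nu_j^2$.

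\textbf{Step 2 (null critical value).} Under $H_0$ we have $u_{\KK,0,t}=u_t$ exactly. Hence $\widehat\Lambda_{\KK,Zu}$ coincides with $\widehat\Lambda_{Zu}$ computed from the true errors. The consistency argument of Theorem~\ref{thm4} therefore applies verbatim under Assumption~\ref{assum2}:
\[
\|\widehat\Lambda_{\KK,Zu}-\Lambda_{Zu}\|_{\op}=O_p(\sqrt{h/T}+h^{-\varphi}).
\]
By Weyl's inequality, $\sup_j|\hat\lambda_{\KK,j}-\lambda_j|$ obeys the same rate. This gives
\[
\mathbb E\Bigl|\sum_{j\le d_T}\hat\lambda_{\KK,j}\nu_j^2-\sum_{j\ge1}\lambda_j\nu_j^2\Bigr|\le d_T\,O_p(\sqrt{h/T})+\sum_{j>d_T}\lambda_j,
\]
and the right-hand side tends to zero because $d_T=o(\sqrt{T/h})$ and $\Lambda_{Zu}$ is trace class. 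The limiting law is continuous and strictly increasing at its $(1-\alpha)$-quantile, since $\Lambda_{Zu}\neq0$. It follows that $\check q_\alpha\to_p q_\alpha$, and size $\alpha$ follows by Slutsky's lemma.

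\textbf{Step 3 (fixed alternatives).} Here $u_{\KK,0,t}=u_t+\sum_j\langle X_{j,t},\psi_j\rangle$, so $\widehat\Lambda_{\KK,Zu}$ no longer targets $\Lambda_{Zu}$. It suffices to show $\check q_\alpha=O_p(h)$, or even $o_p(T)$, while the statistic grows at rate $T$. To get this bound, use $\|\widehat\Lambda_{\KK,Zu}\|_{\rm tr}\le T^{-1}\sum_{|s|\le h}|\mathrm k(s/h)|\sum_t\|Z_tu_{\KK,0,t}-\bar{\cdot}\|^2\cdot$const. Each summand is $O_p(1)$ in average by the $L^4$ moment bounds in Assumption~\ref{assum1b}, which gives $O_p(h)$. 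Then $\check q_\alpha\le c_\alpha\operatorname{tr}(\widehat\Lambda_{\KK,Zu})$ by Markov's inequality, and $h=o(T)$. Since the statistic equals $T\|D_w\mathcal M\|^2(1+o_p(1))$, rejection probability tends to one.

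The main obstacle is Step~3. The trace bound on the long-run covariance estimator under the alternative needs care, because the summand $Z_t\langle X_{j,t},\psi_j\rangle$ requires fourth moments of $X_{j,t}$ and $Z_t$ jointly. Cauchy--Schwarz together with $L^4$-$m$-approximability should handle this, and it is where the multiple-covariate moment assumptions enter.
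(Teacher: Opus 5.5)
Your proposal is correct. For the null it matches the paper: the paper's proof just says the argument for Theorem~\ref{thm4} carries over. Under $H_0$ you have $u_{\KK,0,t}=u_t$, so $\widehat\Lambda_{\KK,Zu}$ is literally the estimator of Theorem~\ref{thm4}, and the eigenvalue rate, truncation bound and $\check q_\alpha\to_p q_\alpha$ follow as you describe.

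The genuine difference is in the consistency part. The proof of Theorem~\ref{thm4} treats fixed alternatives by showing that the kernel estimator still converges, now to the long-run covariance of the sample-centred sequence $Z_tu_{\KK,0,t}$. It does this via the result of \citet{horvath2013estimation} for $L^2$-$m$-approximable sequences, and then deduces $\check q_\alpha=O_p(d_T)$. You bypass any consistency statement with the direct bound
$\|\widehat\Lambda_{\KK,Zu}\|_{\mathrm{tr}}\le(2h+1)\sup_x|\mathrm{k}(x)|\,T^{-1}\sum_{t=1}^T\|Z_tu_{\KK,0,t}\|^2=O_p(h)$.
This needs only stationarity and $\mathbb E\|Z_tu_{\KK,0,t}\|^2<\infty$, which follows from the fourth moments by Cauchy--Schwarz.

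Your route is more elementary. It does not use weak dependence of the product sequence or any knowledge of what the estimator converges to, but it gives the cruder rate $O_p(h)$. The paper's route gives the sharper $O_p(d_T)$ and identifies the limit of the estimator under alternatives.

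Two small precisions for your version:
\begin{enumerate}
\item $h=o(T)$ is not part of Assumption~\ref{assum2}\ref{assum2aa}, which only bounds $h$ from below. It follows from $d_T\to\infty$ together with $d_T=o(\sqrt{T/h})$.
\item Kernels allowed by Assumption~\ref{assum2}\ref{assum2a} (e.g.\ Tukey--Hanning) need not give a nonnegative estimator. So apply Markov's inequality to $\sum_{j\le d_T}|\hat\lambda_{\KK,j}|\nu_j^2$, which yields $\check q_\alpha\le\alpha^{-1}\|\widehat\Lambda_{\KK,Zu}\|_{\mathrm{tr}}$ with the trace norm rather than the trace.
\end{enumerate}
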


\section{Supplement to Monte Carlo studies in Section~\ref{sec_simulation}}  \label{sec_app_simulation}

\subsection{Description of the DGP of \citet{seong2021functional}}\label{sec_app_simulationseong} 
We briefly describe the DGP of \citet{seong2021functional}, which is used in Section~\ref{sec_simulation3}; further details can be found in their paper.

Let $A$ be the integral operator with kernel $k_A(s,r) = 1-|s-r|^2$, and let $V_t$, $\mathcal E_t$, and $\eta_t$ be mutually independent i.i.d.\ standard Brownian bridges. We set $U_t = 0.8V_t + 0.6\mathcal E_t$ and generate
\begin{align} \label{eqcompresimul}
    Z_t = \frac{\Gamma(a_t+b_t)}{\Gamma(a_t)\Gamma(b_t)}s^{a_t-1}(1-s)^{b_t-1} + \eta_t, \qquad X_t = \vartheta Z_t + V_t,
\end{align}
where $a_t, b_t \sim_{\text{iid}} U[2,5]$ and $\vartheta \neq 0$. The construction $X_t = \vartheta Z_t + V_t$ gives a strong linear relationship between $X_t$ and $Z_t$; we refer to this as the informative design. Observe that the auxiliary variable, which we employ in the weakly informative design is given by $Z^\circ_t = \langle Z_t, f_2\rangle f_2 + \tilde\eta_t$ (see \eqref{eqztuninfo}). Thus, $Z^\circ_t$ retains only the projection of $Z_t$ onto a single Fourier basis function, plus additive noise, and is designed to violate the injectivity condition required by the functional IV test.

Following \citet[Section~S7]{seong2021functional}, for each simulation run, we set $\theta_0 = A^*\varphi$, where $\varphi = \sum_{j=1}^{11}a_j f_j$ with $a_j \sim N(0, 0.5^{2(j-1)})$, and compute rejection rates under $H_{1,\magni}: \theta = \theta_0 + \magni\bar\psi$, where $\bar\psi$ is drawn from the first 11 Fourier basis functions and normalized so that $\|\magni\tilde\theta\|^2 = \magni^2 \in [0, 0.5]$; specifically, $\bar\psi=\psi/\|\psi\|$ and $\psi=\sum_{j=1}^{11} \tilde{a}_j f_j$ with $\tilde a_j \sim N(0, 0.5^{2(j-1)})$. 

\subsection{Further simulation results}

\subsubsection{Simulation results for the with-intercept model}\label{sec_simulation2}
	We now consider the model \eqref{eqflm2}, which allows for $y_t$, $X_t$ and $Z_t$ to have possibly nonzero means. Let $\mu_y$, $\mu_X$ and $\mu_Z$ be the unconditional means of $y_t$, $X_t$ and $Z_t$, respectively. We let $y_t - \mu_y$, $X_t-\mu_X$, and $Z_t-\mu_Z$ be generated as $y_t$, $X_t$ and $Z_t$ in Section \ref{sec_simulation1}. For each simulation run, we set $\mu_y = \tilde{a}_0$ and let $\mu_X = \sum_{j=1}^3 \tilde{a}_j f_j  / \sqrt{ \sum_{j=1}^3 \tilde{a}_j^2}$ for i.i.d.\ standard normal random variables $\tilde{a}_j$ for $j=0,1,2,3$;  $\mu_Z$ is generated in the same manner as $\mu_X$.  
	
Table~\ref{tab_with_intercept} reports rejection rates for the proposed test. The results are qualitatively identical to those in Section~\ref{sec_simulation1}, supporting the theoretical findings of Section~\ref{sec_model_intercept}.  

\begin{table}[H]
\footnotesize 
\caption{Rejection rates under local-to-zero hypotheses (\%): with-intercept model} 
\label{tab_with_intercept}
\renewcommand*{\arraystretch}{0.5}
\setlength{\aboverulesep}{0.1ex}
\setlength{\belowrulesep}{0.1ex}
\setlength{\cmidrulesep}{0.1ex}
\begin{tabular*}{\textwidth}{@{\extracolsep{\fill}}cc cccc cccc cccc cccc@{}}
\toprule
& & \multicolumn{8}{c}{$\beta_u = 0.1$} & \multicolumn{8}{c}{$\beta_u = 0.25$} \\
\cmidrule(lr){3-10} \cmidrule(lr){11-18}
& & \multicolumn{4}{c}{$\mathrm{k} = \text{Bartlett}$}{\hspace{1.5pc}} & \multicolumn{4}{c}{$\mathrm{k} = \text{Parzen}$}{\hspace{1.5pc}} & \multicolumn{4}{c}{$\mathrm{k} = \text{Bartlett}$}{\hspace{1.5pc}} & \multicolumn{4}{c}{$\mathrm{k} = \text{Parzen}$} \\
\cmidrule(lr){3-6} \cmidrule(lr){7-10} \cmidrule(lr){11-14} \cmidrule(lr){15-18}
$T$ & $p \;\backslash\; \magni$ & 0 & 5 & 10 & 20 & 0 & 5 & 10 & 20 & 0 & 5 & 10 & 20 & 0 & 5 & 10 & 20 \\ 
\midrule\addlinespace[0.6ex]
& & \multicolumn{16}{c}{\textit{Panel A. Informative design}}\\[3pt]
100 & $\infty$ & 5.5 & 17.9 & 45.4 & 75.6 & 5.6 & 18.1 & 45.4 & 75.5 & 6.2 & 19.4 & 46.3 & 75.2 & 6.2 & 19.8 & 46.5 & 75.3 \\
     & 7.0 & 5.8 & 17.6 & 43.6 & 74.2 & 5.9 & 17.6 & 43.6 & 74.2 & 6.1 & 18.9 & 45.6 & 73.9 & 6.2 & 19.0 & 45.3 & 73.7 \\
     & 3.0 & 5.9 & 17.3 & 42.2 & 73.5 & 5.8 & 17.1 & 42.1 & 73.2 & 6.2 & 18.6 & 44.0 & 73.0 & 6.4 & 18.8 & 44.1 & 73.1 \\
     & 1.0 & 5.8 & 15.6 & 40.2 & 71.8 & 6.0 & 15.8 & 39.9 & 71.8 & 6.6 & 17.5 & 41.9 & 71.4 & 6.6 & 17.8 & 41.5 & 71.5 \\
     & 0.0 & 6.3 & 14.6 & 36.4 & 68.4 & 6.7 & 14.8 & 36.0 & 68.5 & 6.6 & 15.9 & 38.6 & 68.1 & 6.5 & 16.0 & 38.5 & 68.2 \\
\addlinespace[0.3ex]
200 & $\infty$ & 5.8 & 18.5 & 45.4 & 78.2 & 6.0 & 18.3 & 45.2 & 78.0 & 5.5 & 19.8 & 47.3 & 78.5 & 5.8 & 19.8 & 47.3 & 78.4 \\
     & 7.0 & 6.0 & 17.2 & 43.9 & 76.4 & 5.8 & 17.1 & 43.8 & 76.3 & 5.9 & 18.7 & 45.5 & 77.0 & 5.8 & 18.6 & 45.4 & 77.1 \\
     & 3.0 & 5.8 & 17.1 & 43.1 & 75.2 & 5.8 & 16.9 & 43.0 & 75.2 & 5.5 & 18.6 & 44.5 & 75.6 & 5.8 & 18.4 & 44.4 & 75.6 \\
     & 1.0 & 5.8 & 15.7 & 40.6 & 73.9 & 5.7 & 16.0 & 40.8 & 73.6 & 5.4 & 17.2 & 43.0 & 73.5 & 5.6 & 17.2 & 43.0 & 73.3 \\
     & 0.0 & 5.9 & 14.3 & 37.6 & 71.0 & 6.0 & 14.2 & 37.7 & 71.2 & 5.9 & 15.8 & 39.4 & 71.5 & 6.0 & 16.0 & 39.2 & 71.7 \\
\addlinespace[0.3ex]
400 & $\infty$ & 5.3 & 18.0 & 44.9 & 78.1 & 5.4 & 18.1 & 45.1 & 78.0 & 5.1 & 19.4 & 47.2 & 78.8 & 5.1 & 19.4 & 47.1 & 78.9 \\
     & 7.0 & 5.4 & 16.8 & 44.0 & 76.8 & 5.6 & 16.8 & 44.2 & 76.9 & 5.7 & 18.1 & 45.8 & 77.2 & 5.8 & 17.9 & 45.6 & 77.2 \\
     & 3.0 & 5.1 & 16.4 & 42.9 & 76.3 & 5.2 & 16.6 & 42.8 & 76.4 & 5.5 & 17.5 & 44.5 & 76.3 & 5.5 & 17.6 & 44.7 & 76.2 \\
     & 1.0 & 5.2 & 15.7 & 40.5 & 74.6 & 5.3 & 15.2 & 40.7 & 74.6 & 5.8 & 16.4 & 42.9 & 74.7 & 5.8 & 16.4 & 42.5 & 74.6 \\
     & 0.0 & 5.2 & 14.5 & 38.3 & 72.5 & 5.2 & 14.5 & 38.4 & 72.3 & 5.3 & 15.6 & 39.5 & 72.7 & 5.3 & 15.8 & 39.5 & 72.5 \\
\midrule\addlinespace[0.6ex]
& & \multicolumn{16}{c}{\textit{Panel B. Weakly informative design}}\\[3pt]
100 & $\infty$ & 5.7 & 13.7 & 33.5 & 59.7 & 5.9 & 14.2 & 33.7 & 59.7 & 5.8 & 15.0 & 34.2 & 59.7 & 5.9 & 15.4 & 34.4 & 59.8 \\
     & 7.0 & 5.4 & 13.9 & 31.4 & 58.6 & 5.5 & 14.1 & 31.8 & 58.5 & 5.5 & 15.0 & 32.8 & 59.1 & 5.5 & 15.1 & 33.0 & 58.7 \\
     & 3.0 & 5.9 & 13.0 & 30.8 & 57.7 & 6.1 & 13.2 & 30.6 & 57.9 & 6.0 & 13.6 & 31.8 & 58.3 & 6.0 & 14.1 & 31.9 & 58.1 \\
     & 1.0 & 6.0 & 12.1 & 29.2 & 55.6 & 6.3 & 12.0 & 29.3 & 55.8 & 6.2 & 13.2 & 30.9 & 56.5 & 6.3 & 13.2 & 31.0 & 56.3 \\
     & 0.0 & 6.2 & 11.6 & 27.4 & 53.3 & 6.5 & 11.8 & 27.2 & 53.3 & 6.2 & 12.4 & 29.1 & 53.6 & 6.3 & 12.6 & 28.9 & 53.5 \\
\addlinespace[0.3ex]
200 & $\infty$ & 5.7 & 13.8 & 34.2 & 61.9 & 5.6 & 14.0 & 34.0 & 62.0 & 5.8 & 14.8 & 34.9 & 63.0 & 5.8 & 14.6 & 34.9 & 63.1 \\
     & 7.0 & 5.4 & 13.4 & 33.1 & 60.1 & 5.3 & 13.4 & 33.2 & 60.1 & 5.6 & 14.2 & 34.2 & 61.5 & 5.6 & 14.0 & 34.2 & 61.5 \\
     & 3.0 & 5.2 & 13.0 & 32.4 & 59.0 & 5.2 & 12.8 & 32.3 & 59.0 & 5.1 & 13.8 & 32.9 & 60.1 & 5.2 & 13.8 & 32.8 & 60.1 \\
     & 1.0 & 5.5 & 12.2 & 30.3 & 57.6 & 5.4 & 12.2 & 30.2 & 57.9 & 5.5 & 12.8 & 31.4 & 58.6 & 5.5 & 12.8 & 31.1 & 58.6 \\
     & 0.0 & 5.7 & 11.2 & 27.6 & 55.6 & 5.8 & 10.9 & 27.4 & 55.5 & 5.8 & 11.8 & 28.9 & 56.0 & 5.7 & 11.8 & 28.6 & 56.1 \\
\addlinespace[0.3ex]
400 & $\infty$ & 4.9 & 14.3 & 33.8 & 62.9 & 5.0 & 14.4 & 33.8 & 62.8 & 5.1 & 15.2 & 35.1 & 63.6 & 5.1 & 15.3 & 35.2 & 63.7 \\
     & 7.0 & 5.2 & 14.2 & 33.6 & 61.7 & 5.3 & 14.1 & 33.7 & 61.6 & 5.3 & 15.1 & 34.0 & 62.5 & 5.5 & 14.9 & 34.2 & 62.6 \\
     & 3.0 & 5.3 & 13.4 & 32.0 & 60.7 & 5.4 & 13.4 & 32.1 & 60.8 & 5.6 & 14.6 & 32.9 & 61.6 & 5.6 & 14.6 & 32.8 & 61.4 \\
     & 1.0 & 5.1 & 12.9 & 30.4 & 59.6 & 5.3 & 13.1 & 30.4 & 59.7 & 5.6 & 14.0 & 31.4 & 60.1 & 5.6 & 14.0 & 31.4 & 60.0 \\
     & 0.0 & 5.3 & 12.0 & 28.6 & 57.5 & 5.3 & 11.8 & 28.8 & 57.5 & 5.7 & 12.8 & 29.4 & 57.7 & 5.7 & 12.6 & 29.4 & 57.6 \\
\bottomrule
\end{tabular*} 
\vspace{-0.5em}
{\scriptsize Notes: The table reports rejection rates for the hypotheses $H_1: \Theta = \magni/\sqrt{T}\, \langle \cdot,\bar\psi \rangle$ with sample size $T$. The nominal level is $5\%$. Test statistics are computed using $g_p(f)=C_p \int_0^1 r^p f(r)\,dr$ for $p<\infty$ and $g_{\infty}(f)=f(1)$.}
\end{table}
	
\subsubsection{Robustness to the choice of truncation level $d_T$}\label{sec_simulation2a}

In this section, we report additional simulation results using a larger truncation level, $d_T = 5 + \lceil T^{0.666}\rceil$, which exceeds the rate condition $d_T = o(\sqrt{T/h})$ required by Theorem~\ref{thm4}. Tables~\ref{tab_without_intercept_add} (without-intercept case) and \ref{tab_with_intercept_add}  (with-intercept case) can be compared directly with Tables~\ref{tab_without_intercept} and \ref{tab_with_intercept}, 
respectively. The results differ only marginally, and all qualitative conclusions from Section~\ref{sec_simulation} remain intact, confirming that the test is robust to the choice of $d_T$ in practice.

\begin{table}[H]
\footnotesize 
\caption{Rejection rates (\%): without-intercept model with larger $d_T$} 
\label{tab_without_intercept_add}
\renewcommand*{\arraystretch}{0.5}
\setlength{\aboverulesep}{0.1ex}
\setlength{\belowrulesep}{0.1ex}
\setlength{\cmidrulesep}{0.1ex}
\begin{tabular*}{\textwidth}{@{\extracolsep{\fill}}cc cccc cccc cccc cccc@{}}
\toprule
& & \multicolumn{8}{c}{$\beta_u = 0.1$} & \multicolumn{8}{c}{$\beta_u = 0.25$} \\
\cmidrule(lr){3-10} \cmidrule(lr){11-18}
& & \multicolumn{4}{c}{$\mathrm{k} = \text{Bartlett}$}{\hspace{1.5pc}} & \multicolumn{4}{c}{$\mathrm{k} = \text{Parzen}$}{\hspace{1.5pc}} & \multicolumn{4}{c}{$\mathrm{k} = \text{Bartlett}$}{\hspace{1.5pc}} & \multicolumn{4}{c}{$\mathrm{k} = \text{Parzen}$} \\
\cmidrule(lr){3-6} \cmidrule(lr){7-10} \cmidrule(lr){11-14} \cmidrule(lr){15-18}
$T$ & $p \;\backslash\; \magni$ & 0 & 5 & 10 & 20 & 0 & 5 & 10 & 20 & 0 & 5 & 10 & 20 & 0 & 5 & 10 & 20 \\ 
\midrule\addlinespace[0.6ex]
& & \multicolumn{16}{c}{\textit{Panel A. Informative design}}\\[3pt]
100 & $\infty$ & 5.4 & 16.3 & 43.5 & 76.4 & 5.5 & 16.3 & 43.5 & 76.3 & 5.8 & 18.1 & 45.1 & 76.9 & 5.8 & 18.2 & 45.0 & 76.6 \\
     & 7.0 & 5.1 & 16.2 & 41.7 & 75.9 & 5.3 & 16.2 & 41.5 & 75.5 & 5.4 & 17.8 & 44.1 & 75.6 & 5.5 & 17.8 & 43.6 & 75.4 \\
     & 3.0 & 5.0 & 16.1 & 40.5 & 74.9 & 5.1 & 16.2 & 40.2 & 74.6 & 5.3 & 17.7 & 42.4 & 74.2 & 5.6 & 17.9 & 42.2 & 74.2 \\
     & 1.0 & 5.3 & 15.2 & 38.5 & 72.9 & 5.3 & 15.2 & 38.6 & 72.9 & 5.5 & 16.9 & 40.7 & 72.5 & 5.7 & 17.0 & 40.6 & 72.2 \\
     & 0.0 & 5.0 & 13.8 & 35.9 & 68.9 & 5.0 & 14.0 & 35.4 & 68.7 & 5.3 & 15.4 & 37.6 & 68.7 & 5.4 & 15.4 & 37.6 & 68.6 \\
\addlinespace[0.3ex]
200 & $\infty$ & 5.8 & 18.4 & 45.9 & 77.6 & 5.9 & 18.4 & 45.9 & 77.7 & 5.6 & 19.8 & 47.0 & 77.5 & 5.7 & 19.7 & 46.8 & 77.4 \\
     & 7.0 & 5.3 & 18.0 & 43.3 & 77.1 & 5.3 & 18.0 & 43.0 & 77.1 & 5.4 & 18.6 & 45.3 & 76.6 & 5.4 & 18.4 & 45.1 & 76.6 \\
     & 3.0 & 5.5 & 17.0 & 42.1 & 75.9 & 5.4 & 17.1 & 42.4 & 76.1 & 5.1 & 17.9 & 43.5 & 76.2 & 5.2 & 17.5 & 43.6 & 76.0 \\
     & 1.0 & 5.3 & 15.6 & 40.2 & 74.1 & 5.3 & 15.6 & 40.2 & 74.0 & 5.2 & 17.0 & 41.4 & 74.4 & 5.3 & 17.2 & 41.3 & 74.5 \\
     & 0.0 & 5.3 & 14.2 & 36.5 & 71.9 & 5.4 & 14.2 & 36.9 & 71.7 & 5.2 & 16.2 & 38.6 & 72.0 & 5.4 & 16.5 & 38.6 & 71.8 \\
\addlinespace[0.3ex]
400 & $\infty$ & 5.5 & 18.9 & 46.3 & 79.7 & 5.4 & 18.9 & 46.0 & 79.7 & 5.4 & 20.2 & 47.7 & 79.2 & 5.4 & 19.9 & 47.6 & 79.1 \\
     & 7.0 & 5.1 & 18.1 & 45.3 & 78.0 & 5.1 & 18.1 & 45.2 & 78.0 & 5.1 & 19.1 & 46.8 & 77.7 & 5.1 & 18.9 & 46.6 & 77.6 \\
     & 3.0 & 5.2 & 17.8 & 43.4 & 76.7 & 5.1 & 18.0 & 43.3 & 76.8 & 5.1 & 18.3 & 45.5 & 76.5 & 5.2 & 18.4 & 45.6 & 76.6 \\
     & 1.0 & 5.4 & 16.2 & 41.6 & 75.4 & 5.3 & 16.2 & 41.5 & 75.2 & 5.1 & 17.2 & 42.8 & 75.4 & 5.1 & 17.3 & 42.9 & 75.4 \\
     & 0.0 & 5.5 & 15.6 & 38.9 & 73.0 & 5.5 & 15.6 & 39.0 & 73.0 & 5.5 & 16.6 & 40.4 & 73.6 & 5.5 & 16.5 & 40.2 & 73.4 \\
\midrule\addlinespace[0.6ex]
& & \multicolumn{16}{c}{\textit{Panel B. Weakly informative design}}\\[3pt]
100 & $\infty$ & 5.9 & 13.9 & 32.4 & 59.2 & 5.9 & 13.8 & 32.2 & 59.4 & 6.0 & 14.8 & 33.7 & 58.9 & 6.2 & 14.8 & 33.8 & 58.7 \\
     & 7.0 & 5.9 & 13.7 & 32.0 & 58.7 & 6.1 & 13.8 & 31.8 & 58.9 & 6.1 & 14.8 & 33.3 & 58.1 & 6.1 & 14.6 & 33.2 & 58.0 \\
     & 3.0 & 5.9 & 13.4 & 31.0 & 58.1 & 6.2 & 13.2 & 31.0 & 58.0 & 6.2 & 14.6 & 32.3 & 57.6 & 6.3 & 14.3 & 32.3 & 57.5 \\
     & 1.0 & 5.8 & 13.0 & 29.5 & 56.1 & 5.8 & 12.9 & 29.5 & 56.0 & 6.0 & 13.8 & 30.9 & 55.5 & 5.9 & 14.1 & 30.5 & 55.2 \\
     & 0.0 & 5.6 & 11.6 & 27.7 & 52.9 & 5.8 & 12.0 & 27.5 & 52.8 & 5.6 & 12.4 & 29.0 & 52.9 & 5.9 & 12.8 & 28.6 & 52.8 \\
\addlinespace[0.3ex]
200 & $\infty$ & 5.8 & 15.0 & 33.7 & 61.0 & 5.8 & 14.9 & 34.0 & 60.9 & 6.0 & 15.6 & 34.8 & 60.9 & 6.0 & 15.4 & 34.9 & 61.2 \\
     & 7.0 & 6.0 & 14.4 & 32.0 & 60.1 & 5.8 & 14.4 & 32.1 & 60.2 & 5.9 & 14.8 & 33.0 & 60.1 & 5.7 & 14.8 & 33.1 & 60.1 \\
     & 3.0 & 5.8 & 13.6 & 31.2 & 59.1 & 6.1 & 13.5 & 31.1 & 59.0 & 5.8 & 13.9 & 31.8 & 59.2 & 6.0 & 13.8 & 31.8 & 59.2 \\
     & 1.0 & 5.5 & 12.8 & 29.6 & 57.5 & 5.6 & 13.0 & 29.7 & 57.3 & 5.3 & 13.7 & 30.3 & 57.7 & 5.4 & 14.1 & 30.3 & 57.6 \\
     & 0.0 & 5.5 & 11.6 & 27.6 & 55.6 & 5.6 & 11.8 & 27.8 & 55.7 & 5.6 & 12.8 & 28.5 & 55.0 & 5.6 & 12.8 & 28.5 & 55.1 \\
\addlinespace[0.3ex]
400 & $\infty$ & 5.4 & 14.6 & 33.6 & 62.2 & 5.4 & 14.5 & 33.6 & 62.0 & 5.3 & 14.7 & 34.4 & 61.6 & 5.2 & 14.9 & 34.4 & 61.9 \\
     & 7.0 & 5.3 & 14.3 & 32.8 & 60.9 & 5.4 & 14.3 & 32.9 & 61.0 & 5.2 & 14.8 & 33.5 & 61.3 & 5.2 & 15.0 & 33.6 & 61.0 \\
     & 3.0 & 5.5 & 14.3 & 31.6 & 60.4 & 5.6 & 14.3 & 31.6 & 60.4 & 5.4 & 14.5 & 33.1 & 60.2 & 5.5 & 14.4 & 32.9 & 60.2 \\
     & 1.0 & 5.7 & 13.9 & 30.6 & 58.8 & 5.7 & 13.8 & 30.7 & 59.1 & 5.6 & 13.9 & 31.6 & 58.9 & 5.7 & 13.6 & 31.4 & 58.9 \\
     & 0.0 & 5.6 & 12.6 & 28.9 & 56.5 & 5.7 & 12.5 & 28.8 & 56.5 & 5.6 & 13.0 & 29.1 & 57.1 & 5.6 & 12.8 & 29.2 & 57.2 \\
\bottomrule
\end{tabular*} 
\vspace{-0.5em}
{\scriptsize Notes: The table reports rejection rates for the hypotheses $H_1: \Theta = \magni/\sqrt{T}\, \langle \cdot,\psi \rangle$ with sample size $T$. The nominal level is $5\%$. Test statistics are computed using $g_p(f)=C_p \int_0^1 r^p f(r)\,dr$ for $p<\infty$ and $g_{\infty}(f)=f(1)$.}
\end{table}

\begin{table}[H]
\footnotesize 
\caption{Rejection rates (\%): with-intercept model  with larger $d_T$}  
\label{tab_with_intercept_add}
\renewcommand*{\arraystretch}{0.5}
\setlength{\aboverulesep}{0.1ex}
\setlength{\belowrulesep}{0.1ex}
\setlength{\cmidrulesep}{0.1ex}
\begin{tabular*}{\textwidth}{@{\extracolsep{\fill}}cc cccc cccc cccc cccc@{}}
\toprule
& & \multicolumn{8}{c}{$\beta_u = 0.1$} & \multicolumn{8}{c}{$\beta_u = 0.25$} \\
\cmidrule(lr){3-10} \cmidrule(lr){11-18}
& & \multicolumn{4}{c}{$\mathrm{k} = \text{Bartlett}$}{\hspace{1.5pc}} & \multicolumn{4}{c}{$\mathrm{k} = \text{Parzen}$}{\hspace{1.5pc}} & \multicolumn{4}{c}{$\mathrm{k} = \text{Bartlett}$}{\hspace{1.5pc}} & \multicolumn{4}{c}{$\mathrm{k} = \text{Parzen}$} \\
\cmidrule(lr){3-6} \cmidrule(lr){7-10} \cmidrule(lr){11-14} \cmidrule(lr){15-18}
$T$ & $p \;\backslash\; \magni$ & 0 & 5 & 10 & 20 & 0 & 5 & 10 & 20 & 0 & 5 & 10 & 20 & 0 & 5 & 10 & 20 \\ 
\midrule\addlinespace[0.6ex]
& & \multicolumn{16}{c}{\textit{Panel A. Informative design}}\\[3pt]
100 & $\infty$ & 6.2 & 17.5 & 43.9 & 76.0 & 5.9 & 17.9 & 43.9 & 76.1 & 5.9 & 19.8 & 47.2 & 76.1 & 5.9 & 19.7 & 47.4 & 75.9 \\
     & 7.0 & 5.8 & 16.7 & 43.4 & 75.3 & 5.6 & 16.8 & 43.1 & 75.2 & 6.2 & 17.9 & 46.7 & 75.0 & 6.1 & 18.3 & 46.8 & 74.9 \\
     & 3.0 & 6.0 & 16.6 & 42.4 & 74.5 & 6.1 & 16.2 & 42.5 & 74.4 & 6.3 & 17.8 & 45.6 & 74.1 & 6.3 & 17.9 & 45.7 & 74.1 \\
     & 1.0 & 6.0 & 15.4 & 39.6 & 72.4 & 6.0 & 15.4 & 40.0 & 72.4 & 6.1 & 17.0 & 43.4 & 72.7 & 6.2 & 17.2 & 43.6 & 72.5 \\
     & 0.0 & 5.6 & 14.4 & 36.9 & 70.4 & 5.5 & 14.4 & 37.0 & 70.0 & 5.8 & 16.0 & 40.6 & 70.0 & 5.5 & 16.3 & 40.8 & 70.0 \\
\addlinespace[0.3ex]
200 & $\infty$ & 5.8 & 17.3 & 45.6 & 78.0 & 5.8 & 17.4 & 45.5 & 78.0 & 6.0 & 18.2 & 47.8 & 77.8 & 6.1 & 18.1 & 47.9 & 77.8 \\
     & 7.0 & 5.9 & 16.3 & 44.0 & 76.9 & 5.8 & 16.4 & 43.9 & 76.8 & 6.4 & 17.9 & 45.6 & 76.9 & 6.4 & 17.8 & 45.8 & 76.8 \\
     & 3.0 & 5.8 & 16.1 & 42.7 & 75.9 & 5.6 & 16.3 & 42.6 & 75.8 & 6.1 & 17.4 & 44.9 & 76.0 & 6.1 & 17.4 & 44.7 & 75.7 \\
     & 1.0 & 5.4 & 15.6 & 40.6 & 74.2 & 5.5 & 15.4 & 40.4 & 74.0 & 5.8 & 16.7 & 42.4 & 74.2 & 5.8 & 17.0 & 42.6 & 74.0 \\
     & 0.0 & 5.4 & 14.2 & 37.8 & 71.0 & 5.6 & 14.2 & 37.9 & 70.9 & 5.6 & 15.3 & 40.1 & 71.6 & 5.7 & 15.6 & 40.3 & 71.2 \\
\addlinespace[0.3ex]
400 & $\infty$ & 5.2 & 18.1 & 46.9 & 79.1 & 5.3 & 18.1 & 46.6 & 79.0 & 5.1 & 19.0 & 48.6 & 79.3 & 5.2 & 19.1 & 48.6 & 79.3 \\
     & 7.0 & 5.0 & 17.3 & 45.4 & 78.1 & 5.1 & 17.0 & 45.1 & 78.0 & 5.3 & 19.0 & 46.9 & 78.5 & 5.3 & 18.8 & 46.8 & 78.4 \\
     & 3.0 & 5.4 & 16.6 & 44.0 & 76.8 & 5.4 & 16.8 & 44.1 & 76.9 & 5.4 & 17.8 & 45.1 & 77.6 & 5.4 & 17.7 & 45.3 & 77.6 \\
     & 1.0 & 5.6 & 15.2 & 41.6 & 75.3 & 5.6 & 15.3 & 41.6 & 75.2 & 5.7 & 16.9 & 44.0 & 75.9 & 5.8 & 16.9 & 44.1 & 75.8 \\
     & 0.0 & 5.8 & 14.9 & 38.1 & 73.7 & 5.8 & 14.8 & 38.6 & 73.5 & 5.8 & 16.2 & 40.2 & 74.0 & 6.0 & 16.1 & 40.3 & 74.0 \\
\midrule\addlinespace[0.6ex]
& & \multicolumn{16}{c}{\textit{Panel B. Weakly informative design}}\\[3pt]
100 & $\infty$ & 5.8 & 14.2 & 33.9 & 61.6 & 5.6 & 14.7 & 33.8 & 61.6 & 5.7 & 15.5 & 35.1 & 61.1 & 5.8 & 15.5 & 34.9 & 61.2 \\
     & 7.0 & 5.9 & 13.6 & 33.6 & 61.0 & 6.2 & 14.0 & 33.3 & 60.9 & 6.2 & 14.3 & 34.5 & 60.6 & 6.4 & 14.9 & 34.5 & 60.5 \\
     & 3.0 & 5.9 & 13.6 & 32.1 & 59.9 & 6.2 & 13.5 & 32.1 & 60.0 & 6.2 & 14.2 & 33.7 & 59.5 & 6.4 & 14.5 & 33.9 & 59.5 \\
     & 1.0 & 6.2 & 12.7 & 30.0 & 58.2 & 6.3 & 12.6 & 30.0 & 57.9 & 6.4 & 13.2 & 32.2 & 57.0 & 6.4 & 13.3 & 32.6 & 57.0 \\
     & 0.0 & 5.5 & 11.6 & 27.4 & 54.7 & 5.8 & 11.9 & 27.8 & 54.7 & 5.6 & 12.6 & 29.3 & 54.1 & 6.0 & 12.7 & 29.4 & 54.0 \\
\addlinespace[0.3ex]
200 & $\infty$ & 5.8 & 13.4 & 34.2 & 63.4 & 5.8 & 13.5 & 34.5 & 63.5 & 5.9 & 13.9 & 35.3 & 63.3 & 5.9 & 13.9 & 35.4 & 63.3 \\
     & 7.0 & 5.5 & 13.4 & 32.8 & 62.2 & 5.5 & 13.5 & 32.8 & 62.2 & 5.8 & 14.3 & 34.0 & 62.6 & 5.9 & 14.2 & 34.1 & 62.8 \\
     & 3.0 & 5.5 & 13.7 & 32.4 & 61.3 & 5.5 & 13.6 & 32.2 & 61.2 & 5.6 & 14.3 & 33.7 & 61.8 & 5.8 & 14.2 & 33.8 & 61.8 \\
     & 1.0 & 5.7 & 13.5 & 31.1 & 59.9 & 5.8 & 13.5 & 30.9 & 59.6 & 6.2 & 14.1 & 31.8 & 60.0 & 6.2 & 13.9 & 31.9 & 59.9 \\
     & 0.0 & 5.6 & 12.4 & 28.7 & 57.2 & 5.8 & 12.4 & 28.7 & 57.0 & 5.5 & 13.0 & 29.7 & 57.6 & 5.6 & 13.0 & 29.5 & 57.5 \\
\addlinespace[0.3ex]
400 & $\infty$ & 5.8 & 15.4 & 35.3 & 64.5 & 5.8 & 15.4 & 35.2 & 64.7 & 5.8 & 15.4 & 36.0 & 64.8 & 5.8 & 15.5 & 36.2 & 64.5 \\
     & 7.0 & 5.7 & 14.5 & 33.8 & 63.0 & 5.7 & 14.6 & 34.0 & 62.8 & 5.5 & 15.2 & 34.5 & 63.6 & 5.6 & 15.2 & 34.5 & 63.5 \\
     & 3.0 & 5.8 & 14.3 & 33.0 & 62.3 & 5.9 & 14.3 & 33.1 & 62.2 & 5.6 & 14.9 & 33.7 & 62.9 & 5.6 & 14.8 & 33.8 & 62.6 \\
     & 1.0 & 5.9 & 13.0 & 31.2 & 60.6 & 5.9 & 13.0 & 31.3 & 60.8 & 5.9 & 13.6 & 32.1 & 61.1 & 5.9 & 13.8 & 32.0 & 61.0 \\
     & 0.0 & 5.8 & 12.9 & 28.3 & 59.3 & 5.8 & 12.9 & 28.3 & 59.1 & 5.6 & 12.9 & 29.5 & 59.7 & 5.8 & 12.8 & 29.6 & 59.8 \\
\bottomrule
\end{tabular*} 
\vspace{-0.5em}
{\scriptsize Notes: The table reports rejection rates for the hypotheses $H_1: \Theta = \magni/\sqrt{T}\, \langle \cdot,\psi \rangle$ with sample size $T$. The nominal level is $5\%$. Test statistics are computed using $g_p(f)=C_p \int_0^1 r^p f(r)\,dr$ for $p<\infty$ and $g_{\infty}(f)=f(1)$.}
\end{table}

	\section{Mathematical appendix} \label{sec_app1}
\subsection{\(L^p\)-\(m\)-approximability} \label{sec_app1_lpm}
The \(L^p\)-\(m\)-approximability is a widely adopted notion of weak-dependence for functional time series. The precise definition used in the present paper is as follows: 
	\begin{definition}[$L^p$-$m$-approximability]\label{def1} Let $\mathscr{H}$ be a Hilbert space with norm $\|\cdot\|_{\mathscr{H}}$. For an integer $p>0$, a random sequence $\xi_t$ in $\mathscr{H}$ is called $L^p$-$m$-approximable if the following hold:
\begin{enumerate}[(i)]
    \item For some measurable function $\mathfrak{G}$ and i.i.d.\ elements $e_t$ taking values in $\mathscr{H}$, $\xi_t = \mathfrak{G}(e_t, e_{t-1}, \ldots)$.
    \item For some $\delta \in (0,1)$ and $\varrho > 1$, $\mathbb{E}[\xi_t] = 0$, $\mathbb{E}(\|\xi_t\|^{p+\delta}_{\mathscr{H}}) < \infty$, 
    and
    \begin{equation*}
        \mathbb{E}\!\left(\|\xi_t - \xi_{t,m}\|^{p+\delta}_{\mathscr{H}}\right)^{1/(p+\delta)} 
        = O(m^{-\varrho}),
    \end{equation*}
    where $\xi_{t,m} = \mathfrak{G}(e_t, \ldots, e_{t-m+1}, e_{t,t-m}^{(m)}, e_{t,t-m-1}^{(m)}, \ldots)$ and $\{e_{t,k}^{(m)}\}$ are independent copies of $e_t$.
\end{enumerate}
\end{definition}
Definition~\ref{def1} and related variants are standard weak-dependence conditions in functional data analysis; see, e.g., \citealp{hormann2010, berkes2013weak, horvath2014test, HORVATH2016676}.
	
\subsection{Useful lemmas}
In this section, we present some useful lemmas used in the paper.
	\begin{lemma}\label{lem0} Let $\SSS(r)$ and $\RRR(r)$ be defined as in Section \ref{sec_test} and let $g$ satisfy \ref{propg1} and \ref{propg2}. 
		Then $\|g (\RRR)\|_{\op} = \|g(\SSS)\|$.
	\end{lemma}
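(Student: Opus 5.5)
The plan is to establish the norm identity in two steps: first show $\|g(\RRR)\|_{\op}\le \|g(\SSS)\|$, then exhibit a test vector that attains this bound. The starting point is the pointwise relation between $\RRR$ and $\SSS$. For each $r$ and each $v\in\mathcal H$, the definition of the tensor product gives $(Z_t\otimes y_t)(v)=\langle Z_t,v\rangle y_t$ and $(Z_t\otimes \Theta_0X_t)(v)=\langle Z_t,v\rangle\langle X_t,\theta_0\rangle$. Hence
\[
\RRR(r)(v)=\frac1T\sum_{t=1}^{\fll{Tr}}\langle Z_t,v\rangle\{y_t-\langle X_t,\theta_0\rangle\}=\langle \SSS(r),v\rangle .
\]
So, for fixed $v$, the real-valued process $r\mapsto \RRR(r)(v)$ is exactly the process $r\mapsto\langle\SSS(r),v\rangle$, which lies in $\mathbb D_{\mathbb R}[0,1]$.

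Next I need to make sense of $g(\RRR)$ as a map $\mathcal H\to\mathbb R$. The natural definition, which I take to be the intended one, is $g(\RRR)(v):=g(\RRR(\cdot)(v))$; this is the only definition consistent with (G1) when $g$ is applied to operator-valued paths. With that definition, the identity above and (G2) give
\[
g(\RRR)(v)=g(\langle \SSS,v\rangle)=\langle g(\SSS),v\rangle \quad\text{for every } v\in\mathcal H .
\]
Linearity of $g$ makes $g(\RRR)$ linear in $v$. Its continuity follows from the identity itself, since $g(\SSS)\in\mathcal H$ by (G1), so $g(\RRR)$ is the bounded functional $\langle g(\SSS),\cdot\rangle$.

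The conclusion is then the standard Riesz isometry. By Cauchy--Schwarz, $|g(\RRR)(v)|\le\|g(\SSS)\|\,\|v\|$, which gives $\|g(\RRR)\|_{\op}\le\|g(\SSS)\|$. For the reverse inequality, if $g(\SSS)\neq0$ I take $v=g(\SSS)/\|g(\SSS)\|$, for which $g(\RRR)(v)=\|g(\SSS)\|$; this attains the bound. If $g(\SSS)=0$, both sides are zero.

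The only delicate point is the interpretive step of defining $g$ on operator-valued paths; the argument rests on that definition and on the measurability/càdlàg property of $\langle\SSS,v\rangle$. The latter is immediate because $\SSS$ is a step function in $r$.
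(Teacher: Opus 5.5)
Your proof is correct and follows the paper's argument. You derive the identity $g(\RRR)(v)=\langle g(\SSS),v\rangle$ from (G2), use Cauchy--Schwarz for the upper bound, and use the test vector $g(\SSS)/\|g(\SSS)\|$ for sharpness, treating $g(\SSS)=0$ separately; your explicit check that $\RRR(r)(v)=\langle \SSS(r),v\rangle$ and your definition $g(\RRR)(v):=g(\RRR(\cdot)(v))$ simply make precise what the paper's notation $g(\langle \SSS,\cdot\rangle)(v)$ leaves implicit.
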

	\begin{proof}[Proof of Lemma \ref{lem0}]
		Note that for every $v\in \mathcal H$ with $\|v\|=1$,
		\begin{align}
			g(\RRR)(v) =   g (\langle  \SSS , \cdot \rangle)(v) =   \langle  g(\SSS) , v \rangle.
		\end{align}
		From the Cauchy-Schwarz inequality, we find that $\|g(\RRR)(v)\|\leq \|g(\SSS)\|\|v\|$, which implies that $\|g(\RRR)\|_{\op} \leq \|g(\SSS)\|$. If \(g(\SSS)=0\), then the reverse inequality is trivial. If \(g(\SSS)\neq0\), the bound is sharp since
 $$g(\RRR)(g(\SSS)/\|g(\SSS)\|) = \|g(\SSS)\|.$$
	\end{proof}
	\begin{lemma}\label{lem1} Under Assumption \ref{assum1},
		\begin{enumerate}[(i)]
			\item\label{lem1a} $\{Z_tu_t\}_{t \geq 1}$ is $L^{2}$-$m$-approximable in $\mathcal H$.
			\item \label{lem1b} $\{X_t\otimes Z_t - \mathbb{E}(X_t\otimes Z_t)\}_{t \geq 1}$ is $L^{2}$-$m$-approximable in the space of Hilbert-Schmidt operators $\mathcal S_{\mathcal H}$.
		\end{enumerate}    
	\end{lemma}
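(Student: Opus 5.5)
We prove the two parts separately, each time reducing to preservation of $L^p$-$m$-approximability under products. The plan is to write each product as the same measurable function of the underlying innovations, take $m$-dependent coupled versions, and control the coupling error via Hölder's inequality using the $L^4$-$m$-approximability in Assumption~\ref{assum1}.

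\emph{Part \ref{lem1a}.} Since $(Z_t,u_t)$ are $L^4$-$m$-approximable, they are measurable functions of a common i.i.d.\ innovation sequence (taking the innovations to be the joint ones, which is implicit in the assumption). Hence $Z_tu_t=\mathfrak G_{Zu}(e_t,e_{t-1},\ldots)$, and the natural coupled version is $Z_{t,m}u_{t,m}$. The key bound is
\[
\|Z_tu_t-Z_{t,m}u_{t,m}\|\le \|Z_t-Z_{t,m}\|\,|u_t|+\|Z_{t,m}\|\,|u_t-u_{t,m}|.
\]
Take the $(2+\delta')$-th moment and apply Hölder/Cauchy--Schwarz, e.g.
\[
\mathbb E\bigl[\|Z_t-Z_{t,m}\|^{2+\delta'}|u_t|^{2+\delta'}\bigr]^{1/(2+\delta')}\le \bigl(\mathbb E\|Z_t-Z_{t,m}\|^{4+2\delta'}\bigr)^{1/(4+2\delta')}\bigl(\mathbb E|u_t|^{4+2\delta'}\bigr)^{1/(4+2\delta')}.
\]
Choosing $\delta'=\delta/2$ makes $4+2\delta'=4+\delta$, so the first factor is $O(m^{-\varrho})$ and the second is finite. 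The term with $Z_{t,m}$ is handled the same way, since $Z_{t,m}$ has the same distribution as $Z_t$. The moment condition $\mathbb E\|Z_tu_t\|^{2+\delta'}<\infty$ follows from the same Cauchy--Schwarz step.

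One subtlety is mean zero. Definition~\ref{def1} requires $\mathbb E[\xi_t]=0$, and $\mathbb E[Z_tu_t]=C_{Zu}=0$ holds by Assumption~\ref{assum1}(ii) together with the mean-zero assumption. This is where Assumption~\ref{assum1}(ii) enters.

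\emph{Part \ref{lem1b}.} The argument is the same, now in $\mathcal S_{\mathcal H}$, using $\|x\otimes z\|_{\mathcal S}=\|x\|\|z\|$. This gives
\[
\|X_t\otimes Z_t-X_{t,m}\otimes Z_{t,m}\|_{\mathcal S}\le\|X_t-X_{t,m}\|\|Z_t\|+\|X_{t,m}\|\|Z_t-Z_{t,m}\|.
\]
Centering by $\mathbb E(X_t\otimes Z_t)$ is common to both terms and cancels in the difference. It also gives mean zero and preserves finite moments. The Hölder step is then identical to part \ref{lem1a}.

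The main obstacle is the exponent bookkeeping. $L^4$-approximability controls moments of order $4+\delta$, and we need order $2+\delta'$ for the product, so we must pick $\delta'\in(0,1)$ with $2(2+\delta')\le 4+\delta$. A joint representation in common innovations is also needed; we take it as given by the assumption.
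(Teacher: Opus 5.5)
Your proposal is correct and follows essentially the same route as the paper. Both arguments split the product difference with the triangle inequality and apply Cauchy--Schwarz at the $(2+\delta/2)$-th moment, so that the $(4+\delta)$-moment coupling bounds in Assumption~\ref{assum1} give the $O(m^{-\varrho})$ rate, and both handle part~\ref{lem1b} the same way in $\mathcal S_{\mathcal H}$. Beyond using Minkowski's inequality in $L^{2+\delta/2}$ where the paper uses a convexity bound, you also make explicit two points the paper treats as obvious: the need for a common innovation sequence, and the mean-zero requirement $\mathbb E[Z_tu_t]=0$, which follows from $C_{Zu}=0$.
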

	\begin{proof}[Proof of Lemma \ref{lem1}]
		We first show $L^2$-$m$-approximability of $\{Z_tu_t\}_{t \geq 1}$. 
		Since $Z_t$ and $u_t$ are both $L^4$-$m$-approximable, $Z_t$ (resp.\ $u_t$) admits the representation $Z_t=\kappa^Z(e_{Z,t},e_{Z,t-1},\ldots)$ (resp.\ $u_t=\kappa^u(e_{u,t},e_{u,t-1},\ldots)$) for some i.i.d.\ sequence $\{e_{Z,t}\}$ in $\mathcal H$ (resp.\ $\{e_{u,t}\}$ in $\mathbb{R}$) for some measurable function $\kappa^Z$ (resp.\ $\kappa^u$), and the following hold: for some $\delta\in (0,1)$, 
		\begin{align}
			\mathbb{E}[\|Z_t-Z_{m,t}\|^{4+\delta}]^{1/(4+\delta)} < \infty\quad \text{and}\quad 		\mathbb{E}[\|u_t-u_{m,t}\|^{4+\delta}]^{1/(4+\delta)} < \infty,
		\end{align}
		where 
		\begin{align}
			Z_{m,t}&=\kappa^Z(e_{Z,t},e_{Z,t-1},\ldots,e_{Z,t-m+1},e_{Z,t-m}',e_{Z,t-m-1}',\ldots),\\
			u_{m,t}&=\kappa^u(e_{u,t},e_{u,t-1},\ldots,e_{u,t-m+1},e_{u,t-m}',e_{u,t-m-1}',\ldots),
		\end{align}
		and $\{e_{Z,t}'\}$ (resp.\ $\{e_{u,t}'\}$) is an independent copy of $\{e_{Z,t}\}$ (resp.\ $\{e_{u,t}\}$) defined on the same probability space (see Definition 2.1 of \citealp{hormann2010}). Since the other requirements for $L^{2}$-m-approximability are obvious, we focus here only on the finiteness of $\mathbb{E}[\|Z_t u_t - Z_t^{(m)} u_t^{(m)}\|^{2+\delta/2}]$, as in the proof of Lemma 2.1 of \cite{hormann2010}; of course, in the present paper, $(\mathbb{E}(\|Z_t u_t - Z_t^{(m)} u_t^{(m)}\|^{2+\delta/2}))^{1/(2+\delta/2)} = O(m^{-\varrho})$ is required, and this will be shown as well.
		\begin{align}
			\|Z_tu_t - Z_t^{(m)}u_t^{(m)}\|^{2+\delta/2} &\leq  \left(	\|Z_t - Z_t^{(m)}\|\|u_t\| + \|Z_t^{(m)}\|\|u_t-u_t^{(m)}\|\right)^{2+\delta/2} \notag \\ &\leq  2^{2+\delta/2}   \left(	\|Z_t - Z_t^{(m)}\|^{2+\delta/2}\|u_t\|^{2+\delta/2} + \|Z_t^{(m)}\|^{2+\delta/2}\|u_t-u_t^{(m)}\|^{2+\delta/2}\right), \label{eqmat1}
		\end{align}
		where the second inequality follows from convexity of $h(x)=\|x\|^p$. From \eqref{eqmat1} and the Cauchy-Schwarz inequality, we find that 
		\begin{align}
			&\mathbb{E}\left(\left\|{Z_tu_t - Z_t^{(m)}u_t^{(m)}}\right\|^{2+\delta/2}\right) \\ &\leq  O(1)\left(({\mathbb{E}[\|Z_t - Z_t^{(m)}\|^{4+\delta}]\mathbb{E}[\|u_t\|^{4+\delta}]})^{\frac{2+\delta/2}{4+\delta}} + ({\mathbb{E}[\|Z_t^{(m)}\|^{4+\delta}]\mathbb{E}[\|u_t-u_t^{(m)}\|^{4+\delta}]})^{\frac{2+\delta/2}{4+\delta}}\right)   	\label{eqmat2}
		\end{align}
		and deduce from Assumption \ref{assum1} that the right hand side is finite. 

Note also that both $(\mathbb{E}[\|Z_t - Z_t^{(m)}\|^{4+\delta}])^{1/(4+\delta)}$ and $(\mathbb{E}[\|u_t - u_t^{(m)}\|^{4+\delta}])^{1/(4+\delta)}$ are $O(m^{-\varrho})$ under Assumption \ref{assum1}, and thus we deduce from \eqref{eqmat2} that $(\mathbb{E}\|Z_tu_t - Z_t^{(m)}u_t^{(m)}\|^{2+\delta/2})^{1/(2+\delta/2)} = O(m^{-\varrho})$.
		
		$L^2$-$m$-approximability of $\{X_t\otimes Z_t - \mathbb{E}(X_t\otimes Z_t)\}_{t \geq 1}$ can similarly be shown with moderate modification, noting that $X_t$ also admits the representation $X_t=\kappa^X(e_{X,t},e_{X,t-1},\ldots)$  for some i.i.d.\ sequence $\{e_{X,t}\}$ in $\mathcal H$ for some measurable function $\kappa^X$, and the following hold: for some $\delta \in (0,1)$, $\mathbb{E}[\|X_t-X_t^{(m)}\|^{4+\delta}]^{1/(4+\delta)} < \infty$,
		where $	X_t^{(m)}=\kappa^X(e_{X,t},e_{X,t-1},\ldots,e_{X,t-m+1},e_{X,t-m}',e_{X,t-m-1}',\ldots)$ 	and $\{e_{X,t}'\}$ is an independent copy of $\{e_{X,t}\}$ defined on the same probability space. If we let $W_t = X_t\otimes Z_t - \mathbb{E}(X_t\otimes Z_t)$ and $W_t^{(m)}=X_t^{(m)} \otimes Z_t^{(m)} -\mathbb{E}(X_t\otimes Z_t)$, then from similar arguments used in the proof of Lemma 2.1 of \cite{hormann2010}, the following can be shown:  
		\begin{equation}
			\|W_t-W_t^{(m)}\|_{\mathcal S_{\mathcal H}}^{2+\delta/2} \leq  O(1)\left(\|X_t\|\|Z_t-Z_t^{(m)}\| + \|Z_t^{(m)}\|\|X_t-X_t^{(m)}\| \right)^{2+\delta/2},
		\end{equation}
where $	\|\cdot\|_{\mathcal S_{\mathcal H}}$ denotes the Hilbert-Schmidt norm. 
		Using similar arguments used in \eqref{eqmat1} and \eqref{eqmat2}, we find that $	(\|W_t-W_t^{(m)}\|_{\mathcal S_{\mathcal H}}^{2+\delta/2})^{1/(2+\delta/2)}$ is finite and also $O(m^{-\varrho})$.
		
	\end{proof}	
	
\begin{lemma}\label{lem2}
Under Assumption~\ref{assumw}, $D_w\leq 1$. Moreover, equality holds if and only if the map 
\[
    s\mapsto \int_s^1 w(r)\mu(dr)
\]
is constant $\mu$-almost everywhere.
\end{lemma}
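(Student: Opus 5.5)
The plan is to rewrite the numerator $\int_0^1 r w(r)\mu(dr)$ as an integral of the tail function $W(s):=\int_s^1 w(r)\mu(dr)$ against $\mu$, and then apply the Cauchy--Schwarz inequality in $L^2(\mu)$ on $[0,1]$.

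\textbf{Step 1: Fubini representation.} In both the continuous case ($\mu$ Lebesgue) and the discrete case ($\mu=\sum_j (r_j-r_{j-1})\delta_{r_j}$), we have $\mu([0,r])=r$ for each atom or point $r$ relevant to the integral. In the discrete case this holds at the partition points $r_i$, since $\sum_{j\le i}(r_j-r_{j-1})=r_i$. Hence
\begin{equation*}
\int_0^1 r\,w(r)\mu(dr)=\int_0^1\Big(\int_{[0,r]}\mu(ds)\Big)w(r)\mu(dr)=\int_0^1\Big(\int_{[s,1]} w(r)\mu(dr)\Big)\mu(ds)=\int_0^1 W(s)\mu(ds),
\end{equation*}
by Fubini, which applies because $w$ is bounded and $\mu$ is finite. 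Care is needed with whether the inner interval is closed at $s$: the definition of $C_w$ in \eqref{eqCw} uses $\int_s^1$, and I will interpret it as including $s$ so that the identity is exact for atoms. Verifying that this convention matches \eqref{eqCw} in the discrete case is the one delicate point. In the Lebesgue case endpoints are immaterial.

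\textbf{Step 2: Cauchy--Schwarz.} Since $\mu([0,1])=1$,
\begin{equation*}
\int_0^1 W\,d\mu\le\Big(\int_0^1 W^2\,d\mu\Big)^{1/2}\Big(\int_0^1 1\,d\mu\Big)^{1/2}=\Big(\int_0^1 W^2\,d\mu\Big)^{1/2}.
\end{equation*}
Dividing by the right-hand side gives $D_w\le 1$. The denominator is positive: by Assumption~\ref{assumw}, $w\ge0$ and $\int_0^1 r w\,d\mu>0$, so $\int W\,d\mu>0$ and therefore $\int W^2\,d\mu>0$.

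\textbf{Step 3: Equality case.} Equality in Cauchy--Schwarz holds if and only if $W$ and the constant function $1$ are linearly dependent in $L^2(\mu)$. Since $W\not\equiv0$ $\mu$-a.e., this means $W$ is $\mu$-a.e.\ equal to a constant $c$, and here $c>0$ because $\int W\,d\mu>0$. Conversely, if $W=c$ $\mu$-a.e., then both sides of the inequality equal $|c|$, so $D_w=1$ (with $c>0$ by the sign). This gives the stated characterization.
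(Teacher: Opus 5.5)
Your proof is correct and is essentially the paper's argument: Fubini with $\mu([0,r])=r$ turns the numerator into $\int_0^1 W\,d\mu$, Cauchy--Schwarz against the constant $1$ in $L^2(\mu)$ gives $D_w\le 1$, and the equality case is the linear-dependence condition. Your reading of $\int_s^1$ as integration over the closed interval $[s,1]$ is exactly the convention the paper's Fubini step implicitly relies on. It is also what makes $\int_0^1 W^2\,d\mu$ equal the variance of $\int_0^1 \xi(r)w(r)\mu(dr)$ for a standard Brownian motion $\xi$ in the proof of Theorem~\ref{thm2}, so the ``delicate point'' resolves in your favor.
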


\begin{proof}
Let $F_w(s)=\int_s^1 w(r)\mu(dr)$. By Fubini's theorem and the definition of $\mu$ in \eqref{eqteststatpre3},
$$\int_0^1 F_w(s)\mu(ds) =\int_0^1\int_s^1 w(r)\mu(dr)\mu(ds)=\int_0^1 r w(r)\mu(dr),$$
where note that the last equality follows because $\mu([0,r])=r$ for both the Lebesgue case and the discrete measure $\mu=\sum_{j=1}^N(r_j-r_{j-1})\delta_{r_j}$. Hence, by the Cauchy--Schwarz inequality,
\begin{equation*}
    C_w^{-2} = \int_0^1 F_w(s)^2\mu(ds)  \geq \left(\int_0^1 F_w(s)\mu(ds)\right)^2  = \left(\int_0^1 r w(r)\mu(dr)\right)^2 .
\end{equation*}
Therefore,
$$ D_w^2 = C_w^2\left(\int_0^1 r w(r)\mu(dr)\right)^2
    \leq 1.$$
For the inequality to become the equality, there must exist a constant $\alpha$ such that 
\begin{equation} \label{eqoptest1}
\alpha  \int_{s}^1 w(r) \mu(dr) = 1,
\end{equation}
which is implied by the necessary and sufficient condition for the Cauchy-Schwarz inequality to become equality; see, e.g,. \citep[p.~3\ ]{Conway1994}. This proves the lemma.
\end{proof}

	\subsection{Proofs}
\subsubsection{Proofs of the results in Sections \ref{sec_test}-\ref{sec_feasible}}
\begin{proof}[Proof of Theorem~\ref{thm1}]
Observe that $\psi= \theta-\theta_0$ and 
\begin{equation}\label{eqthmpf01}
\SSS(r)= \frac{1}{T}\sum_{t=1}^{\lfloor Tr\rfloor} Z_t\{y_t-\langle X_t,\theta_0\rangle\} =    \frac{1}{T}\sum_{t=1}^{\lfloor Tr\rfloor} (Z_t\{u_t+\langle X_t,\ddd\rangle\} - C_{XZ}\psi) + r C_{XZ}\psi.
\end{equation}
By Lemma~\ref{lem1} and Lemma 2.1 of \cite{hormann2010}, we find that  $\{Z_tu_t\}$ and $\{Z_t\langle X_t,{\psi}\rangle-C_{XZ}{\ddd}  \}$ are both $L^2$-$m$-approximable (see Lemma \ref{lem1} and also Lemma 2.1 of \citealp{hormann2010}), from which it is deduced that $\{\eta_t(\ddd)\}$ is $L^2$-$m$-approximable. Hence, by 
\citet[Theorem~1.1]{berkes2013weak}, 
\begin{equation}\label{eqthmpf02}
  \sup_{0\le r\le1} \left\|\frac{1}{\sqrt T}\sum_{t=1}^{\lfloor Tr\rfloor}\eta_t(\ddd)-\mathcal N_{\ddd}(r)
    \right\|    =    o_p(1),
\end{equation}
where $\mathcal N_{\ddd}$ is an $\mathcal H$-valued Brownian motion with 
long-run covariance operator $\Lambda_{\ddd}$. From \eqref{eqthmpf01} and \eqref{eqthmpf02}
\begin{equation*}
   \sup_{0\le r\le1} \left\| \sqrt T\{\SSS(r)-rC_{XZ}\ddd\} - \mathcal N_{\ddd}(r)\right\| = o_p(1),
\end{equation*}
as desired. Dividing \eqref{eqthm1_general} by $\sqrt T$ also gives
\[
    \sup_{0\le r\le1}
    \|\SSS(r)-rC_{XZ}\ddd\|=o_p(1),
\]
which proves the first assertion. 
\end{proof}

\begin{proof}[Proofs of Theorem~\ref{thm2} and Corollary~\ref{cortest1}]
Note that $g_w$ is a bounded linear map from $\mathbb D_{\mathcal H}[0,1]$ to $\mathcal H$ under the sup norm. Specifically, 
\begin{equation*}
 \|g_w(f)\|  =   \left\| C_w\int_0^1 f(r)w(r)\mu(dr)  \right\|  \leq C_w \left(\sup_{0\leq r\leq 1}\|f(r)\|\right)\int_0^1 |w(r)|\mu(dr),
\end{equation*}   
Hence the continuous mapping theorem can be applied to $g_w$. Observe that $ \sqrt T g_w(\SSS)= g_w(\sqrt T\SSS)$, we find that $\sqrt{T}g_w(\SSS)$ converges weakly in $\mathbb D_{\mathcal H}[0,1]$ to $  g_w(\mathcal N_0) = \mathcal G_w$. Moreover, it is obvious that 
\begin{equation}   
 \|\sqrt T g_w(\SSS)\|^2    \to_d    \|\mathcal G_w\|^2 ,
\end{equation}
which proves the result under $H_0$. 

Under  $H_{\magni}$ with $\magni>0$ and $\gamma_{\rr}\neq0$, we know from Theorem \ref{thm1} that $\sup_{0\le r\le1}
    \|\SSS(r)-r\mathcal M\|\to_p0$, where $\mathcal M=\magni\gamma_{\rr}C_{XZ}\psi_{\rr}$. Since $g_w$ is continuous, we find that $g_w(\SSS)\to_p g_w(r\mathcal M)$. Observe that 
\begin{equation}
 g_w(r\mathcal M)  =  C_w\int_0^1 r w(r)\mu(dr)\,\mathcal M  =  D_w\mathcal M  =  \magni\gamma_{\rr}D_w C_{XZ}\psi_{\rr}.
\end{equation}
Under Assumption~\ref{assumw}, $D_w>0$, and since 
$\gamma_{\rr}\neq0$ and $\psi_{\rr}\notin\ker C_{XZ}$, we have 
$g_w(r\mathcal M)\neq0$. Hence $\|\sqrt T g_w(\SSS)\|^2 =T\|g_w(\SSS)\|^2 \to_p \infty$ follows.

It only remains to characterize the distribution of $\|\mathcal G_w\|^2$ under $H_0$. Let $\{(\lambda_j,v_j)\}_{j\ge1}$ be the eigenpairs of $\Lambda_{Zu}$ in \eqref{spec_decom}. The Brownian motion $\mathcal N_0$ admits the expansion $\mathcal N_0(r) =\sum_{j=1}^{\infty}\sqrt{\lambda_j}\,\xi_j(r)v_j,$ where $\{\xi_j\}_{j\ge1}$ are i.i.d.\ standard Brownian motions. Therefore,
\begin{equation}
 \mathcal G_w = g_w(\mathcal N_0) = \sum_{j=1}^{\infty} \sqrt{\lambda_j} g_w(\xi_j)v_j,
\end{equation}
  where $g_w(\xi_j) = C_w\int_0^1 \xi_j(r)w(r)\mu(dr)$. Since the Brownian motions $\{\xi_j\}_{j\ge1}$ are independent, $\{g_w(\xi_j)\}_{j\ge1}$ are independent. We will now show that the distribution of each $g_w(\xi_j)$ is standard normal. First, it is straightforward to see that $g_w(\xi_j)$ is normally distributed with mean zero, since $\xi_j$ is the standard Brownian motion. Moreover, $g_w(\xi_j)$ has unit variance, which can be shown by the covariance identity for Brownian motion and the definition of $C_w$:
\begin{equation}
 \Var(g_w(\xi_j))= C_w^2 \mathbb{E}\left[\left(\int_0^1 \xi_j(r)w(r)\mu(dr) \right)^2\right] = C_w^2 \int_0^1 \left(\int_s^1 w(r)\mu(dr)\right)^2\mu(ds)  =  1;
\end{equation}
the above hold regardless of whether $\mu$ is the Lebesgue measure or $\mu$ is the discrete measure given by $\mu = \sum_{j=1}^N (r_j-r_{j-1}) \delta_{r_j}$ for any partition $0=r_0<r_1<\ldots<r_N=1$. Since $\{g_w(\xi_j)\}_{j\ge1}$ are i.i.d.\ standard normal random variables, and $\{v_j\}_{j\geq 1}$ is an orthonormal system, we find that 
\begin{equation*}
    \|\mathcal G_w\|^2 =\sum_{j=1}^{\infty}\lambda_j\nu_j^2.
\end{equation*}
The result in Corollary~\ref{cortest1} now follows immediately. Specifically, under $H_0$, $\|\sqrt T g_w(\SSS)\|^2 \to_d \sum_{j=1}^{\infty}\lambda_j\nu_j^2$, and under  $H_{\magni}$ with $\magni>0$ and $\gamma_{\rr}\neq0$, $ \|\sqrt T g_w(\SSS)\|^2\to_p\infty$.
\end{proof}

\begin{proof}[Proof of Theorem~\ref{thm3}]
Under $H_{\magni,T}$, write $\theta-\theta_0=\magni\bar\psi/\sqrt T$, then $\sqrt  T\SSS(r)$ can be written as follows:
\begin{equation*}
 \sqrt T\SSS(r)= \frac{1}{\sqrt{T}}\sum_{t=1}^{\lfloor Tr\rfloor} Z_t\{y_t-\langle X_t,\theta_0\rangle\}=  \frac{1}{\sqrt T} \sum_{t=1}^{\lfloor Tr\rfloor}Z_tu_t + \frac{\magni}{T} \sum_{t=1}^{\lfloor Tr\rfloor} Z_t\langle X_t,\bar\psi\rangle .
\end{equation*}
 By Theorem~\ref{thm1} under $H_0$, we find that $\sup_{0\leq r\leq 1}\|T^{-1/2}\sum_{t=1}^{\lfloor Tr\rfloor}Z_tu_t-\mathcal N_0(r)\| \to_p 0$. We also find from the property of $\SSS$ under the alternative that 
\begin{equation}
\sup_{0\leq r\leq 1} \left\|  \frac{1}{T}\sum_{t=1}^{\lfloor Tr\rfloor} Z_t\langle X_t,\bar\psi\rangle - rC_{XZ}\bar\psi\right\| \to_p 0
\end{equation}
Combining these results, we find that 
\begin{equation}
 \sup_{0\leq r\leq 1} \left\| \sqrt T\SSS(r)- \mathcal N_0(r)-\magni rC_{XZ}\bar\psi\right\| \to_p 0.
\end{equation}
Applying the continuous mapping theorem to $g_w$ gives
\begin{equation}
 \sqrt T g_w(\SSS) \to_d g_w(\mathcal N_0) + \magni C_w\int_0^1 r w(r)\mu(dr)\,C_{XZ}\bar\psi
    =    \mathcal G_w+\magni D_w C_{XZ}\bar\psi,
\end{equation}
from which the desired result immediately follows.
\end{proof}

\begin{proof}[Proof of Theorem~\ref{thmopt}]
Within the normalized class \(g_w=C_wg_w^{\raw}\), local power is maximized by maximizing \(D_w\), and Lemma~\ref{lem2} shows that \(D_w\leq1\), with equality if and only if the map $s\mapsto \int_s^1 w(r)\mu(dr)$ 
is constant \(\mu\)-almost everywhere.

First, if we consider the representative choice ($\mu=\delta_1, w(r)=1$ for $r\in [0,1]$), then $g_w^{\raw}(f) =f(1)$ with $C_w = 1$. Hence \(g_w(f)=C_wg_w^{\raw}(f)=f(1)\), and the equality condition in Lemma~\ref{lem2} is satisfied. Therefore \(D_w=1\).

Conversely, suppose that a choice of \(w\) and \(\mu\) attains \(D_w=1\). Then, by Lemma~\ref{lem2}, the map $s\mapsto \int_s^1 w(r)\mu(dr)$ must be constant \(\mu\)-almost everywhere. In the discrete case, if the support of
\(\mu\) is written as \(0\leq r_1<\cdots<r_N=1\), this implies, for each \(i<N\), 
$$w(r_i)\mu(\{r_i\}) =\int_{r_i}^1 w(r)\mu(dr)-\int_{r_{i+1}}^1 w(r)\mu(dr)=0.$$
Thus all nonzero effective weight is placed at \(r=1\). After the normalization in \eqref{eqCw}, this again gives \(g_w(f)=f(1)\). If $\mu$ is the Lebesgue measure, then under Assumption \ref{assumw}, it is impossible to ensure the constancy of the map  $s\mapsto \int_s^1 w(r)\mu(dr)$. Hence for all choices attaining \(D_w=1\)
are equivalent after normalization, and the locally optimal test is uniquely represented by the endpoint evaluation.
\end{proof}

	\begin{proof}[Proof of Theorem \ref{thm4}] 
		The proof is divided into two parts: we first provide and prove some preliminary results, and then prove Theorem \ref{thm4}. In this proof, without loss of generality, we let $c=1$  in Assumption \ref{assum2} and thus let $\mathrm{k}(\cdot /h)$ be supported on $[-1,1]$. It is obvious that there exists a constant $M$ such that $\mathrm{k}(\cdot /h) \leq M$. 
Extending the subsequent proof to the case with $c>0$ requires only a minor modification. \\
		
		\noindent \textbf{1. Preliminary results}:
		We let $v_{0,t} = Z_t u_{0,t}$,  
		$\bar{v}_{0,T} = T^{-1}\sum_{t=1}^T v_{0,t}$, and let 
		\begin{align} \label{eqpf002}
			\widetilde{\Gamma}_s = \begin{cases}
			 \sum_{t=s+1}^T v_{0,t-s}\otimes v_{0,t}, \,\, &\text{ if } s \geq 0,\\
			 \sum_{t=-s+1}^T v_{0,t}\otimes v_{0,t+s}, \,\, &\text{ if } s < 0.
			\end{cases}  
		\end{align} 
		If $\theta_0=\theta$, $v_{0,t} = Z_tu_t$ and $\bar{v}_{0,T} = O_p(T^{-1/2})$. We thus find that
		\begin{align}
			T^{-1}\sum_{s=0}^{h}\mathrm{k}(s/h)\sum_{t=s+1}^T v_{0,t-s}\otimes \bar{v}_{0,T}&= O_p(T^{-1}h), \quad T^{-1}\sum_{s=-h}^{-1}\mathrm{k}(s/h)\sum_{t=-s+1}^T v_{0,t}\otimes \bar{v}_{0,T}&= O_p(T^{-1}h). 
		\end{align}
		Moreover, it is also deduced that $
		T^{-1}\sum_{s=-h}^{h}\mathrm{k}(s/h)\sum_{t=1}^T  \bar{v}_{0,T}\otimes \bar{v}_{0,T}= O_p(T^{-1}h).$ 
		From these results, we find that 
		\begin{align} \label{eqpf02}
			\widehat{\Lambda}_{Zu} = T^{-1}\sum_{s=-h}^{h}\mathrm{k}(s/h) \widetilde{\Gamma}_s + O_p(T^{-1}h).
		\end{align} 
		Using similar arguments used in the proof of Lemma \ref{lem1}, we can find that $Z_tu_t$ is an $L^4$-$m$-approximable sequence under the conditions given in Assumptions \ref{assum1} and \ref{assum2}. Moreover, based on the moment condition on $Z_tu_t$ in Assumption \ref{assum2} and Theorems 2.1 and 2.2 of   \cite{BERKES2016150}, we find that
		\begin{equation}\label{eqpfa01}
			T^{-1}\sum_{s=-h}^{h}\mathrm{k}(s/h) (\widetilde{\Gamma}_s-\mathbb{E}[\widetilde{\Gamma}_s]) = O_p(T^{-1/2}h^{1/2}). 
		\end{equation} 
		Noting that $\mathbb{E}[(T-|s|)^{-1}\widetilde{\Gamma}_s]=\Gamma_s$ for each $s$ and $\sum_{s=-\infty}^{\infty}\mathrm{k}(s/h)|s| \Gamma_s \leq O(h) \sum_{s=-\infty}^{\infty} \|\Gamma_s\|_{\op} = O(h)$, we find that 
		\begin{align}\label{eqpfa02}
			T^{-1}\sum_{s=-\infty}^{\infty}\mathrm{k}(s/h) \mathbb{E}[\widetilde{\Gamma}_s]&= \sum_{s=-\infty}^{\infty}\mathrm{k}(s/h) (1-|s|/T) \Gamma_s = \sum_{s=-\infty}^{\infty}\mathrm{k}(s/h)\Gamma_s  + O_p(T^{-1}h).
		\end{align}
		Let $\tilde{h} = h (\log h)^{-1}$ for large enough $h$.  We write 
		\begin{align}\label{eqpfa03}
			\sum_{s=-\infty}^{\infty}\mathrm{k}(s/h)\Gamma_s = \sum_{s=-\infty}^{\infty} \Gamma_s + A_1 + A_2 - A_3, 
		\end{align}
		where $A_1 = \sum_{|s|\leq \tilde{h}} (\mathrm{k}(s/h)-1) \Gamma_s$,  $A_2 = \sum_{|s|> \tilde{h}} \mathrm{k}(s/h) \Gamma_s$, and  $A_3 = \sum_{|s|> \tilde{h}}\Gamma_s$. 
		\begin{align}\label{eqpfa04}
			\|A_3\|_{\op} \leq \sum_{|s|> \tilde{h}}\|\Gamma_s\|_{\op} \leq |\tilde{h}|^{-\tilde{\varphi}} \sum_{|s|> \tilde{h}}  |s|^{\tilde{\varphi}} \|\Gamma_s\|_{\op} = O({h}^{-\tilde{\varphi}} \log^{\tilde{\varphi}} h) = O(h^{-\varphi}).
		\end{align}
		Since $A_2\leq \sum_{|s|> \tilde{h}} |\mathrm{k}(s/h)|\|\Gamma_s\|_{\op} \leq \sum_{|s|>\tilde{h}} \|\Gamma_s\|_{\op}$, we also find that 
		\begin{align}\label{eqpfa05}
			\|A_2\|_{\op}\leq O(h^{-\varphi}).
		\end{align}
		Since $\tilde{h}/h \to 0$, we note that $\limsup_{h\to\infty}\max_{0<|s|\leq \tilde{h}}|(\mathrm{k}(s/h)-1) (|s|/h)^{-\varphi}| \leq \tilde{m} < \infty$. Therefore, 
		\begin{align}\label{eqpfa06}
			\|A_1\|_{\op} \leq O_p(h^{-\varphi}) \sum_{0<|s|\leq\tilde{h}}  \frac{|(\mathrm{k}(s/h)-1)|}{(|s|/h)^\varphi} |s|^\varphi \|\Gamma_s\|_{\op} \leq O_p(h^{-\varphi})  \sum_{0<|s|\leq\tilde{h}} |s|^\varphi \|\Gamma_s\|_{\op} = O(h^{-\varphi}).
		\end{align}
		Note that $h^{2\varphi+1}/T \to  c_\varphi \in (0,\infty]$, which implies that $O(h^{-\varphi})=O(\sqrt{h/T})$. Combining this result with \eqref{eqpfa01}-\eqref{eqpfa06} and the fact that $h/T \to 0$, we find that 
		\begin{equation} \label{eqpflambda1}
			\widehat{\Lambda}_{Zu}= \Lambda_{Zu} + O_p(T^{-1/2}h^{1/2} + h^{-\varphi}) = \Lambda_{Zu} + O_p(T^{-1/2}h^{1/2}) =  \Lambda_{Zu} + o_p(1).
		\end{equation}
	
		Now suppose that $\theta-\theta_0 = \psi$; in this case  $u_{0,t} = u_t + x_t$, where $x_t = \langle X_t, \theta - \theta_0 \rangle$. Then we have  $v_{0,t}-\bar{v}_{0,T} = v_t+  v_{x,t} - \overline{v}_T - \overline{v}_{x,T}$, where $v_t = Z_tu_t$, $v_{x,t} = Z_tx_t$, $ \overline{v}_T =T^{-1}\sum_{t=1}^T v_t$ and $ \overline{v}_{x,T} =T^{-1}\sum_{t=1}^T v_{x,t}$. 	Let $\varsigma_t=v_t+v_{x,t}$ and $\bar \varsigma_T=\bar v_T+\bar v_{x,T}$. Observe that $   \widehat{\Lambda}_{Zu} =  T^{-1}\sum_{s=-h}^{h}\mathrm{k}(s/h)\widehat{\Gamma}_{\varsigma,s}+o_p(1)$, where $\widehat{\Gamma}_{\varsigma,s}= \sum_{t=s+1}^{T}
(\varsigma_{t-s}-\bar \varsigma_T)\otimes(\varsigma_t-\bar \varsigma_T)$ and $\widehat{\Gamma}_{\varsigma,s}= \sum_{t=-s+1}^{T} (\varsigma_{t}-\bar \varsigma_T)\otimes(\varsigma_{t+s}-\bar \varsigma_T)$.  Given that (i) $v_t +  (v_{x,t}-\mathbb{E}[v_{x,t}])$ is an $L^2$-$m$-approximable sequence (Lemma 2.1, \citealp{hormann2010}) and (ii)  $\overline{v}_T = O_p(T^{-1/2})$ and $\overline{v}_{x,T} - \mathbb{E}[v_{x,t}] = O_p(T^{-1/2})$ (Theorem 1, \citealp{horvath2013estimation}), 
		we find from Theorem 2 of \cite{horvath2013estimation} establishing convergence of kernel estimators of the long-run covariance operator of $L^2$-$m$-approximable sequence that 
		\begin{align}
			\widehat{\Lambda}_{Zu} \to_p \Lambda_{v,v_x}, 
		\end{align}
		where $\Lambda_{v,v_x}$ is the long-run covariance operator of $v_t+v_{x,t}$, which is guaranteed to exist (Corollary 4.1, \citealp{hormann2010}). \\
		
		\noindent \textbf{2. Proof of the desired results}:
		If  $H_0$ is true,  we find from \eqref{eqpflambda1} and Lemma 4.2 of \cite{Bosq2000} that \begin{equation}\label{eqeigen01}
\sup_{j}|\hat{\lambda}_{j}-\lambda_j| = O_p(T^{-1/2}h^{1/2} + h^{-\varphi}) = O_p(T^{-1/2}h^{1/2}),
\end{equation}
where the last equality follows from that  $h^{2\varphi+1}/T \to  c_\varphi \in (0,\infty]$ and hence $O(h^{-\varphi})=O(\sqrt{h/T})$.
 Let $d_T$ be a divergent sequence satisfying $d_T = o(T^{1/2}h^{-1/2} )$. We then find that  
		\begin{align} \label{eqpf001}
			\left| \sum_{j=1}^{d_T} \hat{\lambda}_{j} \nu_{j}^2 -\sum_{j=1}^\infty {\lambda}_{j} \nu_{j}^2  \right|  \leq 	\left| \sum_{j=1}^{d_T} \hat{\lambda}_{j}\nu_{j}^2 -\sum_{j=1}^{d_T} {\lambda}_{j} \nu_{j}^2 -  \sum_{j=d_T+1}^\infty {\lambda}_{j} \nu_{j}^2  \right| &\leq \left| \sum_{j=1}^{d_T} (\hat{\lambda}_{j} -\lambda_j)\nu_{j}^2  \right| + \left|\sum_{j={d_T}+1}^\infty {\lambda}_{j}\nu_{j}^2  \right| \\
			&\leq O_p({{d_T}} T^{-1/2}h^{1/2}) + o_p(1) = o_p(1),
		\end{align}
where we used the fact that  ${\Lambda}_{Zu}$ is a trace-class operator and  $\limsup_{T\to\infty}\sum_{j=1}^\infty {\lambda}_{j} < \infty$, and hence $\sum_{j=d_T+1}^\infty {\lambda}_{j} \to_p 0$ as $d_T\to \infty$. Thus, the quantiles from $\sum_{j=1}^{d_T} \hat{\lambda}_{j} \nu_{j}^2$ converge to those of $\sum_{j=1}^\infty {\lambda}_{j} \nu_{j}^2$.

On the other hand, suppose that $H_{\magni}$ holds with $\magni>0$ and $\gamma_{\rr}\neq0$. By Theorem~\ref{thm2}, 
$T^{-1}\|\sqrt T g_w(\SSS)\|^2=\|g_w(\SSS)\|^2 \to_p \|D_w\mathcal M\|^2>0$. Moreover, as shown above, due to the convergence of $\widehat{\Lambda}_{Zu}$, it is straightforward to see that $\hat q_\alpha = O(d_T)$, and hence $T^{-1} \hat q_\alpha \to_p 0$. This implies that 
\[
    \mathbb P\{\|\sqrt T g_w(\SSS)\|^2>\hat q_\alpha\} = \mathbb P\{T^{-1}\|\sqrt T g_w(\SSS)\|^2>T^{-1} \hat q_\alpha\}\to1,
\]
as desired.
	\end{proof}

\subsubsection{Proofs of the results in Sections \ref{sec_model_intercept_supp}-\ref{app_sec_multiple}}
\begin{proof}[Proof of Theorem~\ref{thm1add}]
Recall that  $\psi=\theta-\theta_0$, $Z_{c,t}=Z_t-\mu_Z$, and $X_{c,t}=X_t-\mu_X$. Since $y_t=\mu+\langle X_t,\theta\rangle+u_t$ and $\mathbb E[u_t]=0$, we have $\mu_y=\mu+\langle\mu_X,\theta\rangle$. Hence $y_t-\mu_y-\langle X_t-\mu_X,\theta_0\rangle=u_t+\langle X_{c,t},\psi \rangle$. For notational simplicity, we let  
$$   \xi_t=u_t+\langle X_{c,t}, \psi\rangle,
    \quad
    \bar \xi_T=T^{-1}\sum_{t=1}^T \xi_t, \quad  \bar Z_{c,T}=T^{-1}\sum_{t=1}^T Z_{c,t}.$$
Note that $\SSS_c(r)=\frac1T\sum_{t=1}^{\lfloor Tr\rfloor}(Z_{c,t}-\bar Z_{c,T})(\xi_t- \bar \xi_T)$. Thus,
\begin{align*}
    \SSS_c(r)
    &=
    \frac1T\sum_{t=1}^{\lfloor Tr\rfloor}Z_{c,t}\xi_t
    -
    \bar Z_{c,T}\frac{1}{T}\sum_{t=1}^{\lfloor Tr\rfloor}\xi_t - \bar \xi_T \frac{1}{T}\sum_{t=1}^{\lfloor Tr\rfloor}Z_{c,t}
    +
    \frac{\lfloor Tr\rfloor}{T}\bar Z_{c,T}\bar \xi_T.
\end{align*}
Under Assumption~\ref{assum1add}, the sequences 
$\{Z_{c,t}\}$ and $\{\xi_t\}$ are $L^2$-$m$-approximable. Hence $\bar Z_{c,T}=O_p(T^{-1/2})$ and $\bar\xi_T = O_p(T^{-1/2})$, and furthermore, as in our proof of Theorem \ref{thm1}, we find that $T^{-1/2}\sum_{t=1}^{\lfloor Tr\rfloor}Z_{c,t}=O_p(1)$ and also $T^{-1/2}\sum_{t=1}^{\lfloor Tr\rfloor}\xi_t= O_p(1)$ uniformly in $r \in [0,1]$. Therefore, we deduce that
\begin{equation}\label{eqcentering}
    \sup_{0\le r\le1}\left\|\sqrt T\SSS_c(r)-\frac1{\sqrt T}\sum_{t=1}^{\lfloor Tr\rfloor}Z_{c,t}\xi_t\right\|=o_p(1).
\end{equation}
Define $\eta_{c,t}(\psi) =  Z_{c,t}\xi_t-C_{XZ}\psi$ similarly to the proof of Theorem \ref{thm1}. By Assumption~\ref{assum1add} and the same argument as in Lemma~\ref{lem1}, $\{\eta_{c,t}(\psi)\}$ is $L^2$-$m$-approximable. Hence, by \citet[Theorem~1.1]{berkes2013weak},
\begin{equation*}
 \sup_{0\le r\leq1}\left\| \frac1{\sqrt T}\sum_{t=1}^{\lfloor Tr\rfloor}\eta_{c,t}(\psi)-\mathcal N_{c,\psi}(r)\right\|=o_p(1),
\end{equation*}
where $\mathcal N_{c,\psi}$ is an $\mathcal H$-valued Brownian motion with 
long-run covariance operator $\Lambda_{c,\psi}=\sum_{s=-\infty}^{\infty}\mathbb E[\eta_{c,t}(\psi)\otimes \eta_{c,t-s}(\psi)]$. Combining this with \eqref{eqcentering} gives
\begin{equation}\label{eqintercept_general}
    \sup_{0\leq r\leq 1}  \left\| \sqrt T\{\SSS_c(r)-rC_{XZ}\psi\}-\mathcal N_{c,\psi}(r)\right\|=o_p(1).
\end{equation}

Under $H_0$, $\psi=0$ and $\mathcal N_{c,0}$ has long-run covariance operator 
$\Lambda_{c,Zu}$. Therefore, by the continuous mapping theorem and the same argument 
as in the proof of Theorem~\ref{thm2},
$$\|\sqrt T g_w(\SSS_c)\|^2 \to_d\sum_{j=1}^{\infty}\lambda_{c,j}\nu_j^2.$$
This proves the desired property under $H_0$. 

When $H_{\magni}$ holds with $\magni>0$ and $C_{XZ}\psi\neq 0$ (i.e., $\gamma_{\rr}\neq0$), we note that $C_{XZ}\psi=\magni\gamma_{\rr}C_{XZ}\psi_{\rr}\neq0$. It follows from \eqref{eqintercept_general} that
$$\sup_{0\leq r\leq1} \|\SSS_c(r)-rC_{XZ}\psi\| = o_p(1).$$
Consequently, we find that 
$$\|\sqrt T g_w(\SSS_c)\|^2 = T\|g_w(\SSS_c)\|^2\to_p\infty,$$
which proves  the desired property under $H_{\magni}$.

Finally, under Assumption~\ref{assum2add}, the same argument as in the proof of Theorem~\ref{thm4}, with $Z_t$ replaced by $Z_{c,t}$ and $u_{0,t}$ replaced by its centered counterpart, shows that under $H_0$
$$\widehat{\Lambda}_{c,Zu}=\Lambda_{c,Zu}+O_p(T^{-1/2}h^{1/2}).$$
Hence the nearly identical arguments used in our proof of Theorem~\ref{thm4} show that the feasible critical value $\tilde q_\alpha$ consistently estimates $q_\alpha$. Under $H_{\magni}$ with $\magni>0$ and $C_{XZ}\psi\neq 0$, the same argument shows that $\tilde q_\alpha$ does not grow faster than the diverging statistic, so replacing $q_\alpha$ by $\tilde q_\alpha$ does not alter the consistency conclusion.
\end{proof}

\begin{proof}[Proof of Proposition~\ref{propadd1}]
Let $\theta\in\mathcal H$ be the unique solution to $m_{\mathcal X y}=C_{\mathcal X\mathcal X}\theta$. Define $   \mu=\mu_y-\langle\theta,\mathbb E[\mathcal X_t]\rangle$ and $\varepsilon_t=y_t-\mu-\langle\theta,\mathcal X_t\rangle$. Then, by construction, we have $y_t=\mu+\langle\theta,\mathcal X_t\rangle+\varepsilon_t$. It is straightforward to show that $\mathbb E[\varepsilon_t]=0$. From Lemma 2.1 of \cite{hormann2010}, the $L^4$-$m$-approximability of $\varepsilon_t$ immediately follows. Furthermore, since $\varepsilon_t=y_{c,t}-\langle\theta,\mathcal X_{c,t}\rangle$, we have $\mathbb E[\mathcal X_{c,t}\varepsilon_t]=\mathbb E[\mathcal X_{c,t}y_{c,t}]-\mathbb E[\mathcal X_{c,t}\langle\theta,\mathcal X_{c,t}\rangle]=m_{\mathcal X y}-C_{\mathcal X\mathcal X}\theta=0$.
Thus $\mathcal X_t$ and $\varepsilon_t$ are uncorrelated.

Finally, note that $C_{\mathcal X y}(v)=\mathbb E[\langle\mathcal X_{c,t},v\rangle y_{c,t}]=\langle m_{\mathcal X y},v\rangle$ for all $v \in \mathcal H$. Hence $C_{\mathcal X y}=0$ if and only if $m_{\mathcal X y}=0$. If $\theta=0$, then $m_{\mathcal X y}=C_{\mathcal X\mathcal X}\theta=0$, and hence $C_{\mathcal X y}=0$. Conversely, if $C_{\mathcal X y}=0$, then $m_{\mathcal X y}=0$, so $\theta=0$ is a solution to $m_{\mathcal X y}=C_{\mathcal X\mathcal X}\theta$. By uniqueness of the solution, we obtain $\theta=0$.
\end{proof}

				\begin{proof}[Proof of Theorem \ref{thm3add}] 
Note that
			\begin{equation}\label{eqadd01}
				\hat{y}_t = \frac{1}{T}\sum_{j=1}^\KK \sum_{s=1}^T  \cont_{j,s} \left(\sum_{k=1}^\KK \beta_k \cont_{k,s} + \langle X_s,\theta\rangle+ u_s\right)\cont_{j,t}.  
			\end{equation}
			Observe that 
\begin{align*}
\frac{1}{T}\sum_{j=1}^\KK\sum_{s=1}^T  \cont_{j,s}\langle X_s,\theta\rangle \cont_{j,t} 
 &=\left\langle \frac{1}{T}\sum_{j=1}^\KK\sum_{s=1}^T\sum_{\ell=1}^\infty\cont_{j,s}\langle X_s,v_\ell\rangle \cont_{j,t}v_\ell,\theta\right\rangle  \notag\\
&=\left\langle \sum_{j=1}^\KK\sum_{\ell=1}^\infty\hat\beta_{X,j,\ell}\cont_{j,t}v_\ell,\theta\right\rangle=\langle \hat X_t,\theta\rangle.\label{eqadd03}
\end{align*}
Let $\hat u_t=\sum_{j=1}^\KK(T^{-1}\sum_{s=1}^T \cont_{j,s}u_s)\cont_{j,t}$. Then the sample residuals satisfy
\begin{equation}\label{eqadd05}
    y_t-\hat y_t = \langle X_t-\hat X_t,\theta\rangle  + (u_t-\hat u_t).
\end{equation}

			From similar algebra, we also find that 
			\begin{equation}\label{eqadd00}
				{y}_{\cont,t}  = \langle X_{\cont,t}, \theta \rangle + u_{\cont,t}.
			\end{equation}
Note that
\begin{align} \label{pfadd01addadd}
\frac{1}{\sqrt T}\sum_{t=1}^{\lfloor Tr\rfloor}(Z_t-\hat Z_t)(y_t-\hat y_t)&=\frac{1}{\sqrt T}\sum_{t=1}^{\lfloor Tr\rfloor}Z_{\cont,t}y_{\cont,t}+\frac{1}{\sqrt T}\sum_{t=1}^{\lfloor Tr\rfloor}Z_{\cont,t}(\tilde y_t-\hat y_t) \notag\\
&\quad+
\frac{1}{\sqrt T}
\sum_{t=1}^{\lfloor Tr\rfloor}
(\tilde Z_t-\hat Z_t)y_{\cont,t}
+
\frac{1}{\sqrt T}
\sum_{t=1}^{\lfloor Tr\rfloor}
(\tilde Z_t-\hat Z_t)(\tilde y_t-\hat y_t).
\end{align}
			Observe that 
			\begin{align} 
				\frac{1}{\sqrt{T}}\sum_{t=1}^{\fll{Tr}} Z_{\cont,t}  (\hat{y}_t - \tilde{y}_t)  &= \frac{1}{\sqrt{T}}\sum_{t=1}^{\fll{Tr}}Z_{\cont,t} \left(\sum_{j=1}^\KK  (\hat{\beta}_{y,j}-{\beta}_{y,j}) \cont_{j,t}\right)  \notag \\&= \sum_{j=1}^\KK \frac{1}{\sqrt{T}}\sum_{t=1}^{\fll{Tr}}Z_{\cont,t}  \left(\frac{1}{T}\sum_{s=1}^T (\cont_{j,s} y_s -\mathbb{E}[\cont_{j,s} y_s])\right) \cont_{j,t} \notag 
\\ &= O_p(T^{-1/2}) \sum_{j=1}^\KK  \frac{1}{\sqrt{T}}\sum_{t=1}^{\fll{Tr}}Z_{\cont,t} \cont_{j,t} = O_p(T^{-1/2}), \label{eqaddition01}
			\end{align}
			where the last equality follows from finiteness of $\KK$ and the facts that $T^{-1/2}\sum_{t=1}^T (\cont_{j,t} y_t -\mathbb{E}[\cont_{j,t} y_t])= O_p(1)$, $T^{-1/2}\sum_{t=1}^{\fll{Tr}} Z_{\cont,t} \cont_{j,t} = O_p(1)$ under the $L^2$-$m$-approximability of $\{\cont_{j,t} y_t - \mathbb{E}[\cont_{j,t} y_t]\}$ and $\{Z_{\cont,t} \cont_{j,t}\}$ under Assumption \ref{assum1add2} (see Theorem 1.1 in the paper of \citealp{berkes2013weak}); the claimed $L^2$-$m$-approximability of $\{Z_{\cont,t} \cont_{j,t}\}$ and $\{\cont_{j,t} y_t - \mathbb{E}[\cont_{j,t} y_t]\}$ can be established from nearly identical arguments used in our proof of Lemma \ref{lem1}\ref{lem1a}, and hence the details are omitted. Note also that 
			\begin{align}
				\frac{1}{\sqrt{T}}\sum_{t=1}^{\fll{Tr}}	(\hat{Z}_t - \tilde{Z}_t) y_{\cont,t} &= \frac{1}{\sqrt{T}}\sum_{t=1}^{\fll{Tr}} \left(\sum_{j=1}^\KK  \cont_{j,t} \sum_{\ell=1}^\infty (\hat{\beta}_{Z,j,\ell}-{\beta}_{Z,j,\ell}) v_\ell \right)   y_{\cont,t} \\
				& = O_p(T^{-1/2}) \frac{1}{\sqrt{T}}\sum_{t=1}^{\fll{Tr}}  \sum_{j=1}^\KK \cont_{j,t} y_{\cont,t} =  O_p(T^{-1/2}),
			\end{align}
			where we note that
\begin{align}
\sum_{\ell=1}^\infty(\hat{\beta}_{Z,j,\ell}-\beta_{Z,j,\ell})v_\ell&=\frac{1}{T}\sum_{t=1}^T\sum_{\ell=1}^\infty
\left(\cont_{j,t}\langle Z_t,v_\ell\rangle-\mathbb E[\cont_{j,t}\langle Z_t,v_\ell\rangle]\right)v_\ell \notag \\&=
\frac{1}{T}\sum_{t=1}^T\left(\cont_{j,t}Z_t-\mathbb E[\cont_{j,t}Z_t]\right)=
O_p(T^{-1/2}), \label{eqaddition02}
\end{align}
			where the last equality follows from the $L^2$-$m$-approximability of $\{\cont_{j,t} Z_t - \mathbb{E}[\cont_{j,t}Z_t]\}$ as in \eqref{eqaddition01}. 
			From similar arguments used in \eqref{eqaddition01} and \eqref{eqaddition02}, we also find that $	\frac{1}{\sqrt{T}}\sum_{t=1}^{\fll{Tr}}(\tilde{Z}_t - \hat{Z}_t)  (\tilde{y}_t - \hat{y}_t) = O_p(T^{-1/2})$, and thus deduce that 
			\begin{align}
				\frac{1}{\sqrt{T}}\sum_{t=1}^{\fll{Tr}} ({Z}_t - \hat{Z}_t)  ({y}_t - \hat{y}_t) = \frac{1}{\sqrt{T}}\sum_{t=1}^{\fll{Tr}} Z_{\cont,t} y_{\cont,t} + O_p(T^{-1/2}).
			\end{align}
By the same argument, using the coefficient errors of \(\hat X_t\), we also have 
\begin{equation}\label{eqaddition03}
\frac{1}{\sqrt T}\sum_{t=1}^{\lfloor Tr\rfloor}(Z_t-\hat Z_t)\langle X_t-\hat X_t,\theta_0\rangle
=\frac{1}{\sqrt T}\sum_{t=1}^{\lfloor Tr\rfloor}Z_{\cont,t}\langle X_{\cont,t},\theta_0\rangle
+o_p(1),
\end{equation}
uniformly in \(r\in[0,1]\).

From the preceding approximations, we obtain, uniformly in $r\in[0,1]$,
\begin{align}\label{pfadd01add_new}
\sqrt{T}\SSS_{\cont}(r)&=\frac{1}{\sqrt{T}}\sum_{t=1}^{\lfloor Tr\rfloor}Z_{\cont,t}u_{\cont,t}+
\frac{1}{\sqrt{T}}\sum_{t=1}^{\lfloor Tr\rfloor}Z_{\cont,t}\langle X_{\cont,t},\psi\rangle+o_p(1).
\end{align}
Define $\eta_{\cont,t}(\psi) =Z_{\cont,t}\{u_{\cont,t}+\langle X_{\cont,t},\psi\rangle\}- C_{\cont,XZ}\psi.$
Then $\{\eta_{\cont,t}(\psi)\}$ is $L^2$-$m$-approximable. By \citet[Theorem~1.1]{berkes2013weak},
$$\sup_{0\le r\le1}\left\|\frac1{\sqrt T}\sum_{t=1}^{\lfloor Tr\rfloor}\eta_{\cont,t}(\psi)-\mathcal N_{\cont,\psi}(r)\right\|=o_p(1). $$
Therefore,
\begin{equation}\label{eqcont_general}
    \sup_{0\le r\le1}
    \left\|
    \sqrt T\{\SSS_{\cont}(r)-rC_{\cont,XZ}\psi\}
    -
    \mathcal N_{\cont,\psi}(r)
    \right\|
    =
    o_p(1).
\end{equation}
Under $H_0$, $\psi=0$, and the covariance operator of $\mathcal N_{\cont,0}$ is $\Lambda_{\cont,Zu}$. Hence
$$\|\sqrt T g_w(\SSS_\cont)\|^2 \to_d \sum_{j=1}^{\infty}\lambda_{\cont,j}\nu_j^2.$$ 
Under $H_{\magni}$ with $\magni>0$ and $C_{\cont,XZ}\psi\neq 0$ (i.e.,$\gamma_{\cont,\rr}\neq0$), $C_{\cont,XZ}\psi=\magni\gamma_{\cont,\rr}C_{\cont,XZ}\psi_{\cont,\rr} \neq0$. Thus, from \eqref{eqcont_general}, we deduce that 
$$\|\sqrt T g_w(\SSS_\cont)\|^2\to_p\infty.$$

If Assumption~\ref{assum2add2} holds, the same argument as in the proof of Theorem~\ref{thm4}, applied to the residualized sequence $\{Z_{\cont,t}u_{\cont,t}\}$, gives $\widehat\Lambda_{\cont,Zu}=\Lambda_{\cont,Zu}+O_p(T^{-1/2}h^{1/2})$. The replacement of \(Z_{\cont,t}u_{\cont,t}\) by 
\((Z_t-\hat Z_t)\hat u_{0,t}\) is asymptotically negligible by the projection-error 
bounds established above. Therefore, if $d_T\to\infty$ and $d_T=o(\sqrt{T/h})$, the feasible critical value $\tilde q_\alpha$ consistently approximates $q_\alpha$ under $H_0$. Under fixed detectable alternatives, the statistic diverges at rate $T$, while $\tilde q_\alpha$ is of smaller order, so the consistency conclusion is unchanged. This completes the proof.
		\end{proof}

\begin{proof}[Proof of Theorem~\ref{thmext1}]
Let $\psi_j=\theta_j-\theta_{0,j}$ for $j=1,\ldots,\KK$. From \eqref{eqflmadd} and 
\eqref{eqstatmulti}, we have
\begin{align*}  \sqrt T\SSSS(r)=
    \frac{1}{\sqrt T}\sum_{t=1}^{\lfloor Tr\rfloor}Z_t\left(u_t+\sum_{j=1}^\KK\langle X_{j,t},\psi_j\rangle\right) =\frac{1}{\sqrt T}\sum_{t=1}^{\lfloor Tr\rfloor}    \eta_t(\psi_1,\ldots,\psi_\KK) +\frac{\lfloor Tr\rfloor}{\sqrt T}
    \sum_{j=1}^\KK \CC_j\psi_j ,
\end{align*}
where $\eta_t(\psi_1,\ldots,\psi_\KK)=Z_t(u_t+\sum_{j=1}^\KK\langle X_{j,t},\psi_j\rangle
    )-\sum_{j=1}^\KK \CC_j\psi_j$.
By Assumption~\ref{assum1b} and the same argument as in Lemma~\ref{lem1}, 
$\{\eta_t(\psi_1,\ldots,\psi_\KK)\}$ is $L^2$-$m$-approximable. Hence, by 
\citet[Theorem~1.1]{berkes2013weak}, $\sup_{0\le r\le1}\left\|\frac{1}{\sqrt T}\sum_{t=1}^{\lfloor Tr\rfloor}\eta_t(\psi_1,\ldots,\psi_\KK)-\mathcal N_{\psi}(r)\right\|\to_p 0$. Moreover, $\sup_{0\le r\le1}\left|\frac{\lfloor Tr\rfloor}{T}-r\right|\left\|\sum_{j=1}^\KK\CC_j\psi_j\right\|\to0.$
Combining these results, we find that $\sup_{0\le r\le1}\left\|\sqrt T\left\{\SSSS(r)-r\sum_{j=1}^\KK\CC_j\psi_j\right\}-\mathcal N_{\psi}(r)\right\|\to_p0$,
which proves \eqref{eqmulti_general}. Two immediate consequences follow: 
under $H_0$, $\psi_j=0$ for all $j$, and the limiting Brownian motion has long-run 
covariance operator $\Lambda_{Zu}$; if 
$\mathcal M=\sum_{j=1}^\KK\CC_j\psi_j\neq0$, then dividing by $\sqrt T$ we find that $\sup_{0\le r\le1}\|\SSSS(r)-r\mathcal M\|\to_p0.$
\end{proof}

		\begin{proof}[Proof of Theorem \ref{thm2addadd}]
			The desired results follow from nearly identical arguments used in our proof of Theorem \ref{thm4}, and is hence omitted.
		\end{proof}

\bibliography{swkrefs_TIFR}

@Book{Bosq2000,
  author    = {Bosq, Denis},
  publisher = {Springer-Verlag New York},
  title     = {Linear Processes in Function Spaces},
  year      = {2000},
  isbn      = {0387950524},
  date      = {2000-07-28},
  ean       = {9780387950525},
  pagetotal = {304},
}

@Book{Conway1994,
  author    = {Conway, J. B.},
  publisher = {Springer},
  title     = {A Course in Functional Analysis},
  year      = {1994},
  isbn      = {0387972455},
  date      = {1994-01-25},
  ean       = {9780387972459},
  pagetotal = {424},
}

@Article{petersen2016,
  author    = {Petersen, Alexander and M\"{u}ller, Hans-Georg},
  journal   = {Annals of Statistics},
  title     = {Functional data analysis for density functions by transformation to a {H}ilbert space},
  year      = {2016},
  month     = {02},
  number    = {1},
  pages     = {183--218},
  volume    = {44},
  doi       = {10.1214/15-AOS1363},
  fjournal  = {The Annals of Statistics},
  publisher = {The Institute of Mathematical Statistics},
}

@Article{Egozcue2006,
  author   = {Egozcue, J. J. and D{\'i}az--Barrero, J. L. and Pawlowsky--Glahn, V.},
  journal  = {Acta Mathematica Sinica},
  title    = {{H}ilbert space of probability density functions based on {A}itchison Geometry},
  year     = {2006},
  issn     = {1439-7617},
  month    = {Jul},
  number   = {4},
  pages    = {1175--1182},
  volume   = {22},
  day      = {01},
  doi      = {10.1007/s10114-005-0678-2},
}

@Book{HK2012,
  author    = {Horv\'{a}th, Lajos and Kokoszka, Piotr},
  publisher = {Springer-Verlag GmbH},
  title     = {Inference for Functional Data with Applications},
  year      = {2012},
  isbn      = {1461436540},
  date      = {2012-05-11},
  ean       = {9781461436546},
}

@Article{Andrews1991,
  author    = {Donald W. K. Andrews},
  journal   = {Econometrica},
  title     = {Heteroskedasticity and Autocorrelation Consistent Covariance Matrix Estimation},
  year      = {1991},
  issn      = {00129682, 14680262},
  number    = {3},
  pages     = {817-858},
  volume    = {59},
  publisher = {[Wiley, Econometric Society]},
}

@Article{horvath2014test,
  author    = {Horv{\'a}th, Lajos and Kokoszka, Piotr and Rice, Gregory},
  title     = {Testing stationarity of functional time series},
  journal   = {Journal of Econometrics},
  year      = {2014},
  volume    = {179},
  number    = {1},
  pages     = {66--82},
  publisher = {Elsevier},
}

@Article{hormann2010,
  author    = {H{\"o}rmann, Siegfried and Kokoszka, Piotr},
  title     = {Weakly dependent functional data},
  journal   = {The Annals of Statistics},
  year      = {2010},
  volume    = {38},
  number    = {3},
  pages     = {1845--1884},
  publisher = {Institute of Mathematical Statistics},
}

@Article{horvath2013estimation,
  author    = {Horv{\'a}th, Lajos and Kokoszka, Piotr and Reeder, Ron},
  title     = {Estimation of the mean of functional time series and a two-sample problem},
  journal   = {Journal of the Royal Statistical Society: Series B (Statistical Methodology)},
  year      = {2013},
  volume    = {75},
  number    = {1},
  pages     = {103--122},
  publisher = {Wiley Online Library},
}

@Article{berkes2013weak,
  author    = {Berkes, Istv{\'a}n and Horv{\'a}th, Lajos and Rice, Gregory},
  journal   = {Stochastic Processes and their Applications},
  title     = {Weak invariance principles for sums of dependent random functions},
  year      = {2013},
  number    = {2},
  pages     = {385--403},
  volume    = {123},
  publisher = {Elsevier},
}

@Article{BERKES2016150,
  author   = {István Berkes and Lajos Horváth and Gregory Rice},
  journal  = {Journal of Multivariate Analysis},
  title    = {On the asymptotic normality of kernel estimators of the long run covariance of functional time series},
  year     = {2016},
  issn     = {0047-259X},
  number   = {1},
  pages    = {150-175},
  volume   = {144},
  doi      = {https://doi.org/10.1016/j.jmva.2015.11.005},
}

@Article{Mas2007,
  author   = {Andr\'e Mas},
  journal  = {Journal of Multivariate Analysis},
  title    = {Weak convergence in the functional autoregressive model},
  year     = {2007},
  issn     = {0047-259X},
  number   = {6},
  pages    = {1231-1261},
  volume   = {98},
  doi      = {https://doi.org/10.1016/j.jmva.2006.05.010},
}

@Article{Yao2005,
  author    = {Fang Yao and Hans-Georg M\"{u}ller and Jane-Ling Wang},
  journal   = {The Annals of Statistics},
  title     = {Functional linear regression analysis for longitudinal data},
  year      = {2005},
  number    = {6},
  pages     = {2873--2903},
  volume    = {33},
  doi       = {10.1214/009053605000000660},
  publisher = {Institute of Mathematical Statistics},
}

@Article{rice2017plug,
  author    = {Rice, Gregory and Shang, Han Lin},
  journal   = {Journal of Time Series Analysis},
  title     = {A Plug-in Bandwidth Selection Procedure for Long-Run Covariance Estimation with Stationary Functional Time Series},
  year      = {2017},
  number    = {4},
  pages     = {591--609},
  volume    = {38},
  publisher = {Wiley Online Library},
}

@Article{Hall2007,
  author    = {Peter Hall and Joel L. Horowitz},
  journal   = {Annals of Statistics},
  title     = {{Methodology and convergence rates for functional linear regression}},
  year      = {2007},
  number    = {1},
  pages     = {70 -- 91},
  volume    = {35},
  doi       = {10.1214/009053606000000957},
  publisher = {Institute of Mathematical Statistics},
}

@Article{imaizumi2018,
  author    = {Imaizumi, Masaaki and Kato, Kengo},
  journal   = {Journal of Multivariate Analysis},
  title     = {{PCA}-based estimation for functional linear regression with functional responses},
  year      = {2018},
  pages     = {15--36},
  volume    = {163},
  publisher = {Elsevier},
}

@Article{Benatia2017,
  author  = {David Benatia and Marine Carrasco and Jean-Pierre Florens},
  journal = {Journal of Econometrics},
  title   = {Functional linear regression with functional response},
  year    = {2017},
  number  = {2},
  pages   = {269--291},
  volume  = {201},
  doi     = {10.1016/j.jeconom.2017.08.008},
}

@Article{Chen_et_al_2020,
  author  = {Cheng Chen and Shaojun Guo and Xinghao Qiao},
  journal = {Journal of Business \& Economic Statistics},
  title   = {Functional Linear Regression: Dependence and Error Contamination},
  year    = {2022},
  number  = {1},
  pages   = {444-457},
  volume  = {40},
  doi     = {10.1080/07350015.2020.1832503},
}

@Article{Florence2015,
  author   = {Jean--Pierre Florens and S{\'e}bastien {Van Bellegem}},
  journal  = {Journal of Econometrics},
  title    = {Instrumental variable estimation in functional linear models},
  year     = {2015},
  issn     = {0304-4076},
  number   = {2},
  pages    = {465-476},
  volume   = {186},
  doi      = {https://doi.org/10.1016/j.jeconom.2015.02.020},
}

@Book{Ramsay2005,
  author    = {J. O. Ramsay and B. W. Silverman},
  publisher = {Springer, New York},
  title     = {Functional Data Analysis},
  year      = {2005},
  doi       = {10.1007/b98888},
}

@Article{cardot2003testing,
  author    = {Cardot, Herv{\'e} and Ferraty, Fr{\'e}d{\'e}ric and Mas, Andr{\'e} and Sarda, Pascal},
  journal   = {Scandinavian Journal of Statistics},
  title     = {Testing hypotheses in the functional linear model},
  year      = {2003},
  number    = {1},
  pages     = {241--255},
  volume    = {30},
  publisher = {Wiley Online Library},
}

@Article{dette2020testing,
  author    = {Dette, Holger and Kokot, Kevin and Volgushev, Stanislav},
  journal   = {Journal of the Royal Statistical Society Series B: Statistical Methodology},
  title     = {Testing relevant hypotheses in functional time series via self-normalization},
  year      = {2020},
  number    = {3},
  pages     = {629--660},
  volume    = {82},
  publisher = {Oxford University Press},
}

@Article{lin2021unified,
  author  = {Lin, Yinan and Lin, Zhenhua},
  journal = {arXiv preprint arXiv:2109.02309},
  title   = {A Unified Approach to Hypothesis Testing for Functional Linear Models},
  year    = {2021},
}

@Article{Hilgert2013,
  author    = {Nadine Hilgert and André Mas and Nicolas Verzelen},
  journal   = {The Annals of Statistics},
  title     = {MINIMAX ADAPTIVE TESTS FOR THE FUNCTIONAL LINEAR MODEL},
  year      = {2013},
  issn      = {00905364, 21688966},
  number    = {2},
  pages     = {838--869},
  volume    = {41},
  publisher = {Institute of Mathematical Statistics},
  urldate   = {2023-12-30},
}

@Article{yeon2023bootstrap2,
  author    = {Yeon, Hyemin and Dai, Xiongtao and Nordman, Daniel J},
  journal   = {Electronic Journal of Statistics},
  title     = {Bootstrap inference in functional linear regression models with scalar response under heteroscedasticity},
  year      = {2024},
  number    = {2},
  pages     = {3590--3627},
  volume    = {18},
  publisher = {The Institute of Mathematical Statistics and the Bernoulli Society},
}

@Article{yeon2023bootstrap,
  author    = {Yeon, Hyemin and Dai, Xiongtao and Nordman, Daniel J},
  journal   = {Bernoulli},
  title     = {Bootstrap inference in functional linear regression models with scalar response},
  year      = {2023},
  number    = {4},
  pages     = {2599--2626},
  volume    = {29},
  publisher = {Bernoulli Society for Mathematical Statistics and Probability},
}

@Article{Petrovich10062024,
  author    = {Justin Petrovich and Bahaeddine Taoufik and Zachary George Davis},
  journal   = {Journal of Applied Statistics},
  title     = {Instrumental variable estimation for functional concurrent regression models},
  year      = {2024},
  number    = {8},
  pages     = {1570--1589},
  volume    = {51},
  doi       = {10.1080/02664763.2023.2229968},
  publisher = {Taylor \& Francis},
}

@Article{cardot2004testing,
  author    = {Cardot, Herv{\'e} and Goia, Aldo and Sarda, Pascal},
  journal   = {Communications in Statistics-Simulation and Computation},
  title     = {Testing for no effect in functional linear regression models, some computational approaches},
  year      = {2004},
  number    = {1},
  pages     = {179--199},
  volume    = {33},
  publisher = {Taylor \& Francis},
}

@Article{su2017hypothesis,
  author    = {Su, Yu-Ru and Di, Chong-Zhi and Hsu, Li},
  journal   = {Biometrics},
  title     = {Hypothesis testing in functional linear models},
  year      = {2017},
  number    = {2},
  pages     = {551--561},
  volume    = {73},
  publisher = {Wiley Online Library},
}

@Article{yi2022f,
  author    = {Yi, Menghan and Li, Zaixing and Tang, Yanlin},
  journal   = {Stat},
  title     = {F-type testing in functional linear models},
  year      = {2022},
  number    = {1},
  pages     = {e420},
  volume    = {11},
  publisher = {Wiley Online Library},
}

@Article{HORVATH2016676,
  author   = {Lajos Horváth and Gregory Rice and Stephen Whipple},
  journal  = {Computational Statistics \& Data Analysis},
  title    = {Adaptive bandwidth selection in the long run covariance estimator of functional time series},
  year     = {2016},
  issn     = {0167-9473},
  pages    = {676-693},
  volume   = {100},
  doi      = {https://doi.org/10.1016/j.csda.2014.06.008},
}

@Article{namseo2025,
  author   = {Kyungsik Nam and Won-Ki Seo},
  journal  = {Energy Economics},
  title    = {Nonlinear temperature sensitivity of residential electricity demand: Evidence from a distributional regression approach},
  year     = {2026},
  issn     = {0140-9883},
  pages    = {109076},
  volume   = {153},
  doi      = {https://doi.org/10.1016/j.eneco.2025.109076},
}

@Article{shang2025constructing,
  author  = {Shang, Han Lin and Haberman, Steven},
  journal = {arXiv preprint arXiv:2506.17953},
  title   = {Constructing prediction intervals for the age distribution of deaths},
  year    = {2025},
}

@Article{chang2016new,
  author    = {Chang, Yoosoon and Kim, Chang Sik and Miller, J Isaac and Park, Joon Y and Park, Sungkeun},
  journal   = {Energy Economics},
  title     = {A new approach to modeling the effects of temperature fluctuations on monthly electricity demand},
  year      = {2016},
  pages     = {206--216},
  volume    = {60},
  publisher = {Elsevier},
}

@Article{babii2022high,
  author    = {Babii, Andrii},
  journal   = {Journal of Business \& Economic Statistics},
  title     = {High-dimensional mixed-frequency {IV} regression},
  year      = {2022},
  number    = {4},
  pages     = {1470--1483},
  volume    = {40},
  publisher = {Taylor \& Francis},
}

@Article{seoseong2025,
  author  = {Won-Ki Seo and Dakyung Seong},
  journal = {arXiv preprint arXiv:2503.08364},
  title   = {Functional Linear Projection and Impulse Response Analysis},
  year    = {2025},
}

@Article{seong2021functional,
  author  = {Seong, Dakyung and Seo, Won-Ki},
  journal = {Econometric Theory,},
  title   = {Functional instrumental variable regression with an application to estimating the impact of immigration on native wages},
  year    = {2025},
  number  = {6},
  pages   = {1248–1283},
  volume  = {41},
}

	\end{document}